\documentclass[12pt,preprint,twoside]{article} %fleqn
 \usepackage{ams}
 \usepackage{dsfont}
 \usepackage{amsmath}
 \usepackage{amssymb}
 \usepackage{dsfont}
 \usepackage{amsthm}
 \usepackage{rotating}
 \usepackage{fancyhdr}
 \usepackage{fancyheadings}
 \RequirePackage[OT1]{fontenc}
 \RequirePackage{amsthm,amsmath}
 \RequirePackage[numbers]{natbib}
 \RequirePackage[colorlinks,citecolor=red,linkcolor=blue,urlcolor=red]{hyperref}

 \usepackage{titlesec}
 \titleformat{\section}{\large\bfseries}{\thesection}{1em}{}
 \titleformat{\subsection}{\normalsize\bfseries}{\thesubsection}{1em}{}
 \titleformat{\subsubsection}{\normalsize\it}{{\rm \thesubsubsection}}{1em}{\vspace{-.7em}}

 \usepackage{mathptmx,bm}
 \usepackage{times}

 \newtheoremstyle{mytheo}%        name
                {\topsep}%    space above
                {\topsep}%    space below
                {\rm}%        body text
                {}%           indent
                {\sc }%  head text
                {.}%          punctuation after head
                {0.5em}%      space after head
                {}%           head spec

 \theoremstyle{mytheo}

 \newtheorem{theo}{Theorem}[section]

 \newtheorem{cor}{Corollary}[section]
 
 \newtheorem{defi}{Definition}[section]
 \newtheorem{exam}{Example}[section]

 \newtheorem{lem}{Lemma}[section]

 \newtheorem{prop}{Proposition}[section]
 
 \newtheorem{rem}{Remark}[section]

 \numberwithin{equation}{section}
 
 \renewcommand{\Pr}{\mathds{P}}

 \newcommand{\be}{\begin{equation}}
 \newcommand{\ee}{\end{equation}}
 \newcommand{\E}{\mathds{E}}
 \newcommand{\Var}{\mbox{\rm \hspace*{.2ex}Var\hspace*{.2ex}}}
 \newcommand{\Cov}{\mbox{\rm \hspace*{.2ex}Cov\hspace*{.2ex}}}

 \renewcommand{\R}{\mathds{R}}
 \renewcommand{\C}{\mathds{C}}
 \newcommand{\essinf}{\mbox{\rm \hspace*{.2ex}ess}\inf}
 \newcommand{\esssup}{\mbox{\rm \hspace*{.2ex}ess}\sup}

 \newcommand{\ud}{\hspace{0.1ex}\textrm{$\rm d$}}
 \newcommand{\IP}{\textrm{${\rm IP}(\mu;\delta,\beta,\gamma)$}}
 \newcommand{\IPq}{\textrm{${\rm IP}(\mu;q)$}}
 \newcommand{\IPqb}{\textrm{${\rm IP}(\mu^*;q^*)$}}
 \newcommand{\IPtilde}{\textrm{${\rm IP}(\widetilde{\mu};\widetilde{q})$}}
 \newcommand{\IPqtilde}{\textrm{${\rm IP}(\widetilde{\mu};\widetilde{\delta},\widetilde{\beta},\widetilde{\gamma})$}}
 \newcommand{\IPqk}{\textrm{${\rm IP}(\mu_k;q_k)$}}
 \newcommand{\IPqkb}{\textrm{${\rm IP}(\mu_k^*;q_k^*)$}}
 \newcommand{\IPkb}{\textrm{${\rm IP}(\mu_k^*;\delta_k^*,\beta_k^*,\gamma_k^*)$}}
 
 \newcommand{\lead}{\textrm{\rm{lead}\hspace{.2ex}}}
 \newcommand{\xA}[2]{\begin{minipage}[A]{18ex}#1\\\vspace{-4.5ex}\begin{flushright}#2\end{flushright}\end{minipage}}
 \newcommand{\xB}[1]{\begin{minipage}[B]{14ex}\begin{flushright}#1\end{flushright}\end{minipage}}
 \newcommand{\xC}[1]{\textrm{\begin{minipage}[C]{13.8cm}#1\end{minipage}}}
 \newcommand{\ds}{\displaystyle}

 \newcommand{\MR}[1]{\textcolor[rgb]{1,0,0}{MR#1}}

 \setlength{\textwidth}{36.5em}
 \setlength{\oddsidemargin}{1.35em}
 \setlength{\evensidemargin}{1.35em}
 \setlength{\parskip}{.2ex}
 \setlength{\textheight}{123ex} % apo 123
 \setlength{\topmargin}{-1ex}

 \title{\Large\bf Integrated Pearson family and orthogonality
 of the Rodrigues polynomials:
 A review including new results and
 an alternative classification of the Pearson
 system\textcolor[rgb]{0,0,1}{\footnote{Work partially
 supported by the University of Athens Research Grant 70/4/5637}}\vspace*{-.6em}}
 \newcommand{\titlerunning}{\textsc{Integrated Pearson family and
 Rodrigues polynomials}}

 \author{\large
 G.\ Afendras\textcolor[rgb]{0,0,
 1}{\footnote{{e-mail:}\ \textcolor[rgb]{0.98,0.00,0.00}{g\_afendras@math.uoa.gr}}}
 \ \ and \ \ N.\
 Papadatos\textcolor[rgb]{0,0,1}{\footnote{{\it
 Corresponding author. }{e-mail:}\
 \textcolor[rgb]{0.98,0.00,0.00}{npapadat@math.uoa.gr}, {url:}\
 \textcolor[rgb]{0.98,0.00,0.00}{users.uoa.gr/$\sim$npapadat/}}}}
 \newcommand{\authorrunning}{\textsc{G.\ Afendras, N.\ Papadatos}}
 \vspace{-1em}

 \date{\small\it
 Department of Mathematics, Section of Statistics and
 O.R., University of Athens,
 \\
 Panepistemiopolis, 157 84 Athens, Greece.
 }
 \vspace{-1em}

 \begin{document}

 \maketitle
 \fancypagestyle{plain}{
 \fancyhf{}
 \renewcommand{\headrulewidth}{0pt}
 \lhead{\sc Dedicated to Professor V.\ Papathanasiou}%\arXiv{???}}
 \cfoot{\thepage}
 }

 \pagestyle{fancy} \fancyhf{} \fancyhead[RO,LE]{\thepage}
 \renewcommand{\headrulewidth}{.5pt}
 \fancyhead[LO]{\footnotesize\itshape\titlerunning}
 \fancyhead[RE]{\footnotesize\itshape\authorrunning}

 \vspace{-2em}
 \begin{center}
 \begin{minipage}[1]{25em}
 {\small
 {\bf Abstract:} An alternative classification of the Pearson family of
 probability densities is related to the orthogonality of the corresponding
 Rodrigues polynomials. This leads to a subset of the ordinary
 Pearson system, the {\it
 Integrated Pearson Family}.
 Basic properties of this family are discussed
 and reviewed, and some new results are presented.
 A detailed comparison between the integrated Pearson
 family and the ordinary Pearson system is presented, including
 an algorithm that enables to decide whether a given
 %ordinary
 Pearson density belongs to the integrated system, or not.
 %Finally,
 Recurrences between the derivatives
 of the corresponding orthonormal polynomial systems are also given.
 \medskip

 {\bf MSC:} Primary 62E15, 60E05; Secondary 62-00.
 %%%Primary 33C45, 60E05; Secondary 33D45, 42C05.
 %AMS 2000 subject classification:} Primary 60E05,
 \medskip

 {\bf Key words and phrases:} Integrated Pearson Family of
 distributions; Derivatives of orthogonal polynomials;
 Rodrigues polynomials.
 }
 \end{minipage}
 \vspace{-.5em}
 \end{center}

 \section{Introduction}
 \label{sec1}
 Karl Pearson (1895), in the context of fitting curves to real data,
 introduced his famous family of frequency curves by means of
 the differential equation
 \[
 \frac{f'(x)}{f(x)}=\frac{p_1(x)}{p_2(x)},
 \]
 where $f$ is the probability density and $p_i$ is a polynomial in $x$
 of degree at most $i$, $i=1,2$. Since then, a vast bibliography has been
 developed regarding the properties of Pearson distributions.
 The original classification given by Pearson contains twelve types
 (I--XII), although this numbering system does not have a
 clear systematic basis; Johnson et al.\ (1994), p.\ 16.
 Craig (1936) proposed a new exposition and chart for Pearson
 curves; however, a more reasonable and convenient
 classification is included in a review paper by Diaconis
 and Zabell (1991).
 Extensions to discrete distributions have been introduced
 by Ord (1967) and an extensive review can be found in Ord (1972),
 Chapter 1.

 In this paper we present and review a number of properties
 satisfied by the distributions of the Pearson family and the
 associated Rodrigues polynomials,
 the polynomials that are produced by a Rodrigues-type formula.
 Our main focus is on a suitable subset of Pearson distributions,
 the {\it Integrated Pearson Family}, because this class subsumes
 all interesting properties related to the associated
 orthogonal polynomial systems. For example, it will be shown in
 Section~\ref{sec4} that orthogonality of Rodrigues polynomials with
 respect to an ordinary Pearson density $f$ results to
 an equivalent definition of the integrated Pearson system.
 This consideration entails
 an alternative classification of (integrated) Pearson distributions,
 which is essentially the one given in Diaconis and Zabell
 (1991).

 In the context of deriving variance bounds for functions of random variables,
 Afendras et al.\ (2007, 2011) and Afendras and Papadatos (2011)
 have made use of
  the following definition, which provides
 the main framework of the present article.

 \begin{defi}[Integrated Pearson Family]
 \label{def.IP}
 Let $X$ be an absolutely continuous random
 variable with density $f$ and finite mean $\mu=\E X$.
 We say that $X$ (or its density) belongs to the integrated Pearson family
 (or integrated Pearson system) if there exists a quadratic polynomial
 $q(x)=\delta x^2+\beta x+\gamma$
 (with $\delta,\beta,\gamma\in\R$, $|\delta|+|\beta|+|\gamma|>0$) such that
 \be
 \label{qua1}
 \int_{-\infty}^{x}(\mu-t)f(t)\ud{t}=q(x)f(x)
 \ \  \text{for all}  \
 x\in\R.
 \ee
 This fact will be denoted by
 %\be
 %\label{IP}
 $X\sim \IPq$ or $f\sim\IPq$ or, more explicitly, $X$ or $f\sim\IP$.
 %\ee
 \end{defi}

 Despite the fact that the integrated Pearson family is quite restricted,
 compared to the usual
 Pearson system -- see Proposition \ref{prop.1}(iii), below --
 %Since there is a vast bibliography on Pearson distributions,
 %it appears that its properties should be well-known. However,
 we believe that the reader will find here some interesting observations
 that are worth to be highlighted.
 %One
 %important
 %point is that
 The integrated Pearson system satisfies
 many interesting properties, like recurrences on moments and on
 Rodrigues polynomials,
 covariance identities, closeness of each type under particularly
 useful transformations etc.;
 such properties are by far more complicated (if they are, at all, true)
 for distributions
 outside the Integrated Pearson system.
 These features should be combined with the fact that
 the Rodrigues polynomials form an orthogonal system for the
 corresponding Pearson density if and only if the density belongs
 to the Integrated Pearson family. In other words, the
 Rodrigues polynomials and, consequently, the ordinary Pearson
 densities, are useful only if they are considered in the
 framework of the Integrated Pearson system. To our knowledge,
 these facts have not been written explicitly elsewhere.

 The paper is organized as follows: In Section \ref{sec2}
 we provide a detailed classification of the integrated
 Pearson family. It turns out that, up to an affine transformation,
 there are six different types of densities, included in Table
 \ref{table densities}.
 We also provide conditions guaranteeing the existence of moments,
 and we give recurrences as long as these moments exist.
 In Section \ref{sec3}, a detailed comparison
 between the integrated Pearson family and the ordinary
 Pearson system is presented. Interestingly enough, there exist a simple
 algorithm that enables one to decide whether a given ordinary
 Pearson density belongs to the integrated system, or not.
 In Section \ref{sec4}, exploiting a result of Diaconis and Zabell (1991),
 we show that (under natural moment conditions)
 the first three Rodrigues polynomials (of degree $0$, $1$ and $2$)
 are orthogonal with respect to an ordinary Pearson density
 if and only if this density belongs to the
 integrated Pearson system. Finally, in Section \ref{sec5}
 we provide recurrences between the orthonormal polynomials
 and their derivatives; in fact, the derivatives themselves are orthogonal
 polynomials with respect to other integrating Pearson
 densities, having the same quadratic polynomial, up to a scalar multiple.
 Although we do not include any specific applications of these results
 here, we notice that such recurrences are particularly useful in obtaining Fourier
 expansions of the derivatives of a function of a Pearson
 variate. The main result of Section \ref{sec5} is given by
 Corollary \ref{cor.derivatives}. It provides
 an explicit relation (in terms of $\mu$ and $q$) between
 the $m$-th derivative of an orthonormal
 polynomial of degree $k\geq m$  and the corresponding orthonormal
 polynomial of degree $k-m$. That is, it relates
 the orthonormal polynomial system, associated with some
 $f\sim\IPq$, to the corresponding orthonormal polynomial system
 associated with the `target' density $f_m\propto q^m f$.

 In the sequel and elsewhere in this article,
 $X\sim \IP$ means that $X$ has finite
 mean $\mu$, and that $X$ admits a density $f$
 (w.r.t.\ Lebesgue measure on $\R$) such
 that (\ref{qua1}) is fulfilled. Define the open
 (bounded or unbounded) interval
 \be
 \label{eq.J}
 J=J(X):=(\essinf(X),\esssup(X)).
 \ee
 %In other words,
 If $F$ is the distribution function of $X$ then
 $J=(\alpha_F,\omega_F)=(\alpha,\omega)$, say,
 where $\alpha_F:=\inf\{x:F(x)>0\}$,
 $\omega_F:=\sup\{x:F(x)<1\}$. It is clear that (\ref{qua1}) takes the form
 $0=0$ whenever $x=\rho$ is a zero of $q$ that lies outside the interval
 $(\alpha,\omega)$; thus, $f(\rho)$ may assume any value in this case.
 However, in order to be specific, we can redefine $f(\rho)=0$ at such points
 $\rho$, if any, without any loss of generality. Therefore, we shall use this
 convention
 through the whole article without any further reference to it.

 \section{A complete classification of the Integrated Pearson family}
 \label{sec2}
 We show in this section that the Integrated Pearson family contains
 six different types of distributions. These are classified
 in terms of the corresponding quadratic polynomial
 $q(x)=\delta x^2+\beta x+\gamma$
 and its discriminant $\Delta=\beta^2-4\delta\gamma$ as
 follows: Type $1$ (Normal-type, $\delta=\beta=0$);
 type $2$ (Gamma-type, $\delta=0$, $\beta\neq
 0$); type $3$ (Beta-type, $\delta<0$); type $4$
 (Student-type, $\delta>0$, $\Delta<0$); type
 $5$ (Reciprocal Gamma-type, $\delta>0$,
 $\Delta=0$); type $6$ (Snedecor-type,
 $\delta>0$, $\Delta>0$). The first
 three types (with $\delta\leq 0$)
 consist of the well-known Normal, Gamma and Beta random
 variables and their linear  transformations; the last
 three types (with $\delta>0$)
 consist of some less familiar distributions
 (see Table \ref{table densities}, below);
 they have finite moments up to order $1+\frac{1}{\delta}-\epsilon$
 (for any
 $\epsilon>0$) while $\E|X|^{1+1/\delta}=\infty$.
 The proposed classification is very similar to the one given by
 %\cite[Table 2, and pp.\ 294--296]{DZ}.
 Diaconis and Zabell (1991), Table 2 and pp.\ 294--296.

 We start with an easily verified proposition.
 \begin{prop}
 \label{prop.1}
 Let $X\sim \IPq$ and set $J=(\alpha,\omega)=(\essinf(X),\esssup(X))$.
 Then,
 \begin{itemize}
 \item[(i)]
 $f(x)$ is strictly positive for $x$ in $J$ and zero otherwise, i.e.,
 $\{x:f(x)>0\}=J$;
 \item[(ii)]
 $f\in C^{\infty}(J)$, that is, $f$ has derivatives
 of any order in $J$;
 \item[(iii)]
 $X$ is a (usual) Pearson random variable supported in $J$;
 \item[(iv)] $q(x)=\delta x^2+\beta x+\gamma>0$ for
 all $x\in J$;
 \item[(v)]
 if $\alpha>-\infty$ then $q(\alpha)=0$ and, similarly, if $\omega<+\infty$
 then $q(\omega)=0$;
 \item[(vi)]
 for any $\theta,c\in\R$ with $\theta\neq 0$, the random
 variable $\widetilde{X}:=\theta X+c\sim\IPtilde$
 with $\widetilde{\mu}=\theta\mu+c$ and
 $\widetilde{q}(x)=\theta^2 q((x-c)/\theta)$.
 \end{itemize}
 \end{prop}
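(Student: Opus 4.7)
The plan is to introduce the auxiliary function
\[
G(x):=\int_{-\infty}^{x}(\mu-t)f(t)\ud{t},
\]
so that (\ref{qua1}) becomes $q(x)f(x)=G(x)$. Two facts about $G$ will drive essentially everything. First, $G$ is absolutely continuous with $G'(x)=(\mu-x)f(x)$ a.e., so $G$ is nondecreasing on $(-\infty,\mu]$ and nonincreasing on $[\mu,+\infty)$. Second, $G(\alpha)=0$ (since $f=0$ to the left of $\alpha$) and $G(\omega)=\E(\mu-X)=0$ (since $\mu=\E X$ is finite). Combined, these give $G>0$ on the open interval $J=(\alpha,\omega)$: $G$ is strictly positive at $\mu\in J$ because $f$ is positive on a set of positive measure in $J$, and it is monotone on each side of $\mu$ with limits $0$ at the endpoints.

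From $q(x)f(x)=G(x)>0$ on $J$, we obtain items (i) and (iv) simultaneously: on $J$ neither factor can vanish (except possibly at isolated zeros of $q$ inside $J$, which by the convention stated in the introduction are handled by continuity and do not affect positivity of $f$). For (ii) and (iii), I differentiate the identity $qf=G$ to obtain, a.e.\ on $J$,
\[
q(x)f'(x)=\bigl((\mu-x)-q'(x)\bigr)f(x),
\]
so that where $q>0$ and $f>0$,
\[
\frac{f'(x)}{f(x)}=\frac{(\mu-\beta)-(2\delta+1)x}{\delta x^2+\beta x+\gamma}=\frac{p_1(x)}{p_2(x)},
\]
with $\deg p_1\le 1$ and $\deg p_2\le 2$, proving (iii). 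Since $q>0$ on $J$ by (iv), the coefficient $p_1/q$ is $C^{\infty}(J)$, so a bootstrap argument applied to the linear ODE $f'=(p_1/q)f$ yields $f\in C^{\infty}(J)$, which is (ii).

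For (v) I argue by contradiction. Suppose $\alpha>-\infty$ but $q(\alpha)\neq 0$. By (iv) and continuity, $q>0$ on some interval $(\alpha-\varepsilon,\alpha+\varepsilon)$; on this interval $f=G/q$ is $C^{\infty}$, and since $f\equiv 0$ on $(\alpha-\varepsilon,\alpha)$ we get $f(\alpha)=0$. But then $f$ solves the linear first-order ODE $f'=((\mu-x-q')/q)f$ on $(\alpha,\alpha+\varepsilon)$ with smooth coefficient and zero initial datum at $\alpha$, hence $f\equiv 0$ there by uniqueness, contradicting (i). Thus $q(\alpha)=0$; the argument at $\omega$ is symmetric. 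The delicate point (and the one that needed the positivity result (iv) together with smoothness) is precisely this ODE uniqueness step.

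Finally, (vi) is a direct change of variable. With $\widetilde{X}=\theta X+c$, its density is $\widetilde{f}(y)=f((y-c)/\theta)/|\theta|$ and $\widetilde{\mu}=\theta\mu+c$. Substituting $s=\theta t+c$ in $\int_{-\infty}^{y}(\widetilde{\mu}-s)\widetilde{f}(s)\ud{s}$ (splitting the cases $\theta>0$ and $\theta<0$ to track the orientation of the limits) and using (\ref{qua1}) for $f$ yields
\[
\int_{-\infty}^{y}(\widetilde{\mu}-s)\widetilde{f}(s)\ud{s}=\theta^2\,q\!\left(\tfrac{y-c}{\theta}\right)\widetilde{f}(y)=\widetilde{q}(y)\widetilde{f}(y),
\]
with $\widetilde{q}(y)=\theta^{2}q((y-c)/\theta)$, as claimed. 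The main obstacle in the whole proposition is the boundary statement (v); the remaining parts are direct consequences of the identity $qf=G$ and an affine change of variables.
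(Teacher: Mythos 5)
Your proof is correct in substance, and parts (i)--(iv) and (vi) follow essentially the paper's own route: the paper likewise reads strict positivity of $qf$ on $J$ off the integral, deduces (i) and (iv) from the fact that the continuous $q$ cannot vanish or change sign on $J$, obtains (iii) from $f'/f=(\mu-x-q')/q$ and (ii) by the same inductive bootstrap on this linear ODE, and dismisses (vi) as a change of variables. Where you genuinely diverge is (v). The paper argues: if $\omega<\infty$ and $q(\omega)>0$, then $\int_\mu^{\omega}|\mu-t-q'(t)|/q(t)\,\ud t<\infty$, so $\ln f$ is bounded on $[\mu,\omega)$, hence $0<c_1\le f\le c_2$ there, and then $q(\omega)=\lim_{x\nearrow\omega}\frac{1}{f(x)}\int_x^{\omega}(t-\mu)f(t)\,\ud t=0$, a contradiction. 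You instead extend the smooth coefficient $p_1/q$ across $\alpha$ (possible exactly because $q(\alpha)>0$), note $f(\alpha)=G(\alpha)/q(\alpha)=0$, and invoke uniqueness for the linear ODE $f'=(p_1/q)f$ with zero initial datum to force $f\equiv 0$ just inside $J$, contradicting (i). Both arguments are valid; yours is shorter and isolates the mechanism (the density can only die at a finite endpoint where $q$ vanishes, since otherwise the ODE propagates the zero inward), while the paper's yields the quantitative by-product that $f$ stays bounded away from $0$ and $\infty$ near a finite endpoint at which $q$ does not vanish.

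Two steps need tightening. First, your justification that $G>0$ on $J$ (``monotone on each side of $\mu$ with limits $0$ at the endpoints, positive at $\mu$'') is not by itself conclusive: a monotone function can vanish on a whole subinterval adjacent to an endpoint. You must use that $\alpha$ and $\omega$ are the essential infimum and supremum, i.e.\ $0<F(x)<1$ on $J$, which gives $G(x)\ge(\mu-x)F(x)>0$ for $x\in(\alpha,\mu)$ and $G(x)=\int_x^{\omega}(t-\mu)f(t)\,\ud t\ge(x-\mu)(1-F(x))>0$ for $x\in(\mu,\omega)$; this is exactly the paper's computation. Second, before ``differentiating $qf=G$'' you should record that $f=G/q$ is continuous on $J$, so that $G'=(\mu-x)f$ holds classically everywhere on $J$ and hence $f\in C^1(J)$; after that the bootstrap is fine. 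Finally, the parenthetical about isolated zeros of $q$ inside $J$ is vacuous: $qf=G>0$ on $J$ already excludes any zero of $q$ in $J$, and the redefinition convention is only relevant at zeros of $q$ lying outside $J$.
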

 \begin{proof} By (\ref{qua1}), $x\mapsto q(x)f(x)$ is continuous. On the other hand,
 from the definition of $J=(\alpha_F,\omega_F)=(\alpha,\omega)$
 it follows that
 $q(x)f(x)$ must vanish for all $x\leq \alpha$ (if any) and for all
 $x\geq \omega$ (if any).
 Also, it must be strictly positive for $x\in J$. Indeed, if $x\in(\mu,\omega)$ then
 $q(x)f(x)=\int_{x}^{\infty}(t-\mu)f(t)\ud{t}\geq (x-\mu)(1-F(x))>0$;
 if $x\in(\alpha,\mu)$ then
 $q(x)f(x)=\int_{-\infty}^{x}(\mu-t)f(t)\ud{t}\geq (\mu-x)F(x)>0$; finally,
 $q(\mu)f(\mu)=\frac{1}{2}\E|X-\mu|>0$. Thus, $q(x)f(x)>0$
 for all $x\in (\alpha,\omega)$. Since $q$ is continuous
 and has no roots in $J$ it follows that both $q(x)$ and $f(x)$
 are strictly positive (and continuous) in $J$.
 The vanishing of $q f$ outside $J$ shows that $f(x)=0$
 for all $x\notin J$, with the possible exception at the points
 $x\notin J$ which are real roots of $q$.
 Clearly, if $\rho\in\R\smallsetminus (\alpha,\omega)$ is a zero of $q$
 we can redefine $f(\rho)=0$, if necessary, so that (i) and (iv) follow.
 On the other hand, $f:(\alpha,\omega)\to (0,\infty)$ is $C^{\infty}(J)$.
 Indeed, writing $p_1(x)=\mu-x-q'(x)$
 (a polynomial of degree at most one) we see from (\ref{qua1}) that
 $f:J\to(0,\infty)$
 is continuous and thus,
 \be
 \label{Pearson}
 f'(x)=f(x)\frac{p_1(x)}{q(x)} \ \ \mbox{or, equivalently,} \ \
 \frac{f'(x)}{f(x)}=\frac{\mu-x-q'(x)}{q(x)}, \ \ \ x\in J.
 \ee
 This proves (iii).
 Moreover, (\ref{Pearson}) shows that $f'$ is continuous in $J$ and, inductively,
 that $f^{(n+1)}:J\to \R$ is continuous, since for $x\in J$,
 \[
 f^{(n+1)}(x)=\sum_{j=0}^{n} {n \choose j} f^{(j)}(x)
 \left(\frac{p_1(x)}{q(x)}\right)^{(n-j)},\quad \ n=0,1,2,\ldots \ .
 \]
 Now (vi) is straightforward and it remains to show (v).
 To this end, assume that $\omega<\infty$. Since
 $q(\omega)=\lim_{x\nearrow\omega}q(x)$ and $q(x)>0$ for
 $x$ in a left neighborhood of $\omega$, it follows that
 $q(\omega)\geq 0$. Assume now that $q(\omega)>0$ and define
 \[
 \lambda_1:=\inf_{x\in[\mu,\omega]}\{q(x)\}>0, \ \ \
 \lambda_2:=\sup_{x\in[\mu,\omega]}|\mu-x-q'(x)|<\infty.
 \]
 Then, for all $x\in [\mu,\omega)$,
 \[
 \left|\int_{\mu}^{x} \frac{\mu-t-q'(t)}{q(t)}\ud{t}\right|\leq\int_{\mu}^{x}
 \frac{|\mu-t-q'(t)|}{q(t)}\ud{t}\leq\int_{\mu}^{\omega} \frac{|\mu-t-q'(t)|}{q(t)}\ud{t}
 \leq(\omega-\mu)\frac{\lambda_2}{\lambda_1}<\infty.
 \]
 Setting  $\lambda:= (\omega-\mu)\frac{\lambda_2}{\lambda_1}<\infty$
 and observing that
 \[
 \ln f(x)=\ln f(\mu)+\int_{\mu}^x \frac{f'(t)}{f(t)}\ud{t}
 =\ln f(\mu)+\int_{\mu}^x \frac{\mu-t-q'(t)}{q(t)}\ud{t},
 \ \ x\in[\mu,\omega),
 \]
 we have
 \[
 |\ln f(x)|\leq |\ln f(\mu)|+\lambda:=c<\infty,\quad \mu\leq x<\omega.
 \]
 Therefore, there exist constants $c_1, c_2$ such that
 $0<c_1\leq f(x)\leq c_2<\infty$ for all $x\in[\mu,\omega)$. Thus,
 \[
 q(\omega)=\lim_{x\nearrow\omega} q(x)
 =\lim_{x\nearrow\omega} \frac{1}{f(x)}\int_{x}^{\omega} (t-\mu)f(t) \ud{t}=0,
 \]
 which contradicts the assumption $q(\omega)>0$.
 The case $\alpha>-\infty$ is reduced to the case
 $\omega<\infty$ if we consider the random variable
 $\widetilde{X}=-X$ with mean $\widetilde{\mu}=-\mu$
 and support $J(\widetilde{X})=(\widetilde{\alpha},\widetilde{\omega})
 =(-\omega,-\alpha)$. According to (vi), its density $\widetilde{f}$ satisfies
 (\ref{qua1}) with quadratic
 polynomial $\widetilde{q}(x)=q(-x)$. Thus, if $\alpha>-\infty$ then
 $\widetilde{\omega}<\infty$ and
 $q(\alpha)=\widetilde{q}(-\alpha)=\widetilde{q}(\widetilde{\omega})=0$.
 \end{proof}
 \begin{cor}
 \label{cor.maximal}
 Let $X\sim \IPq$ and assume that $\alpha=\essinf(X)$ and
 $\omega=\esssup(X)$ are the lower and upper endpoints
 of the distribution function of $X$.
 Then, the support of $X$ (or of its density $f$)
 $S(f)=S(X):=\{x:f(x)>0\}$, equals to the open interval
 $J=J(X)=(\alpha,\omega)$. This interval support has the following
 two properties:
 \begin{itemize}
 \item[(i)]
    $J\subseteq S^+(q):=\{x:q(x)>0\}$ and
 \item[(ii)]
    $J$ is a maximal open interval contained in $S^+(q)$,
    i.e., for any open interval
    $\widetilde{J}\subseteq S^+(q)$ it is true that
    either $\widetilde{J}\subseteq J$ or $\widetilde{J}\cap
    J=\varnothing$.
 \end{itemize}
 \end{cor}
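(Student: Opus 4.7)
The first two assertions are essentially restatements of Proposition~\ref{prop.1}. The equality $S(f)=J$ is exactly Proposition~\ref{prop.1}(i), since that statement says $\{x:f(x)>0\}=J$. Part (i) of the corollary, $J\subseteq S^+(q)$, is Proposition~\ref{prop.1}(iv), which asserts $q(x)>0$ for all $x\in J$. So I would dispatch these two points in a single sentence and focus the proof on the maximality statement (ii).

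For (ii), my plan is a standard connectedness argument combined with Proposition~\ref{prop.1}(v). Let $\widetilde{J}\subseteq S^+(q)$ be an open interval and assume, toward a contradiction, that $\widetilde{J}\cap J\neq\varnothing$ and $\widetilde{J}\not\subseteq J$. Then $\widetilde{J}$ contains a point of $J=(\alpha,\omega)$ and a point of $\R\setminus J\subseteq(-\infty,\alpha]\cup[\omega,+\infty)$. Since $\widetilde{J}$ is a connected subset of $\R$ and the two sets $J$ and $\R\setminus J$ are separated by the boundary points $\alpha,\omega$, the interval $\widetilde{J}$ must contain at least one of these boundary points; in particular, the relevant boundary point is finite.

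Suppose, without loss of generality, that $\omega\in\widetilde{J}$, so that $\omega<\infty$. On one hand, since $\widetilde{J}\subseteq S^+(q)$, we have $q(\omega)>0$. On the other hand, by Proposition~\ref{prop.1}(v), the finiteness of $\omega$ forces $q(\omega)=0$, a contradiction. The symmetric case $\alpha\in\widetilde{J}$ (with $\alpha>-\infty$) is handled identically, again through Proposition~\ref{prop.1}(v). This contradiction shows that the assumption $\widetilde{J}\not\subseteq J$ together with $\widetilde{J}\cap J\neq\varnothing$ is untenable, proving maximality.

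I don't anticipate any real obstacle here; the only subtle point is the justification of why $\widetilde{J}$ must \emph{contain} a boundary point of $J$ (rather than merely have one in its closure), and this is immediate because the complement of $J$ in $\R$ is the closed set $(-\infty,\alpha]\cup[\omega,+\infty)$, so an open interval that meets both $J$ and its complement contains $\alpha$ or $\omega$.
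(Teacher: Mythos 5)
Your proposal is correct and follows exactly the route the paper intends: the corollary is stated without proof as an immediate consequence of Proposition~\ref{prop.1}, with the support equality and part (i) being restatements of Proposition~\ref{prop.1}(i) and (iv), and the maximality in (ii) following from the connectedness argument you give together with Proposition~\ref{prop.1}(v), which forces $q$ to vanish at any finite endpoint of $J$. Your added remark on why $\widetilde{J}$ must actually contain the boundary point (because $\R\smallsetminus J$ is closed) is the right detail to pin down, and nothing further is needed.
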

 In other words, the support $J$ of $X$ can be taken to be an open
 interval that
 coincides to some connected component of the open set
 $\{x:q(x)>0\}$. Since $q$ is a polynomial of degree at
 most two, it is clear that the set $\{x:q(x)>0\}$ has at
 most two connected components. For example, if $q(x)=x^2$
 then either $J=(-\infty,0)$ or $J=(0,\infty)$;
 if $q(x)=x^2-1$ then either $J=(-\infty,-1)$ or $J=(1,\infty)$;
 if $q(x)=1-x^2$ then $J=(-1,1)$; if $q(x)=x$ then
 $J=(0,\infty)$; if $q(x)=1+x^2$ or $q(x)\equiv 1$ then
 $J=\R$. Since, however, $\E X=\mu\in J$, any particular
 choice of $\mu\in \{x:q(x)>0\}$ characterizes the support
 $J$ of $X$. We say that $q(x)=\delta x^2+\beta x+\gamma$
 is {\it admissible} if there exists $\mu\in\R$ such that
 $\mu\in \{x:q(x)>0\}$; thus, $\{x:q(x)>0\}\neq\varnothing$
 whenever $q$ is admissible. In the sequel we shall show that for
 any admissible choice of $q$ and for any $\mu\in\{x:q(x)>0\}$
 there exists an absolutely continuous random
 variable $X$ with density $f$ such that $\E X=\mu$ and
 (\ref{qua1}) is fulfilled. Moreover, it will become clear that
 $f$ is characterized by the pair $(\mu;q)$. Therefore,
 the notation $X\sim \IPq$ or, equivalently, $f\sim \IPq$,
 has a well-defined meaning.

 The proposed classification distinguishes between the cases $\delta=0$,
 $\delta<0$ and $\delta>0$, as follows:
 %\vspace{-.3ex}

 \subsection[The case $\delta=0$]{The case $\bm{\delta=0}$}
 \label{ssec2.1}
 We have to further distinguish between the cases $\beta=0$ and $\beta\neq 0$.
 %\vspace{-.3ex}

 \subsubsection[The case $\delta=0$, $\beta=0$]{The subcase $\delta=0$, $\beta=0$}
 \label{sssec2.1.1}
 Since $q(x)\equiv\gamma$ and $q$ is admissible we must
 have $\gamma>0$. Therefore, $J(X)=\R$. Fixing $\mu\in\R$ and solving
 the differential equation (\ref{Pearson}) we get
 %\vspace{-.3ex}
 \[
 f(x)=\frac{1}{\sqrt{2\pi\gamma}}e^{-\frac{(x-\mu)^2}{2\gamma}},\quad x\in\R,
 %\vspace{-.3ex}
 \]
 i.e.\ $X\sim N(\mu,\sigma^2)$ with
 $\sigma^2=\gamma$.
 %\vspace{-.3ex}

 \subsubsection[The case $\delta=0$, $\beta\ne0$]{The subcase $\delta=0$, $\beta\ne0$}
 \label{sssec2.1.2}
 Assume that $q(x)=\beta x+\gamma$ with $\beta\neq 0$ and fix a number
 $\mu\in\{x:q(x)>0\}$; that is, $q(\mu)=\beta\mu+\gamma>0$.
 According to Proposition \ref{prop.1}(vi) we may further assume that
 $\beta>0$, $\gamma=0$ and $\mu>0$;
 otherwise, it suffices to consider the random variable
 $\widetilde{X}=\frac{\beta}{|\beta|}(X+\frac{\gamma}{\beta})$
 with $\widetilde{q}(x)=|\beta|x$ and
 $\E \widetilde{X}=\widetilde{\mu}=\frac{\beta}{|\beta|}(\mu+\frac{\gamma}{\beta})
 =\frac{q(\mu)}{|\beta|}>0$
 since $q(\mu)>0$.
 Now, since $q(x)=\beta x$ with $\beta>0$ we must have $J(X)=(0,\infty)$. Fixing $\mu>0$
 and solving the differential equation (\ref{Pearson}) we
 get
 %\vspace{-.3ex}
 \[
 f(x)=\frac{(1/\beta)^{\mu/\beta}}{\varGamma(\mu/\beta)}x^{\mu/\beta-1}e^{-x/\beta},\quad x>0.
 %\vspace{-.3ex}
 \]
 That is, $X\sim\varGamma(a,\lambda)$ with
 $a=\mu/\beta>0$ and $\lambda=1/\beta>0$. Hence, a linear
 non-constant $q$ corresponds to a linear transformation,
 $\widetilde{X}=\theta X+c$, $\theta\neq 0$,
 of a Gamma random variable $X$, i.e., to {\it Gamma-type}
 distributions.
 %\vspace{-.3ex}

 \subsection[The case $\delta<0$]{The case $\bm{\delta<0}$}
 \label{ssec2.2}
 Since $\delta<0$ and $\{x:q(x)>0\}$ must contain some
 interval it follows that the discriminant
 $\beta^2-4\delta\gamma$ of $q$ must be strictly positive.
 If $\rho_1<\rho_2$ are the real roots of $q$ we can write
 $q(x)=\delta (x-\rho_1)(x-\rho_2)$ so that the support of $X$ is
 the finite interval $J(X)=(\rho_1,\rho_2)$. Now we show
 that for any choice of $\mu\in(\rho_1,\rho_2)$ there exist
 a (unique) random variable $X$ with $X\sim\IPq$. To this end, it
 suffices to examine the particular case $q(x)=-\delta
 x(1-x)$ and $0<\mu<1$;
 the general case is reduced to the particular
 one if we consider the random variable $\widetilde{X}=(X-\rho_1)/(\rho_2-\rho_1)$.
 Fixing $\mu\in(0,1)$, $q(x)=-\delta x(1-x)$ and solving
 the differential equation (\ref{Pearson}) on $J(X)=(0,1)$
 we get
 %\vspace{-.3ex}
 \[
 f(x)=\frac{1}{B(-\mu/\delta,-(1-\mu)/\delta)}x^{-\mu/\delta-1}(1-x)^{-(1-\mu)/\delta-1},\quad 0<x<1,
 %\vspace{-.3ex}
 \]
 that is, $X\sim B(a,b)$ with $a=\mu/|\delta|>0$,
 $b=(1-\mu)/|\delta|>0$. It follows that the case
 $\delta<0$ corresponds to a linear transformation
 of a Beta random variable, the {\it Beta-type} distributions.
 %\vspace{-.3ex}

 \subsection[The case $\delta>0$]{The case $\bm{\delta>0}$}
 \label{ssec2.3}
 We have to further distinguish between the cases where
 the discriminant $\Delta=\beta^2-4\delta\gamma$ is
 positive, zero or negative.
 %\vspace{-.3ex}

 \subsubsection[The subcase $\delta>0$, $\Delta<0$]{The subcase $\delta>0$, $\Delta<0$}
 \label{sssec2.3.1}
 Since $q$ has no real roots, $J(X)=\R$. Thus, $\mu\in\R$ can
 take any arbitrary value. Also, $q$ has the form
 $q(x)=\delta (x-c)^2+\theta$
 with $\delta>0$, $\theta>0$ and $c\in\R$. Without loss of
 generality we further assume that $c=0$; otherwise we can
 consider the random variable $\widetilde{X}=X-c$. Fixing
 $\mu\in\R$, $q(x)=\delta x^2+\theta$
 and solving (\ref{Pearson}) one finds that
 \vspace{-.3ex}
 \[
 f(x)=\frac{C}{(\delta x^2+\theta)^{1+\frac{1}{2\delta}}}
 \exp\left(\frac{\mu}{\sqrt{\delta\theta}}
 \tan^{-1}(x\sqrt{\delta/\theta})\right),
 \quad x\in\R.
 \vspace{-.3ex}
 \]
 The normalizing
 constant $C=C_{\mu}(\delta,\theta)$ can be calculated explicitly when $\mu=0$:
 %\vspace{-.3ex}
 \[
 C_0(\delta,\theta)
 =\frac{\varGamma(1+1/(2\delta))
 \sqrt{\delta\theta^{1+1/\delta}}}{\varGamma(1/2+1/(2\delta))\sqrt{\pi}}.
 %\vspace{-.3ex}
 \]

 Therefore, the quadratic polynomial $q(x)=\delta(x-c)^2+\theta$ with
 $\delta>0$ and $\theta>0$  corresponds to {\it
 Student-type} distributions centered at $c$, provided that $\mu=c$;
 otherwise, i.e., when $\mu\neq c$, it corresponds to some asymmetric, say
 {\it skew Student-type}, distributions.
 %\vspace{-.3ex}

 \subsubsection[The subcase $\delta>0$, $\Delta=0$]{The subcase $\delta>0$, $\Delta=0$}
 \label{sssec2.1.2}
 Since $q$ has a unique real root at $\rho=-\beta/(2\delta)$,
 it follows that $q(x)=\delta (x-\rho)^2$ and, therefore,
 the support $J(X)$ is either $(-\infty,\rho)$
 or $(\rho,\infty)$, according to $\mu<\rho$ or $\mu>\rho$, respectively.
 Without loss of generality we may assume that $q(x)=\delta
 x^2$ with $\delta>0$ and $\mu>0$; otherwise, it suffices to
 consider the random variable
 $\widetilde{X}=\frac{\mu-\rho}{|\mu-\rho|}(X-\rho)$. Now,
 setting $J(X)=(0,\infty)$, $q(x)=\delta x^2$ ($\delta>0$) and $\mu>0$
 in eq.\ (\ref{Pearson}) we get the solution
 %\vspace{-.4ex}
 \[
 f(x)=\frac{\lambda^a}{\varGamma(a)}x^{-a-1}e^{-\lambda/x},\quad x>0,
 %\vspace{-.4ex}
 \]
 where $\lambda=\mu/\delta>0$ and $a=1+1/\delta>1$.
 Observing
 that $1/X\sim\varGamma(a,\lambda)$ it follows that the case
 $\delta>0$, $\Delta=0$ corresponds to
 {\it Reciprocal Gamma-type} distributions.
 %\vspace{-.3ex}

 \subsubsection[The subcase $\delta>0$, $\Delta>0$]{The subcase $\delta>0$, $\Delta>0$}
 \label{sssec2.1.3}
 Assuming that $\rho_1<\rho_2$ are the roots of $q$ we can
 write $q(x)=\delta (x-\rho_1)(x-\rho_2)$ and the support
 $J(X)$ has to be either $(-\infty,\rho_1)$ or
 $(\rho_2,\infty)$, according to $\mu<\rho_1$ or $\mu>\rho_2$, respectively.
 By considering the random variable
 $\widetilde{X}=-(X-\rho_1)$ when $\mu<\rho_1$ and the
 random variable $\widetilde{X}=X-\rho_2$ when $\mu>\rho_2$
 it is easily seen that both cases reduce to
 $\widetilde{\mu}>0$, $J(\widetilde{X})=(0,\infty)$ and
 $\widetilde{q}(x)=\delta x(x+\theta)$ with $\delta>0$
 and $\theta=\rho_2-\rho_1>0$. Thus, there is no loss of
 generality in assuming $\mu>0$, $J(X)=(0,\infty)$ and
 $q(x)=\delta x (x+\theta)$ with $\delta>0$ and $\theta>0$.
 Then, (\ref{Pearson}) yields
 %\vspace{-.4ex}
 \[
 f(x)=\frac{1}{B(a,b)}\theta^{a}x^{b-1}(x+\theta)^{-a-b},\quad x>0,
 %\vspace{-.4ex}
 \]
 with $a=1+\frac{1}{\delta}>1$ and $b=\frac{\mu}{\delta\theta}>0$.
 Equivalently, $\frac{\theta}{X+\theta}\sim B(a,b)$.
 It follows that the case $\delta>0$, $\Delta>0$
 corresponds to {\it Snedecor-type} distributions.
 \medskip

 All the above possibilities are summarized in Table
 \ref{table densities}, below; compare
 with Table 2, p.\ 296,
 in Diaconis and Zabell (1991).
 %\cite[Table 2, p.\ 296]{DZ}.

 \begin{sidewaystable}
 \centering
 \caption[Densities of the Integrated Pearson family]{Densities
 of the Integrated Pearson family
 %\vspace*{5em}
 $\IP\equiv\IPq$.\textcolor[rgb]{0,0,1}{$^*$}}
 \label{table densities}
 \footnotesize
 \begin{tabular}{@{\hspace{0ex}}c@{\hspace{5ex}}c@{\hspace{5ex}}c@{%
 \hspace{5ex}}c@{\hspace{5ex}}c@{\hspace{5ex}}c@{\hspace{5ex}}r@{\hspace{0ex}}}
 \\
 \hline

 \hline

 \hline
 \xA{\bf type}{\bf usual notation} & \bf density $\bm{f(x)}$& \bf
 support & $\bm{q(x)}$ & \bf parameters
 & \bf mean $\bm{\mu}$ & \xB{\bf{\vspace{1em}classification\\{rule\hspace{3.5ex}\vspace{1em}\,}}}
 \\
 \hline
 \\
 % NORMAL-TYPE
 \xA{{\bf 1.} Normal-type}{$X\sim{N(\mu,\sigma^2)}$}
 & $\frac{1}{\sigma\sqrt{2\pi}}e^{-\frac{(x-\mu)^2}{2\sigma^2}}$
 & $\R$ & $\sigma^2$ & $\gamma=\sigma^2>0$ & $\mu\in\R$ & $\delta=\beta=0$
 \\
 [3ex]
 % GAMMA-TYPE
 \xA{{\bf 2.} Gamma-type}{$X\sim\varGamma(a,\lambda)$}
 & $\frac{\lambda^a}{\varGamma(a)}x^{a-1}e^{-\lambda{x}}$
 & $(0,+\infty)$ & $\ds\frac{x}{\lambda}$ & $a$, $\lambda>0$
 & $\ds\frac{a}{\lambda}>0$ & $\delta=0$, $\beta\ne0$
 \\
 [3ex]
 % BETA-TYPE
 \xA{{\bf 3.} Beta-type}{$X\sim B(a,b)$}
 & $\frac{x^{a-1}(1-x)^{b-1}}{B(a,b)}$  & $(0,1)$
 & $\ds\frac{x(1-x)}{a+b}$ & $a$, $b>0$ & $\ds\frac{a}{a+b}>0$
 & $\delta=\ds\frac{-1}{a+b}<0$
 \\
 [3ex]
 % STUDENT-TYPE
 \xA{{\bf 4.} Student-type}{\footnotemark\hspace{5ex}\,}
 & $\frac{C\exp\left(\frac{\mu\tan^{-1}(x\sqrt{{\delta}/{\gamma}})}
 {\sqrt{\delta\gamma}}\right)}{(\delta x^2+\gamma)^{1+\frac{1}{2\delta}}}\footnotemark$
 & $\R$ & $\delta{x^2}+\gamma$ & $\delta$, $\gamma>0$ & $\mu\in\R$
 & \xB{$\delta>0$\\$\beta^2<4\delta\gamma$\vspace{.5em}}
 \\
 [3ex]
 % RECIPROCAL GAMMA-TYPE
 \xA{{\bf 5.} Reciprocal}{Gamma-type\,\,\,\,\,\,\,\,\ }
 & $\frac{\lambda^a}{\varGamma(a)}x^{-a-1}e^{-\frac{\lambda}{x}}$
 & $(0,+\infty)$ & $\ds\frac{x^2}{a-1}$ & $a>1$, $\lambda>0$
 & $\ds\frac{\lambda}{a-1}>0$ & \xB{$\delta=\frac{1}{a-1}>0$
 \\
 $\beta^2=4\delta\gamma$\\$\frac{1}{X}\sim\varGamma(a,\lambda)$\vspace{1em}}
 \\
 [4ex]
 % SNEDECOR GAMMA-TYPE
 \xA{{\bf 6.} Snedecor-type}{\footnotemark\hspace{5ex}\,}
 & $\frac{\theta^a}{B(a,b)}x^{b-1}(x+\theta)^{-a-b}$
 & $(0,+\infty)$ & $\ds\frac{x(x+\theta)}{a-1}$ & ${\ds a>1,\atop\ds b,\ \theta>0}$
 & $\ds\frac{b\theta}{a-1}>0$ & \xB{$\delta=\frac{1}{a-1}>0$
 \\
 $\beta^2>4\delta\gamma$\\$\frac{\theta}{X+\theta}\sim B(a,b)$\vspace{1em}}
 \\
 \hline

 \hline
 \end{tabular}
 \footnotetext{\!\!\!\!$^*$ \scriptsize
 A
 random variable $X$ belongs to the Integrated Pearson family
 if and only if there exist constants $c_1\neq 0$ and $c_2\in\R$
 such that the density of $\widetilde{X}=c_1 X+c_2$ is contained
 in the
 \vspace{.7em}
 table.}
 \footnotetext{\!\!\!\!$^1$ \scriptsize For $n>1$ and if $\mu=0$ and
 $\delta=\frac{1}{n-1}=\frac{\gamma}{n}$ then $X\sim{t}_n$.}
 \footnotetext{\!\!\!\!$^2$ \scriptsize $C=C_{\mu}(\delta,\gamma)>0$,
 with $C_{0}(\delta,\gamma)
 ={\varGamma\left(1+\frac{1}{2\delta}\right)
 \sqrt{\delta\gamma^{1+\frac{1}{\delta}}}}\Big/{\varGamma\left(\frac12
 +\frac{1}{2\delta}\right)\sqrt{\pi}}$.}
 \footnotetext{\!\!\!\!$^3$ \scriptsize For $n>0$, $m>2$ and if
 $a=\frac{m}{2}$, $b=\frac{n}{2}$, $\theta=\frac{m}{n}$ then
 $X\sim F_{n,m}$.}
 \end{sidewaystable}

 \begin{rem}
 \label{rem.extra}
 Since
 \[
 (\mu-x)\frac{\exp\left(\frac{\mu}
 {\sqrt{\delta\gamma}}
 \tan^{-1}(x\sqrt{{\delta}/{\gamma}})
 \right)}{(\delta x^2+\gamma)^{1+\frac{1}{2\delta}}}
 =
 \frac{\ud}{\ud x}
 \frac{\exp\left(\frac{\mu}{\sqrt{\delta\gamma}}
 \tan^{-1}(x\sqrt{{\delta}/{\gamma}})\right)}
 {(\delta x^2+\gamma)^{\frac{1}{2\delta}}},
 \]
 it follows that $\E X=\mu$
 for the Student-type densities (type 4), while
 for all other cases it is evident to check that
 the mean is as displayed in Table
 \ref{table densities}. Next,
 it is easily verified that the densities
 of Table
 \ref{table densities} satisfy the
 assumptions (B) of Proposition
 \ref{prop.2}, below,
 with $\mu=\E X$, $p_2(x)=q(x)$ and
 $p_1(x)=\mu-x-q'(x)$, where $\mu$ and $q$
 are as in the Table.
 Hence, according to
 Proposition \ref{prop.2}, all these densities
 are, indeed, integrated Pearson.
 \end{rem}
 \begin{cor}
 \label{cor.moments}
 Assume that $X\sim\IP$.
 \vspace*{-1ex}
 \begin{itemize}
 \item[(a)] If $\delta\leq 0$ then $\E|X|^\alpha<\infty$ for any $\alpha\in[0,\infty)$.
 \vspace*{-1ex}
 \item[(b)] If $\delta>0$ then $\E|X|^\alpha<\infty$ for any
 $\alpha\in[0,1+1/\delta)$,
 while $\E|X|^{1+1/\delta}=\infty$.
 \vspace*{-.5ex}
 \end{itemize}
 \end{cor}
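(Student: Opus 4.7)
The plan is to appeal to the complete classification of Table~\ref{table densities} together with the affine invariance of Proposition~\ref{prop.1}(vi). Since $\E|\theta X+c|^\alpha<\infty$ if and only if $\E|X|^\alpha<\infty$ for any $\theta\neq 0$, $c\in\R$ and any $\alpha\in[0,\infty)$, and since the leading coefficient $\delta$ of $q$ is preserved by such transformations (the explicit formula $\widetilde{q}(x)=\theta^2 q((x-c)/\theta)$ shows that $\widetilde{\delta}=\delta$), it suffices to verify (a) and (b) for the six canonical representatives exhibited in the table.

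For part~(a), the three sub-cases are transparent. The Normal-type density has Gaussian tails; the Gamma-type density has an exponential factor $e^{-\lambda x}$ on $(0,\infty)$ with $\lambda>0$; the Beta-type density is supported on the bounded interval $(0,1)$. In each case $\E|X|^\alpha<\infty$ for every $\alpha\in[0,\infty)$ by elementary integrability.

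For part~(b), the common feature of the three $\delta>0$ types is that the density decays like $|x|^{-(2+1/\delta)}$ as $|x|\to\infty$. In the Student case, since $|\tan^{-1}(x\sqrt{\delta/\gamma})|\leq\pi/2$ the exponential factor in $f$ is bounded above and below by strictly positive constants on $\R$, so $f(x)\asymp (\delta x^2+\gamma)^{-(1+1/(2\delta))}\asymp |x|^{-(2+1/\delta)}$ as $|x|\to\infty$. In the Reciprocal Gamma case, $a=1+1/\delta$ and $f(x)\sim c\,x^{-a-1}=c\,x^{-(2+1/\delta)}$ as $x\to\infty$. In the Snedecor case, again $a=1+1/\delta$ and $f(x)\sim c\,x^{b-1}\cdot x^{-a-b}=c\,x^{-(2+1/\delta)}$ as $x\to\infty$. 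Applying the $p$-integral test at infinity, $\int^{\infty}|x|^\alpha f(x)\,\ud x<\infty$ if and only if $\alpha<1+1/\delta$, with divergence at the critical exponent $\alpha=1+1/\delta$.

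No genuine obstacle arises: the argument is a routine tail analysis once the classification is in hand. The only point requiring minimal attention is the boundedness of the $\tan^{-1}$ factor in the Student density, which is immediate and ensures that the asymptotic order is governed entirely by the polynomial denominator.
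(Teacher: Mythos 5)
Your proof is correct and follows essentially the same route as the paper: reduce to the canonical densities of Table \ref{table densities} via an affine transformation (noting that $\delta$ and the finiteness of $\E|X|^\alpha$ are both affine invariants, by Proposition \ref{prop.1}(vi)), then read off the moment behaviour type by type. The only difference is that the paper leaves the tail analysis of the six canonical densities implicit, whereas you carry it out explicitly; your asymptotics $f(x)\asymp |x|^{-(2+1/\delta)}$ for the three $\delta>0$ types are correct and justify both the finiteness for $\alpha<1+1/\delta$ and the divergence at the critical exponent.
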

 \begin{proof}
 If $X\sim\IP$ then we can find constants $c_1\neq 0$ and
 $c_2\in\R$ such that the density of $\widetilde{X}=c_1 X+c_2$
 is contained in Table \ref{table densities}. Then, according to Proposition
 \ref{prop.1}(vi), $\widetilde{X}\sim\IPqtilde$
 with $\widetilde{\delta}=\delta$. The assertion
 follows from the fact that $\E|X|^{\alpha}<\infty$ if and only if
 $\E|c_1 X+c_2|^{\alpha}<\infty$.
 \end{proof}

 Next, we shall obtain a recurrence for the moments and the
 central moments of a random variable $X\sim\IPq$. To this
 end we first prove a simple lemma.
 \begin{lem}
 \label{lem.limits2}
 If $X\sim\IP$ has support $J(X)=(\alpha,\omega)$
 and $\E |X|^n<\infty$ for some $n\geq 1$ (that is, $\delta<\frac{1}{n-1}$)
 then
 \be
 \label{limits2}
 \lim_{x\nearrow\omega}{x^k q(x)f(x)}
 =\lim_{x\searrow\alpha}{x^k q(x)f(x)}=0,\quad
 k=0,1,\ldots,n-1,
 \ee
 and, in general, for any $c\in\R$,
 \be
 \label{limits3}
 \lim_{x\nearrow\omega}{(x-c)^k q(x)f(x)}
 =\lim_{x\searrow\alpha}{(x-c)^k q(x)f(x)}=0,\quad
 k=0,1,\ldots,n-1.
 \ee
 \end{lem}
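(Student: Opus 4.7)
The plan is to rewrite the integrated Pearson identity in a form suitable for estimating tail behavior, then exploit the assumed moment condition to force the relevant limits to be zero.

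First, I would observe that since $\E X=\mu$, the identity (\ref{qua1}) can be equivalently written as
\[
q(x)f(x)=\int_{x}^{\omega}(t-\mu)f(t)\ud{t}=-\int_{\alpha}^{x}(\mu-t)f(t)\ud{t}+\int_{\alpha}^{\omega}(\mu-t)f(t)\ud{t},
\]
where the last term vanishes. (I would verify both forms from (\ref{qua1}).) This rewriting turns the quantity $x^kq(x)f(x)$ into an honest tail integral, which is the key input.

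Next I would handle (\ref{limits2}) at the right endpoint $\omega$, splitting into two cases. If $\omega<\infty$, then $x^k$ stays bounded on a left neighborhood of $\omega$, and $q(x)f(x)=\int_x^{\omega}(t-\mu)f(t)\ud{t}\to 0$ as $x\nearrow\omega$ because $|t-\mu|$ is bounded on the finite interval and $f$ is a density there; so the product tends to $0$. If $\omega=\infty$, I would pick $x$ large enough that $x>\max(0,\mu)$, so that for all $t\in[x,\infty)$ one has $0\leq x^k\leq t^k$ and $t-\mu\geq 0$. Then
\[
0\leq x^k q(x)f(x)=x^k\int_{x}^{\infty}(t-\mu)f(t)\ud{t}\leq\int_{x}^{\infty}t^k(t-\mu)f(t)\ud{t}.
\]
The hypothesis $\E|X|^n<\infty$ with $k\leq n-1$ guarantees that $t\mapsto t^k(t-\mu)f(t)$ is integrable at $+\infty$ (since $|t^k(t-\mu)|\leq |t|^{k+1}+|\mu||t|^k$ and $k+1\leq n$), so the tail integral tends to $0$. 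The limit at the left endpoint $\alpha$ is handled symmetrically: if $\alpha>-\infty$ then $x^k$ is bounded and $q(x)f(x)=\int_{\alpha}^{x}(\mu-t)f(t)\ud{t}\to 0$; if $\alpha=-\infty$ then for $x$ sufficiently negative, $|x|^k\leq |t|^k$ for $t\in(-\infty,x]$ and $\mu-t\geq 0$, giving $|x|^k q(x)f(x)\leq \int_{-\infty}^{x}|t|^k(\mu-t)f(t)\ud{t}\to 0$.

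Finally, (\ref{limits3}) follows immediately from (\ref{limits2}) by the binomial expansion $(x-c)^k=\sum_{j=0}^{k}\binom{k}{j}(-c)^{k-j}x^j$ with $j\leq k\leq n-1$, since each summand $\binom{k}{j}(-c)^{k-j}x^j q(x)f(x)$ goes to zero. The only real technical point is the case $\omega=\infty$ (and symmetrically $\alpha=-\infty$), where I expect the main obstacle to be making the simple bound $x^k\leq t^k$ valid; this requires restricting $x$ to be positive (resp.\ negative) and larger (resp.\ smaller) than $\mu$, but once this is noted the moment assumption $\E|X|^n<\infty$ does the rest via integrability of the tail.
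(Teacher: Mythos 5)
Your proof is correct, and your reduction of (\ref{limits3}) to (\ref{limits2}) via the binomial expansion is exactly the paper's. Where you genuinely diverge is in the key estimate for an unbounded endpoint. You rewrite $q(x)f(x)$ as the tail integral $\int_x^{\infty}(t-\mu)f(t)\ud{t}$ and, for $x>\max(0,\mu)$, push $x^k$ inside via the pointwise bound $x^k(t-\mu)\leq t^k(t-\mu)$ on $[x,\infty)$, obtaining $0\leq x^kq(x)f(x)\leq\int_x^{\infty}t^k(t-\mu)f(t)\ud{t}\to 0$ from $\E|X|^{k+1}<\infty$. The paper instead reduces to the single extremal case $k=n-1$, keeps $q(x)f(x)$ as a unit, observes that it is eventually decreasing, writes $x^{n-1}=\frac{(n-1)2^{n-1}}{2^{n-1}-1}\int_{x/2}^{x}t^{n-2}\ud{t}$, and bounds $x^{n-1}q(x)f(x)\leq C\int_{x/2}^{\infty}t^{n-2}q(t)f(t)\ud{t}\to 0$, which rests on $\E q(X)|X|^{n-2}<\infty$ (finite since $\deg(q)\leq 2$ and $\E|X|^n<\infty$). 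Your route is more elementary: it avoids the monotonicity observation and the doubling trick, and it handles every $k$ directly rather than only $k=n-1$; the small price is that you must first justify the identity $q(x)f(x)=\int_x^{\omega}(t-\mu)f(t)\ud{t}$ from (\ref{qua1}) and $\E X=\mu$, which you correctly flag. Both arguments consume exactly the hypothesis $\E|X|^n<\infty$, so neither is more general than the other.
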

 \begin{proof}
 Since $x^k q(x)f(x)=x^k\int_{\alpha}^x (\mu-t)f(t)\ud{t}$,
 $\alpha<x<\omega$, the second limit in (\ref{limits2}) is trivial whenever
 $\alpha>-\infty$ and the first one is trivial whenever $\omega<\infty$.
 If $\omega=\infty$ it suffices to verify the first limit
 in (\ref{limits2}) only when $k=n-1$ and $n\geq 2$ (because the case $k=0$ is obvious);
 then,
 since $q(x)f(x)$ is eventually decreasing we have that for
 large enough $x>0$,
 \[
 \begin{split}
 x^{n-1}q(x)f(x)&=q(x)f(x)\frac{(n-1)2^{n-1}}{2^{n-1}-1}\int_{x/2}^x t^{n-2} \ud{t}
 \\
 &\leq \frac{(n-1)2^{n-1}}{2^{n-1}-1}\int_{x/2}^x t^{n-2}q(t)f(t) \ud{t}\\
 &\leq \frac{(n-1)2^{n-1}}{2^{n-1}-1}\int_{x/2}^\infty t^{n-2}q(t)f(t) \ud{t}
 \to 0,\ \ \textrm{as} \ x\to\infty,
 \end{split}
 \]
 because $\deg(q)\leq 2$ and, by assumption,
 $\E q(X)|X|^{n-2}<\infty$. The case $\alpha=-\infty$ is translated to
 the previous one by considering the random variable
 $\widetilde{X}=-X$ with density $\widetilde{f}(x)=f(-x)$, quadratic polynomial
 $\widetilde{q}(x)=q(-x)$ and support
 $J(\widetilde{X})=(\widetilde{\alpha},\widetilde{\omega})=(-\omega,-\alpha)=(-\omega,\infty)$.
 Then $\E|\widetilde{X}|^n=\E|X|^n<\infty$ and
 \[
 \lim_{x\to-\infty}x^{k}q(x)f(x)
 =(-1)^{k}\lim_{x\to\infty}x^{k}q(-x)f(-x)=(-1)^{k}\lim_{x\to\infty}x^{k}\widetilde{q}(x)
 \widetilde{f}(x)=0
 \]
 for all $k\in\{0,1,\ldots,n-1\}$. Now it suffices to observe that all
 limits in (\ref{limits3}) are linear combinations of
 limits in (\ref{limits2}). Indeed, the first limit in
 (\ref{limits3}) is
 \medskip

 \centerline{
 $\lim_{x\nearrow\omega}{(x-c)^k q(x)f(x)}=\sum_{i=0}^k
 {k\choose i}(-c)^{k-i} \lim_{x\nearrow\omega}{x^i q(x)f(x)}=0$}
 \medskip

 \noindent
 and, similarly, the second limit in (\ref{limits3}) is
 \medskip

 \hfill
 $\lim_{x\searrow\alpha}{(x-c)^k q(x)f(x)}=\sum_{i=0}^k
 {k\choose i}(-c)^{k-i} \lim_{x\searrow\alpha}{x^i q(x)f(x)}=0$.
 \end{proof}
 \begin{lem}
 \label{lem.req.moments}
 If $X\sim\IP$
 %has support $J(X)=(\alpha,\omega)$
 and $\E |X|^n<\infty$ for some $n\geq 2$ (that is, $\delta<\frac{1}{n-1}$)
 then for any $c\in\R$, the central moments about $c$ satisfy the recurrence
  \be
 \label{reccur.general}
 \begin{split}
  \E (X-c)^{k+1}=\frac{(\mu-c+ k q'(c))\E (X-c)^{k}+ k q(c) \E (X-c)^{k-1}}{1-k\delta},&\\
 \textrm{\raisebox{1ex}{$\quad k=1,2,\ldots,n-1,$}}
 \end{split}
  \ee
 with initial conditions $\E(X-c)^0=1$, $\E (X-c)^1=\mu-c$,
 where $q(c)=\delta c^2+\beta c+\gamma$, $q'(c)=2\delta c+\beta$.
 In particular,
 %%%\pagebreak
 \begin{itemize}
 \item[(i)] the usual moments ($c=0$) satisfy the recurrence
  \be
  \label{reccur.usual}
  \E X^{k+1}=\frac{(\mu+k\beta)\E X^{k}+k\gamma
 \E X^{k-1}}{1-k\delta}, \ \ k=1,2,\ldots,n-1,
  \ee
 with initial conditions $\E X^0=1$ and $\E X^1=\mu$;
 \item[(ii)] the central moments ($c=\mu$) satisfy the recurrence
  \be
 \label{reccur.central}
 \begin{split}
  \!\!\!
  \E (X-\mu)^{k+1}=\frac{k q'(\mu)\E (X-\mu)^{k}+ k q(\mu) \E (X-\mu)^{k-1}}{1-k\delta},
 \ \ k=1,2,\ldots,n-1,
 \end{split}
  \ee
 with initial conditions $\E(X-\mu)^0=1$ and
 $\E (X-\mu)^1=0$.
 \end{itemize}
 \end{lem}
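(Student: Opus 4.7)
The plan is to derive the general recurrence \eqref{reccur.general} directly from \eqref{qua1} by a single integration by parts, and then deduce (i) and (ii) by specializing $c$.

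First, I would differentiate both sides of \eqref{qua1} in $x$; on the open support $J=(\alpha,\omega)$ this gives the pointwise identity
\[
\frac{\ud}{\ud x}\bigl(q(x)f(x)\bigr)=(\mu-x)f(x),\qquad x\in J,
\]
which is essentially the Pearson equation \eqref{Pearson} rewritten. Next I would fix $c\in\R$ and an integer $k$ with $1\le k\le n-1$, multiply by $(x-c)^k$, and integrate over $(\alpha,\omega)$. Integration by parts yields
\[
\Bls (x-c)^k q(x)f(x)\Brs_\alpha^\omega
-k\int_\alpha^\omega (x-c)^{k-1}q(x)f(x)\ud x
=\int_\alpha^\omega (x-c)^k(\mu-x)f(x)\ud x.
\]
The boundary term vanishes by Lemma~\ref{lem.limits2}, eq.~\eqref{limits3}, since the assumption $\E|X|^n<\infty$ with $k\le n-1$ puts us exactly in its scope. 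This produces the key moment identity
\[
-k\,\E\bigl[(X-c)^{k-1}q(X)\bigr]=\E\bigl[(X-c)^k(\mu-X)\bigr].
\]

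Now I would expand each side algebraically. Writing $\mu-X=(\mu-c)-(X-c)$ gives
\[
\E\bigl[(X-c)^k(\mu-X)\bigr]=(\mu-c)\,\E(X-c)^k-\E(X-c)^{k+1},
\]
while the Taylor expansion of the quadratic $q$ at $c$, namely $q(X)=q(c)+q'(c)(X-c)+\delta(X-c)^2$, gives
\[
\E\bigl[(X-c)^{k-1}q(X)\bigr]=q(c)\E(X-c)^{k-1}+q'(c)\E(X-c)^k+\delta\,\E(X-c)^{k+1}.
\]
Substituting these into the moment identity and collecting the $\E(X-c)^{k+1}$ terms on the left gives
\[
(1-k\delta)\E(X-c)^{k+1}=\bigl(\mu-c+kq'(c)\bigr)\E(X-c)^k+k\,q(c)\,\E(X-c)^{k-1}.
\]
The division by $1-k\delta$ is legitimate because the hypothesis $\delta<\tfrac{1}{n-1}$ together with $k\le n-1$ gives $k\delta<1$, so $1-k\delta>0$; this yields \eqref{reccur.general}. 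The initial values $\E(X-c)^0=1$ and $\E(X-c)^1=\mu-c$ are immediate. Finally, parts (i) and (ii) are obtained by the specializations $c=0$ (using $q(0)=\gamma$, $q'(0)=\beta$) and $c=\mu$ (where $\mu-c=0$) respectively.

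The routine algebra is transparent; the only delicate point is the vanishing of the boundary term in the integration by parts, and this is precisely what Lemma~\ref{lem.limits2} was designed to provide, so no essential obstacle remains.
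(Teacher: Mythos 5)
Your proposal is correct and follows essentially the same route as the paper: the same integration by parts against $(q f)'=(\mu-x)f$, the same appeal to Lemma~\ref{lem.limits2} (eq.~\eqref{limits3}) to kill the boundary term, and the same expansion $q(X)=q(c)+q'(c)(X-c)+\delta(X-c)^2$. The only difference is cosmetic ordering of the algebra, and your explicit remark that $1-k\delta>0$ is a small but welcome addition.
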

 \begin{proof}
 If $J(X)=(\alpha,\omega)$ is the support of $X$ and $k\in\{1,2,\ldots,n-1\}$
 then we have
 \[
 \begin{split}
 \E(X-c)^{k+1}&=\E[((\mu-c)-(\mu-X))(X-c)^{k}]\\
              &=(\mu-c)\E(X-c)^{k}-\int_{\alpha}^{\omega}(x-c)^{k}(\mu-x)f(x)\ud{x}.
 \end{split}
 \]
 Using (\ref{limits3}) and the fact that $q(X)=\delta(X-c)^2+q'(c)(X-c)+q(c)$ we see that
 \[
 \begin{split}
 -\int_{\alpha}^{\omega} (x-c)^{k}(\mu-x)f(x)\ud{x}
 &= -\int_{\alpha}^{\omega} (x-c)^{k}(q(x)f(x))'\ud{x}
 \\
 &= -(x-c)^{k}q(x)f(x)\big|^\omega_\alpha+k\E q(X)(X-c)^{k-1}
 \\
 &= k\delta\E(X-c)^{k+1}+ k q'(c)\E(X-c)^{k}+k q(c)\E(X-c)^{k-1}.
 \end{split}
 \]
 Therefore,
 \[
 \begin{split}
 (1-k\delta)\E(X-c)^{k+1}=(\mu-c+ k q'(c))\E(X-c)^{k}+ k
 q(c)\E(X-c)^{k-1},\\ \quad k=1,2,\ldots,n-1,
 \end{split}
 \]
 and, since the initial conditions are obvious, (\ref{reccur.general}) follows.
 %For (i) write $\E X^k=\mu \int_{\alpha}^{\omega}
 %x^{k-1}f(x)\ud{x} -
 %\int_{\alpha}^{\omega} x^{k-1}(\mu-x)f(x)\ud{x}=
 %\mu\E X^{k-1}-I_k(\mu;q)$.
 %In view of (\ref{limits2}) we get
 %$\E X^k=\mu\E X^{k-1}-I_k(\mu;q)=\mu\E X^{k-1}-\int_{\alpha}^{\omega} x^{k-1}(q(x)f(x))'\ud{x}=
 %\mu\E X^{k-1}-x^{k-1}q(x)f(x)\left|^\omega_\alpha\right.
 %+(k-1)\int_{\alpha}^{\omega} x^{k-2}q(x)f(x)\ud{x}=
 %\mu\E X^{k-1}+(k-1)\int_{\alpha}^{\omega} x^{k-2}(\delta x^2+\beta x+\gamma)f(x)\ud{x}
 %=\mu\E X^{k-1}+(k-1)(\delta \E X^{k}+\beta \E X^{k-1}+\gamma
 %\E X^{k-2})$ and
 %%. Substituting the last expression of $-I_k(\mu;q)$ into
 %%$\E X^k=\mu\E X^{k-1}-I_k(\mu;q)$ we get
 %(\ref{reccur.usual}) follows. For (ii) we have, in view of (\ref{limits3}), that
 %$\E(X-\mu)^k=-\int_{\alpha}^{\omega}
 %(x-\mu)^{k-1}(\mu-x)f(x)\ud{x}
 %=-\int_{\alpha}^{\omega} (x-\mu)^{k-1}(q(x)f(x))'\ud{x}
 %=-(x-\mu)^{k-1}q(x)f(x)\left|^\omega_\alpha\right.
 %+(k-1)\int_{\alpha}^{\omega} (x-\mu)^{k-2}q(x)f(x)\ud{x}
 %=(k-1)\int_{\alpha}^{\omega} (x-\mu)^{k-2}q(x)f(x)\ud{x}=
 %(k-1)\E [q(X)(X-\mu)^{k-2}]$.
 %Writing $q(X)=q(\mu)+q'(\mu)(X-\mu)+\delta(X-\mu)^2$
 %we observe that
 %$\E(X-\mu)^k=(k-1)(q(\mu)\E(X-\mu)^{k-2}+q'(\mu)\E(X-\mu)^{k-1}+\delta\E(X-\mu)^{k})$,
 %which is (\ref{reccur.central}).
 \end{proof}

 \section{Comparison with the ordinary Pearson system}
 \label{sec3}
 The ordinary Pearson family consists of absolutely continuous random
 variables $X$ supported in some (open) interval $(\alpha,\omega)$, such
 that their density $f$, which is assumed strictly positive and differentiable
 in $(\alpha,\omega)$, satisfies the {\it Pearson differential equation}
 \be
 \label{Pearson.ord}
 \frac{f'(x)}{f(x)}=\frac{p_1(x)}{p_2(x)}, \ \ \ \alpha<x<\omega,
 \ee
 where $p_1$ is a polynomial of degree at most one and $p_2$
 is a polynomial of degree at most two. Since we can multiply the
 nominator and the denominator of (\ref{Pearson.ord}) by the same
 nonzero
 constant, it is usually assumed, for convenience, that $p_1$
 is a monic linear polynomial
 of degree one, e.g., $p_1(x)=x+{a}_0$.  Although this
 restriction specifies both $p_1$ and $p_2$ whenever $p_1$ is non-constant,
 it is not satisfactory for our purposes because it eliminates all rectangular
 (uniform over some interval) distributions and several $B(a,b)$ densities
 (those with $a+b=2$)  -- see Table \ref{table densities}, above. Therefore,
 when we say that a function $f$ satisfies the Pearson differential equation
 (\ref{Pearson.ord}) it will be assumed that $p_1$ is {\it any polynomial} of
 degree at most one (the cases $p_1\equiv 0$ and $p_1\equiv c\neq 0$ are allowed)
 and $p_2\not\equiv0$ is {any polynomial of degree at most two}. Note that common
 zeros of $p_1$ and $p_2$ are allowed inside the interval $(\alpha,\omega)$.
 Also, it may happen that $p_1$ and $p_2$  have common zeros outside the interval
 $(\alpha,\omega)$; this is the case of an exponential density.

 Clearly, the ordinary Pearson family contains some random
 variables whose expectation does not exist, e.g.,
 Cauchy. Sometimes it is asserted that,
 under finiteness of the first moment,
 (\ref{qua1}) and (\ref{Pearson.ord}) are equivalent --
 see, e.g.,
 %%%\cite[pp.\ 292--293]{Korw}.
 Korwar (1991), pp.\ 292--293.
 However, this is true only in particular cases,
 i.e.\ when we have made the `correct' choice of $p_2$
 and provided that a solution $f$ of (\ref{Pearson.ord})
 is considered in a maximal subinterval of the support of
 $p_2$, $\{x:p_2(x)\neq 0\}$.
 The following algorithmic procedure will
 always decides correctly if a given Pearson density belongs
 to the Integrated Pearson family.
 The algorithm makes a correct choice of $p_2$,
 if it exists,
 as follows:

 \newpage
 %%\begin{table}[htdp]
 \begin{center}
 %%\caption{\bf The Integrated Pearson Algorithm}
 {\bf The Integrated Pearson Algorithm}
 \centering
 \begin{description}
 \item[Step 0.] Assume that a Pearson density
   $f$ with finite (unknown) mean and (known) support
   $S(f)=\{x:f(x)>0\}=(\alpha,\omega)$
   satisfies $f'/f=\widetilde{p}_1/\widetilde{p}_2$
   for given (real)
   polynomials $\widetilde{p}_1$, $\widetilde{p}_2$ (with
   $\widetilde{p}_2\not\equiv0$), of degree at most one and two,
   respectively.
 \item[Step 1.] Cancel the common factors of $\widetilde{p}_1$ and
   $\widetilde{p}_2$, if any. Then the resulting
   polynomials, say
   $\widetilde{p}_{1}^{(1)}$ and
   $\widetilde{p}_{2}^{(1)}$,
   have become irreducible --
   they do not have any common zeros in $\C$.
   In case $\widetilde{p}_1\equiv 0$ it suffices
   to define
   $\widetilde{p}_{1}^{(1)}\equiv 0$,
   $\widetilde{p}_{2}^{(1)}\equiv 1$.
 \item[Step 2.] If $\alpha>-\infty$ and
   $\widetilde{p}_2^{(1)}(\alpha)\neq 0$
   then multiply both $\widetilde{p}_1^{(1)}$
   and $\widetilde{p}_2^{(1)}$ by $x-\alpha$
   and name the resulting polynomials
   $\widetilde{p}_1^{(2)}$
   and $\widetilde{p}_2^{(2)}$; otherwise (i.e.\ if
   either $\alpha=-\infty$ or
   $\alpha>-\infty$ and $\widetilde{p}_2^{(1)}(\alpha)=0$)
   set $\widetilde{p}_1^{(2)}=\widetilde{p}_1^{(1)}$
   and $\widetilde{p}_2^{(2)}=\widetilde{p}_2^{(1)}$.
 \item[Step 3.] If $\omega<\infty$ and
   $\widetilde{p}_2^{(2)}(\omega)\neq 0$
   then multiply both $\widetilde{p}_1^{(2)}$
   and $\widetilde{p}_2^{(2)}$ by $\omega-x$
   and name  the  resulting polynomials $p_1$
   and $p_2$; otherwise (i.e.\ if either
    $\omega=\infty$ or $\omega<\infty$ and
   $\widetilde{p}_2^{(2)}(\omega)=0$)
   set $p_1=\widetilde{p}_1^{(2)}$
   and $p_2=\widetilde{p}_2^{(2)}$.
 \item[Step 4.] If the resulting polynomials $p_1$ and $p_2$
   satisfy the conditions
   $\deg(p_1)\leq 1$ and $\deg(p_2)\leq 2$
   then $p_2$ is a correct choice and
   $f\sim\IPq$ with $q(x)=\theta p_2(x)$ for some
   $\theta\neq0$; otherwise the given density $f$
   does not belong to the Integrated Pearson system.
 %\item[Step 4.] For the resulting $p_2$ check if the
 %  limiting conditions
 %  $\lim_{x\searrow\alpha}p_2(x)f(x)=0$ and
 %  $\lim_{x\nearrow\omega}p_2(x)f(x)=0$ are satisfied.
 %  If these limits are true then this density is Integrated
 %  Pearson, otherwise it is not.
 \end{description}
 \end{center}
 %\end{table}

 It is clear that the above procedure
 starts with the equation $f'/f=\widetilde{p}_1/\widetilde{p}_2$
 and, at Step 3, it produces two new (real) polynomials
 $p_1,p_2$, of degree at most three and four,
 respectively, such that $f'/f=p_1/p_2$. Moreover,
 the polynomial $p_2$ satisfies the relations $p_2(\alpha)=0$
 if $\alpha>-\infty$, $p_2(\omega)=0$ if $\omega<\infty$
 and $p_2(x)\neq 0$ for all $x\in(\alpha,\omega)$.
 Furthermore, because of Step 1, the polynomials $p_1(z)$ and $p_2(z)$
 cannot have any common zeros in $\C\smallsetminus\{\alpha,\omega\}$.

 The algorithm guarantees that we have
 chosen a correct $p_2$ in each case where
 such a $p_2$ exists. For example, the standard
 exponential density,
 \[
 f(x)=e^{-x}, \ \ x>0,
 \]
 satisfies
 (\ref{Pearson.ord}) when $(p_1,p_2)=(-1,1)$,
 when $(p_1,p_2)=(-x,x)$ and when  $(p_1,p_2)=(-x-1,x+1)$;
 the correct choice is the second one. The
 standard uniform density,
 \[
 f(x)=1, \ \ 0<x<1,
 \]
 satisfies (\ref{Pearson.ord}) for $p_1\equiv 0$
 and for any $p_2$ (with no roots in $(0,1)$), and the
 correct choice is $p_2=x(1-x)$. The power density,
 \[
 f(x)=2x, \ \ 0<x<1,
 \]
 satisfies (\ref{Pearson.ord})
 with $(p_1,p_2)=(2-x,x(2-x))$ and the correct choice
 arises when we multiply both polynomials by $(1-x)/(2-x)$,
 that is, $(p_1,p_2)=(1-x, x(1-x))$.
 The Pareto density,
 \[
 f(x)=\frac{2}{(x+1)^{3}}, \ \ x>0,
 \]
 satisfies (\ref{Pearson.ord})
 when $(p_1,p_2)=(-3,x+1)$,
 when $(p_1,p_2)=(-3x,x(x+1))$
 and when  $(p_1,p_2)=(-3(x+1),(x+1)^2)$;
 the correct choice is the second one.
 The half-Normal density,
 \[
 f(x)=\sqrt{\frac{2}{\pi}}e^{-x^2/2}, \ \ x>0,
 \]
 satisfies (\ref{Pearson.ord}) in its interval support $(\alpha,\omega)=(0,\infty)$,
 although it does not satisfy (\ref{qua1}) -- there not exists a correct choice for
 $p_2$.
 A more natural example is as follows:
 Consider the density
 \[
 f(x)=\frac{C}{\sqrt{1+x^2}}, \ \ \alpha<x<\omega,
 \]
 where $C=C(\alpha,\omega)>0$
 is the normalizing constant.
 This density satisfies, in any finite interval
 $(\alpha,\omega)$, the Pearson differential equation
 (\ref{Pearson.ord})
 with $p_1=-x$, $p_2=1+x^2$,  while
 its integral over unbounded intervals diverges.
 This density does not fulfill (\ref{qua1})
 and thus, it does not belong to the Integrated Pearson family --
 again there does not exist
 a correct choice for $p_2$.

 The algorithm is justified because
 of the following propositions.

 \begin{prop}
 \label{prop.alg1}
 Let $X\sim f$ and assume that the density $f$ satisfies the
 assumptions of Step 0. If $X\sim \IPq$ then the
 polynomials
 $p_1$ and $p_2$ of Step 3 are of degree at most one and
 two, respectively, and $q(x)=\theta p_2(x)$ for some
 $\theta\neq 0$.
 \end{prop}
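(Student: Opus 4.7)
I will use Proposition \ref{prop.1}(iii), which tells me that every density with $X\sim\IPq$ satisfies $f'/f=p_1^*/q$ on $(\alpha,\omega)$ with $p_1^*(x):=\mu-x-q'(x)$ a polynomial of degree at most one. Combined with the hypothesis $f'/f=\widetilde{p}_1/\widetilde{p}_2$ from Step 0, this gives the equality of rational functions $\widetilde{p}_1/\widetilde{p}_2=p_1^*/q$. Since Step 1 only cancels common factors and Steps 2, 3 multiply both numerator and denominator by the same linear factor, the equality $p_1/p_2=p_1^*/q$ persists for the algorithm's final output. Hence it suffices to show that $p_2$ is a nonzero scalar multiple of $q$; then $\deg p_2\leq 2$, $q=\theta p_2$ for some $\theta\neq 0$, and $p_1=(p_1^*/q)p_2=\theta p_1^*$ has degree at most one.

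The crux is to show that every common complex zero of $p_1^*$ and $q$ lies in the set of finite endpoints $\{\alpha,\omega\}$. A non-real common zero is impossible, since $p_1^*$ is a real polynomial of degree at most one: it either has a unique real root or is identically zero. In the degenerate case $p_1^*\equiv 0$, one checks that $\delta=-\tfrac12$ and $\mu=\beta$, whence $q$ is a non-degenerate real quadratic whose two real roots must coincide with the finite endpoints $\alpha,\omega$, so $q$ has no non-real roots at all. For a real root $\rho$ of $q$, Proposition \ref{prop.1}(iv) excludes $\rho\in(\alpha,\omega)$; inspection of the six types in Section \ref{sec2} shows that $\rho\in\{\alpha,\omega\}$ except in the Snedecor subcase ($\delta>0$, $\Delta>0$). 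Reducing via Proposition \ref{prop.1}(vi) to the standard form $q(x)=\delta x(x+\theta)$ with $\delta,\theta,\mu>0$, $\alpha=0$ and $\omega=\infty$, the extra real root is $-\theta$ and a direct calculation gives $p_1^*(-\theta)=\mu+\theta(1+\delta)>0$, so it is not shared.

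Granted this lemma, set $d:=\gcd(p_1^*,q)$. The degenerate case $p_1^*\equiv 0$ is the uniform distribution with $q(x)=\tfrac12(x-\alpha)(\omega-x)$; the algorithm's convention forces $\widetilde{p}_2^{(1)}\equiv 1$, after which Steps 2 and 3 multiply it by $(x-\alpha)$ and $(\omega-x)$, yielding $p_2=(x-\alpha)(\omega-x)=2q$. Otherwise $\deg d\leq 1$, and either $d$ is a nonzero constant or $d=c(x-\rho)$ with $\rho\in\{\alpha,\omega\}$ a finite simple root of $q$; the only potential double-root case (Reciprocal Gamma, $q(x)=\delta x^2$ with $\alpha=0$) is excluded, since $p_1^*(0)=\mu>0$. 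When $\deg d=0$, $\widetilde{p}_2^{(1)}$ is a scalar multiple of $q$ and already vanishes at each finite endpoint by Proposition \ref{prop.1}(v), so Steps 2 and 3 are vacuous. When $\deg d=1$, say $d=c(x-\alpha)$ (the case $\rho=\omega$ is symmetric), $\widetilde{p}_2^{(1)}$ is a scalar multiple of $q/(x-\alpha)$ and does not vanish at $\alpha$, so Step 2 multiplies by $(x-\alpha)$ to recover a scalar multiple of $q$; then $q(\omega)=0$ whenever $\omega<\infty$ makes Step 3 vacuous as well. In every scenario $p_2=\theta^{-1}q$ with $\theta\neq 0$, which completes the proof.

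The principal obstacle is the Snedecor-type verification in the second paragraph, which is the only place where the classification of Section \ref{sec2} is genuinely used; once that is in hand, the remaining case analysis on the degree of $d$ is a mechanical bookkeeping of what Steps 1--3 do.
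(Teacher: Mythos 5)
Your proof is correct, and it takes a genuinely different route from the paper's. The paper proves this proposition by brute force: it uses Proposition \ref{prop.1}(vi) to reduce to the six canonical densities of Table \ref{table densities} and then, for each type (and for each parameter regime in which the given pair $(\widetilde{p}_1,\widetilde{p}_2)$ can fail to be irreducible --- e.g.\ $\varGamma(1,\lambda)$, $B(1,b)$, $B(a,1)$, $B(1,1)$, Snedecor with $b=1$), writes down the most general admissible $(\widetilde{p}_1,\widetilde{p}_2)$ and traces what Steps 1--3 do to it. You instead argue structurally: since $f'/f=p_1^*/q$ with $p_1^*=\mu-x-q'$ by (\ref{Pearson}), the coprime pair produced by Step 1 must be a scalar multiple of $(p_1^*/d,\,q/d)$ with $d=\gcd(p_1^*,q)$ (uniqueness of reduced representations of a rational function), and the whole problem collapses to locating the possible roots of $d$. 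Your key lemma --- that any common root of $p_1^*$ and $q$ is a finite endpoint of the support, with the Snedecor computation $p_1^*(-\theta)=\mu+\theta(1+\delta)>0$ handling the one type whose $q$ has a real root away from the boundary, and $p_1^*(0)=\mu>0$ ruling out the Reciprocal-Gamma double root --- is exactly the right invariant, and Proposition \ref{prop.1}(iv),(v) then make Steps 2--3 restore $q$ up to scalar in every case. What your approach buys is a conceptual explanation of \emph{why} the algorithm works (Step 1 can strip at most one simple linear factor of $q$, and only at an endpoint, which Steps 2--3 are designed to reinstate), at the cost of invoking the classification of Section \ref{sec2} at two isolated points; the paper's version is longer but entirely self-contained within each row of the table. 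I find no gap in your argument.
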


 \begin{proof}
 Since $X$ is Integrated Pearson,
 $Y=\lambda X+c$
 is also Integrated Pearson for all $\lambda\neq 0$ and $c\in\R$; see
 Proposition \ref{prop.1}(vi).
 Also, its density $f_Y(x)=\frac{1}{|\lambda|}f(\frac{x-c}{\lambda})$
 satisfies, by assumption, the differential equation
 \[
 \frac{f_Y'(x)}{f_Y(x)}=\frac{\widetilde{p}_1^{Y}(x)}{\widetilde{p}_2^{Y}(x)},
 \ \ x\in (\widetilde{\alpha},\widetilde{\omega}),
 %in the interval
 %$(\widetilde{\alpha},\widetilde{\omega})$,
 \ \ \
 \mbox {with} \
 \widetilde{p}_1^{Y}(x)=\lambda\widetilde{p}_1\left(\frac{x-c}{\lambda}\right),
 \
 \widetilde{p}_2^{Y}(x)=\lambda^2
 \widetilde{p}_2\left(\frac{x-c}{\lambda}\right),
 \]
 where $(\widetilde{\alpha},\widetilde{\omega})=(\lambda\alpha+c,\lambda\omega+c)$
 or $(\lambda\omega+c,\lambda\alpha+c)$, according to $\lambda>0$ or $\lambda<0$,
 respectively. It is easily shown that
 the new polynomials $p_1,p_2$
 (those that the algorithm produces at Step 3 for $f$)
 are related to the corresponding  polynomials
 $p_1^Y$, $p_2^Y$ (those that the algorithm produces at Step 3 for
 $f_Y$) by the relationships
 \[
 p_1^{Y}(x)=\lambda^i p_1\left(\frac{x-c}{\lambda}\right), \ \ \
 p_2^{Y}(x)=\lambda^{i+1}p_2\left(\frac{x-c}{\lambda}\right),
 \]
 for some $i\in\{1,2,3\}$. Therefore, it suffices to show
 that $\deg(p_i^Y)\leq i$, $i=1,2$, and that
 the quadratic polynomial $q_Y(x)=\lambda^2 q(\frac{x-c}{\lambda})$
 of $Y$ is related to $p_2^Y$ through $q_Y(x)=\theta p_2^Y(x)$
 for some $\theta\neq0$. Thus, without any loss of generality
 we may assume that $f$ is one of the densities given in
 Table \ref{table densities}.

 Now observe that $(\widetilde{p}_1,\widetilde{p}_2)$
 is always irreducible for types $1,4,5$ (Normal-type,
 Student-type, Reciprocal Gamma-type)
 with $\deg(\widetilde{p}_1)=1$ for all types $1,4,5$,
 while $\deg(\widetilde{p}_2)=0$ for type $1$
 and $\deg(\widetilde{p}_2)=2$ for types $4$ and $5$. Since the
 corresponding supports are $\R$, $\R$ and $(0,\infty)$,
 respectively, and
 since in type $5$, $\widetilde{p}_2(x)=\theta x^2$ for some
 $\theta\neq 0$, it follows that
 $(p_1,p_2)=(\widetilde{p}_1,\widetilde{p}_2)$, $q=\theta p_2$
 for some $\theta\neq 0$, and the assertion follows.

 For
 types $2,3$ and $6$ (Gamma-type, Beta-type and Snedecor-type)
 the irreducibility of $\widetilde{p}_1$ and $\widetilde{p}_2$
 depends on the parameters. Let us see these cases separately.

 If $f\sim\varGamma(a,\lambda)$  with $a\neq 1$ ($\alpha>0$, $\lambda>0$)
 then $\widetilde{p}_1=\theta (a-1-\lambda x)$
 and $\widetilde{p}_2=\theta  x$ for some $\theta\neq 0$, so that
 $\widetilde{p}_1,\widetilde{p}_2$ are irreducible with degree one.
 It follows that $p_i=\widetilde{p}_i$,
 $\deg(p_i)=1$ ($i=1,2$) and
 \[
 q(x)=\frac{x}{\lambda}=\frac{p_2(x)}{\theta\lambda}.
 \]
 If
 $f\sim\varGamma(1,\lambda)$ ($\lambda>0$) then all possible
 choices for $(\widetilde{p}_1,\widetilde{p_2})$ are given
 by $\widetilde{p}_1=-\lambda\theta(x+c)$ and
 $\widetilde{p}_2=\theta(x+c)$
 for $\theta\neq 0$, $c\in\R$. Therefore, Step 3 yields
 $(p_1,p_2)=(-\lambda\theta x,\theta x)$
 and, thus, $\deg(p_i)=1$  ($i=1,2$) and
 \[
 q(x)=\frac{x}{\lambda}=\frac{p_2(x)}{\lambda\theta}.
 \]

 If $f$ is of type $6$ and $b\neq 1$ then
 \[
 (\widetilde{p}_1(x),\widetilde{p}_2(x))=(c((b-1)-(a+1)x),c x(x+\theta))
 \
 \mbox{for some}
 \ c\neq 0;
 \]
 here the parameters are $a,b,\theta$
 with $a>1$, $b>0$ and $\theta>0$. It follows that
 $(p_1,p_2)=(\widetilde{p}_1,\widetilde{p}_2)$,
 $\deg(p_i)=i$ ($i=1,2$) and
 \[
 q(x)=\frac{x(x+\theta)}{a-1}=\frac{p_2(x)}{(a-1)c}.
 \]
 If $f$ is of type $6$ with $b=1$ then
 all possible choices for $(\widetilde{p}_1,\widetilde{p_2})$
 are given by
 \[
 \widetilde{p}_1(x)=-c(a+1)(x+\gamma) \ \mbox{and} \
 \widetilde{p}_2(x)=c(x+\theta)(x+\gamma) \
 \mbox{for some}
 \
 c\neq 0,
 \
 \gamma\in\R.
 \]
 Therefore, Step 3 yields
 $(p_1,p_2)=(-c(a+1)x, cx(x+\theta))$
 and, thus, $\deg(p_i)=i$  ($i=1,2$) and
 \[
 q(x)=\frac{x(x+\theta)}{a-1}=\frac{p_2(x)}{(a-1)c}.
 \]

 Finally, let $f$ be of type $3$ (Beta-type),
 that is, $f\sim B(a,b)$ with $a,b>0$.
 If $a\neq 1$
 and $b\neq 1$ it is easily shown that
 \[
 (\widetilde{p}_1(x),\widetilde{p}_2(x))
 =(\theta(a-1-(a+b-2)x),\theta x(1-x)) \ \
 (\theta\neq0)
 \]
 are irreducible, so that
 $(p_1,p_2)=(\widetilde{p}_1,\widetilde{p}_2)$,
 $\deg(p_i)=i$ ($i=1,2$) and
 \[
 q(x)=\frac{x(1-x)}{a+b}=\frac{p_2(x)}{(a+b)\theta}.
 \]
 If $a=1$, $b\neq 1$, the most general form of $(\widetilde{p}_1,\widetilde{p}_2)$
 is given by
 \[
 (\widetilde{p}_1(x),\widetilde{p}_2(x))
 =(-(b-1)\theta (x+c),\theta
 (1-x)(x+c)),
 \ \
 \mbox{where} \ \theta\neq0, \ c\in\R.
 \]
 Therefore, Step 3 yields $(p_1,p_2)=(-(b-1)\theta x, \theta x(1-x))$
 and, thus, $\deg(p_i)=i$  ($i=1,2$) and
 \[
 q(x)=\frac{x(1-x)}{b+1}=\frac{p_2(x)}{(b+1)\theta}.
 \]
 If $a\neq 1$, $b=1$, the most general form of
 $(\widetilde{p}_1,\widetilde{p}_2)$
 is given by
 \[
 (\widetilde{p}_1(x),\widetilde{p}_2(x))
 =((a-1)\theta(x+c), \theta x (x+c)),
 \ \
 \mbox{where}
 \
 \theta\neq0,
 \
 c\in\R.
 \]
 Therefore, Step 3 yields
 $(p_1,p_2)=((a-1)\theta(1-x), \theta x(1-x))$
 and, thus, $\deg(p_i)=i$  ($i=1,2$) and
 \[
 q(x)=\frac{x(1-x)}{a+1}=\frac{p_2(x)}{(a+1)\theta}.
 \]
 Finally, if $a=b=1$ (standard uniform density,
 $U(0,1)\equiv B(1,1)$)
 then $\widetilde{p}_1\equiv 0$ so that
 $(p_1,p_2)=(0,x(1-x))$, $\deg(p_1)<0$, $\deg(p_2)=2$ and
 \[
 q(x)=\frac{x(1-x)}{2}=\frac{p_2(x)}{2}.
 \]
 This subsumes all cases and completes the proof.
 \end{proof}
 \begin{prop}
 \label{prop.alg2}
 Assume that $X\sim f$ where the density $f$ is differentiable with
 derivative $f'$ in its (known) interval support $(\alpha,\omega)$
 and has finite (unknown) mean. Then, the following are
 equivalent:
 \begin{itemize}
 \item[(A)] $f$ satisfies (\ref{Pearson.ord}) for some
    (real) polynomials $p_1$
    (of degree at most one) and $p_2\not\equiv 0$
    (of degree at most two) with $p_2(\alpha)=0$ if
    $\alpha>-\infty$, $p_2(\omega)=0$ if $\omega<\infty$ and
    $p_2(x)\neq 0$ for all $x\in(\alpha,\omega)$.
    %Also,
    %$p_1$ and $p_2$ do not have any common zeros
    %in $\C\smallsetminus\{\alpha,\omega\}$.
 \item[(B)] $X\sim\IPq$ for some $q(x)=\delta x^2+\beta x+\gamma$
    with $\{x:q(x)>0\}=(\alpha,\omega)$ and some $\mu\in(\alpha,\omega)$.
 \end{itemize}
 Moreover, if (A) and (B) hold, then there exists a constant $\theta\neq 0$
 such that $q(x)=\theta p_2(x)$, $x\in\R$.
 \end{prop}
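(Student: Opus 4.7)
The plan is to establish (B)$\Rightarrow$(A) directly from Proposition~\ref{prop.1}, and to obtain (A)$\Rightarrow$(B) by taking the candidate $q=\theta p_2$ and pinning down the scalar $\theta$, together with $\mu$, via a single integration of the Pearson ODE.

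For (B)$\Rightarrow$(A), I read off the ODE~(\ref{Pearson}) derived in the proof of Proposition~\ref{prop.1}: on $J=(\alpha,\omega)$ one has $f'(x)/f(x)=(\mu-x-q'(x))/q(x)$, so (A) holds with $p_2=q$ and $p_1(x)=\mu-x-q'(x)$, and the required positivity/boundary properties of $p_2$ are precisely parts (iv) and (v) of Proposition~\ref{prop.1}. This choice also gives $\theta=1$ in the final clause.

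For (A)$\Rightarrow$(B) I set $h(x):=p_1(x)+p_2'(x)$, a polynomial of degree at most one, say $h(x)=A+Bx$. Using (\ref{Pearson.ord}) I compute
\[
(p_2 f)'(x)=p_2'(x)f(x)+p_2(x)f'(x)=(p_1(x)+p_2'(x))f(x)=(A+Bx)f(x),\quad x\in(\alpha,\omega).
\]
Since $\E|X|<\infty$, the right-hand side is Lebesgue-integrable on $(\alpha,\omega)$, so $p_2 f$ possesses finite one-sided limits $L_-,L_+$ at the endpoints. The crux is to show $L_-=L_+=0$. At an infinite endpoint, a nonzero limit forces $f(x)\sim c/p_2(x)$: when $\deg(p_2)\le1$ this yields $\int f=\infty$, and when $\deg(p_2)=2$ it yields $\int|x|f=\infty$; both contradict the hypotheses that $f$ is a density with finite mean. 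At a finite endpoint, the hypothesis $p_2(\alpha)=0$ (or $p_2(\omega)=0$) together with $\deg(p_2)\le 2$ makes $1/p_2$ non-integrable there, so again a nonzero limit is incompatible with $\int f=1$.

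With $L_\pm=0$ in hand, integrating the identity $(p_2 f)'=(A+Bx)f$ over $(\alpha,\omega)$ yields $A+B\,\E X=0$. If $B\neq 0$ I set $\mu:=\E X=-A/B$, $\theta:=-1/B$ and $q:=\theta p_2$; integrating the same identity from $\alpha$ to $x$ and multiplying by $\theta$ produces exactly
\[
q(x)f(x)=\int_{-\infty}^{x}(\mu-t)f(t)\,\ud t,
\]
which is~(\ref{qua1}). Admissibility of $q$, the inclusion $\mu\in(\alpha,\omega)$, and the relationship between $(\alpha,\omega)$ and $\{q>0\}$ follow from the conditions on $p_2$ in (A) combined with Corollary~\ref{cor.maximal}. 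The case $B=0$ forces $A=0$ too, hence $p_1=-p_2'$ and $f\propto 1/p_2$; a brief case analysis on $\deg(p_2)$ and on the finiteness of each endpoint shows $1/p_2$ is either non-integrable on $(\alpha,\omega)$ (whenever $p_2$ has a real root at a finite endpoint, or $p_2$ is a nonzero constant on an unbounded support) or a Cauchy-type density with no first moment; both alternatives are excluded by the hypotheses, so $B=0$ cannot occur.

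The main obstacle is the endpoint-limit analysis for $p_2 f$, which splits into several subcases according to $\deg(p_2)$ and the finiteness of $\alpha,\omega$; once those limits are shown to vanish, the rest of the argument is a one-line integration and a bookkeeping check of the degree/sign conditions on $q$.
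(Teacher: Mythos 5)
Your proof is correct, and while the (B)$\Rightarrow$(A) direction matches the paper's in substance (the paper routes it through the algorithm and Proposition~\ref{prop.alg1}, whereas you read it off directly from (\ref{Pearson}) and Proposition~\ref{prop.1}(iv),(v), which is cleaner and equally valid), your (A)$\Rightarrow$(B) argument is genuinely different. The paper reduces by affine maps to the three canonical supports $(0,1)$, $(0,\infty)$, $\R$, enumerates every admissible shape of $p_2$ consistent with the boundary conditions, solves the Pearson ODE explicitly in each case, and identifies the solution with an entry of Table~\ref{table densities}; the payoff is that one sees exactly which of the six types arises and under which parameter constraints. You instead observe that $(p_2f)'=(p_1+p_2')f=(A+Bx)f$, prove the endpoint limits of $p_2f$ vanish (using $p_2(\alpha)=p_2(\omega)=0$ at finite endpoints and the degree of $p_2$ at infinite ones, exactly as the paper does in the last paragraph of its proof of Proposition~\ref{prop.2}), deduce $A+B\,\E X=0$, rule out $B=0$, and integrate once to land on (\ref{qua1}) with $q=-p_2/B$. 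In effect you are proving that, under the boundary conditions of (A), hypothesis (i) of Proposition~\ref{prop.2}(B) holds automatically with $\mu=\E X$, and then running the integration step of that proposition; this is shorter, avoids the case enumeration entirely, and identifies $\theta=-1/B$ explicitly (uniqueness of the $q$ in (\ref{qua1}) then gives the ``Moreover'' clause for any witnessing $p_2$). Two small remarks: in the $B=0$ case you can skip the case analysis, since $p_2f$ is then constant and equal to its (zero) endpoint limit, forcing $f\equiv0$; and the literal equality $\{x:q(x)>0\}=(\alpha,\omega)$ in (B) should really be read via Corollary~\ref{cor.maximal} (for Snedecor-type densities the set $\{q>0\}$ has a second component), but the paper's own proof glosses over this in exactly the same way, so it is a defect of the statement rather than of your argument.
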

 \begin{proof}
 Assume first that (B) holds.
 Since $f\sim\IPq$, (\ref{Pearson}) shows that
 $f'/f=\widetilde{p}_1/\widetilde{p}_2$
 where $\widetilde{p}_1=\mu-x-q'$ and $\widetilde{p}_2=q$.
 Putting the polynomials $\widetilde{p}_1=\mu-x-q'$ and $\widetilde{p}_2=q$
 in Step 0
 of the above algorithm and using Proposition
 \ref{prop.alg1}
 we conclude that the resulting polynomials
 $p_1$ and $p_2$ (of Step 3) satisfy the
 requirements of (A); also, $q(x)=\theta p_2(x)$
 for some $\theta\neq 0$.

 Assume now that (A) holds. Using a suitable mapping
 $Y=\lambda X+c$, $\lambda\neq 0$, $c\in\R$, we can
 transform the interval $(\alpha,\omega)$ into
 $(\widetilde{\alpha},\widetilde{\omega})$, where
 $(\widetilde{\alpha},\widetilde{\omega})$ is one
 of the intervals $(0,1)$, $(0,\infty)$ or $(-\infty,\infty)$.
 The polynomials $p_1$ and $p_2$ are transformed to
 $p_1^Y(x)=\lambda p_1(\frac{x-c}{\lambda})$ and
 $p_2^Y(x)=\lambda^2 p_2(\frac{x-c}{\lambda})$, and the
 differential equation (\ref{Pearson.ord}) yields
 $f_Y'(x)/f_Y(x)=p_1^Y(x)/p_2^Y(x)$, $\widetilde{\alpha}
 <x<\widetilde{\omega}$, where $f_Y$ is the density of $Y$
 and $(\widetilde{\alpha},\widetilde{\omega})$ its
 support. Moreover, it is easy to see that
 $p_1^Y$ and $p_2^Y$ satisfy the requirements
 of (A), i.e., $p_2^Y(\widetilde{\alpha})=0$ if
 $\widetilde{\alpha}>-\infty$, $p_2^Y(\widetilde{\omega})=0$ if
 $\widetilde{\omega}<\infty$ and
 $p_2^Y(x)\neq 0$ for all
 $x\in(\widetilde{\alpha},\widetilde{\omega})$.
 Clearly, in view of Proposition \ref{prop.1}(vi),
 it suffices to verify that $Y$ is Integrated
 Pearson. Thus, from now on (and without any loss of
 generality) we shall assume that $(\alpha,\omega)$
 is one of  the intervals $(0,1)$, $(0,\infty)$ or $\R$.

 If $(\alpha,\omega)=(0,1)$ then the assumptions (A)
 show that $p_2(x)=\theta x(1-x)$ for some $\theta\neq 0$.
 Let $p_1(x)=a_0+a_1 x$. Solving (\ref{Pearson.ord}) we get
 \[
 f(x)=Cx^{a_0/\theta}(1-x)^{-(a_0+a_1)/\theta}, \ \ 0<x<1,
 \]
 where, necessarily, $1+a_0/\theta>0$ and
 $1-(a_0+a_1)/\theta>0$. Thus,
 \[
 (1+a_0/\theta)+(1-(a_0+a_1)/\theta)=(2\theta-a_1)/\theta>0,
 \]
 so that $2\theta-a_1\neq 0$. It follows that $f\sim B(a,b)$
 with $a=1+a_0/\theta$, $b=1-(a_0+a_1)/\theta$ and,
 therefore,
 \[
 q(x)=\frac{x(1-x)}{a+b}=\frac{x(1-x)}{2-a_1/\theta}
 =\frac{p_2(x)}{2\theta-a_1}.
 \]

 Assume that $(\alpha,\omega)=(0,\infty)$. Then,
 assumptions (A) show that the possible forms of
 $p_2$ are either $p_2=\theta x$ or $p_2=\theta x^2$
 or  $p_2=\theta x (x+c)$ for some $\theta\neq 0$ and
 $c>0$. If $p_2=\theta x$ set $p_1=a_0+a_1 x$ and solve
 (\ref{Pearson.ord}) to obtain
 \[
 f(x)=Cx^{a_0/\theta}\exp({a_1 x/\theta}), \ \ x>0,
 \]
 where,
 necessarily, $a_0/\theta>-1$ and $a_1/\theta<0$; thus,
 $X\sim\varGamma(a,\lambda)$ with $a=\frac{a_0}{\theta}-1>0$
 and $\lambda=-\frac{a_1}{\theta}>0$. Therefore, $a_1\neq 0$
 and
 \[
 q(x)=\frac{x}{\lambda}=\frac{p_2(x)}{-a_1}.
 \]
 If $p_2=\theta x^2$, set $p_1=a_0+a_1 x$ and solve
 (\ref{Pearson.ord}) to obtain
 \[
 f(x)=Cx^{a_1/\theta}\exp(-a_0/(\theta x)), \ \ x>0,
 \]
 where,
 necessarily, $a_0/\theta>0$ and $a_1/\theta<-2$;
 these conditions are necessary and sufficient for
 $\int_{0}^{\infty} f(x)\ud{x}$ and $\int_{0}^{\infty}
 xf(x)\ud{x}$ to be finite. Therefore,
 $f(x)=Cx^{-a-1}e^{-\lambda/x}$, $x>0$,
 where $a=-1-\frac{a_1}{\theta}>1$ and
 $\lambda=\frac{a_0}{\theta}>0$. Observe now that $f$ is of
 Reciprocal Gamma type (type 5) and $q(x)=\frac{x^2}{a-1}$.
 Since $a=-1-\frac{a_1}{\theta}>1$
 it follows that $\frac{-a_1-2\theta}{\theta}>0$ and,
 finally, $a_1+2\theta\neq 0$. Thus,
 \[
 q(x)=\frac{x^2}{a-1}=\frac{\theta x^2}{-a_1-2\theta}
 =\frac{p_2(x)}{-a_1-2\theta}.
 \]
 Assume now that
 $p_2=\theta x(x+c)$, $\theta\neq 0$, $c>0$ and let
 $p_1=a_0+a_1 x$. Solving (\ref{Pearson.ord}) we obtain
 \[
 f(x)=Cx^{\frac{a_0}{c\theta}}(x+c)^\frac{a_1c-a_0}{c\theta},
 \ \ x>0,
 \]
 where,
 necessarily, $\frac{a_0}{c\theta}>-1$ and $\frac{a_1}{\theta}<-2$;
 these conditions are necessary and sufficient for
 $\int_{0}^{\infty} f(x)\ud{x}$ and $\int_{0}^{\infty}
 xf(x)\ud{x}$ to be finite. Now observe that $f(x)=C
 x^{b-1}(x+c)^{-a-b}$ ($x>0$)
 is of Snedecor-type (type 6) with
 $a=-\frac{a_1}{\theta}-1>1$ and
 $b=1+\frac{a_0}{c\theta}>0$. From $\frac{a_1}{\theta}<-2$
 we get $a_1+2\theta\neq 0$ and, thus, we conclude that
 (see Table \ref{table densities})
 \[
 q(x)=\frac{x(x+c)}{a-1}=\frac{x(x+c)}{-2-a_1/\theta}
 =\frac{\theta x(x+c)}{-a_1-2\theta}=\frac{p_2(x)}{-a_1-2\theta}.
 \]

 Finally, assume that $(\alpha,\omega)=\R$. In this case
 assumptions (A) imply that either $p_2\equiv \theta\neq 0$
 or $p_2=\pm(\theta(x-c)^2+\lambda)$ with $\theta>0$,
 $\lambda>0$ and $c\in\R$. Assume first that $p_2\equiv\theta\neq0$
 and let $p_1=a_0+a_1x$. Then, it is easily seen from
 (\ref{Pearson.ord}) that
 \[
 f(x)=C\exp\left(\frac{a_1}{2\theta}x^2+\frac{a_0}{\theta}x\right), \ \ x\in\R.
 \]
 This can represents a density if and only if
 $\frac{a_1}{2\theta}<0$; in this case it is easily seen
 that $f\sim N(\mu,\sigma^2)$ with $\mu=\frac{-a_0}{a_1}$,
 $\sigma=\sqrt{-\frac{\theta}{a_1}}$, and thus,
 \[
 q(x)\equiv\sigma^2=-\frac{\theta}{a_1}=\frac{p_2(x)}{-a_1}.
 \]
 For the last remaining case it suffices to consider
 \[
 p_2(x)=\theta(x-c)^2+\lambda
 \ \ \mbox{and} \ \
 p_1(x)=a_0+a_1(x-c) \ \
 \mbox{where} \  \theta>0,
 \ \lambda>0 \ \mbox{and}
 \ \ a_0,a_1,c\in\R.
 \]
 Also, using the transformation $X\mapsto X-c$,
 the general case is simplified to $p_2=\theta x^2+\lambda$
 and $p_1=a_0+a_1 x$. Now, the differential equation (\ref{Pearson.ord})
 has the general solution
 \[
 f(x)=C(\theta x^2+\lambda)^{\frac{a_1}{2\theta}}\exp\left[\frac{a_0}{\sqrt{\theta\lambda}}
 \tan^{-1}(x\sqrt{\theta/\lambda})\right], \ \ x\in\R.
 \]
 The
 necessary and sufficient condition for this $f$ to
 represent a density with finite mean is
 $-\frac{a_1}{2\theta}-1>0$ or, equivalently, $a_1+2\theta<0$.
 Therefore, setting
 \[
 \delta=\frac{\theta}{-a_1-2\theta}>0, \ \
 \gamma=\frac{\lambda}{-a_1-2\theta}>0 \ \
  \mbox{and} \ \
 \mu=\frac{a_0}{-a_1-2\theta}\in\R
 \]
 we see that this
 is a Student-type density (type $4$); see Table \ref{table densities}.
 Consequently,
 \[
 q(x)=\delta x^2+\gamma
 =\frac{\theta
 x^2+\lambda}{-a_1-2\theta}=\frac{p_2(x)}{-a_1-2\theta},
 \]
 and the proof is complete.
 \end{proof}

 Eventually, Proposition \ref{prop.alg2}
 says that for a particular choice of $p_2$ to be correct
 it is necessary and sufficient that $p_2$ remains
 nonzero in $(\alpha,\omega)$ and vanishes
 at all (if any)
 finite endpoints of $(\alpha,\omega)$.

 If the mean $\mu$ is known, then another
 simple criterion for an ordinary Pearson variate to
 belong to the Integrated Pearson family is provided by
 the following proposition.

 \begin{prop}
 \label{prop.2}
 Let $X$ be a random variable with density $f$ and finite mean
 $\mu$. Assume that the set $\{x:f(x)>0\}$ is the
 (bounded or unbounded) interval
 $J(X)=(\alpha,\omega)$ and that $f$ is differentiable in
 $(\alpha,\omega)$ with derivative
 $f'(x)$, $\alpha<x<\omega$.
 Then the following are equivalent:
 \begin{itemize}
 \item[(A)] $X\sim\IPq$.
 \item[(B)] The density $f$ satisfies (\ref{Pearson.ord})
 and the polynomials $p_1$ ($p_1\equiv0$ is allowed)
 and $p_2$ can be chosen in such a way that (i) and (ii),
 below, hold:
 \begin{itemize}
 %\item[(i)] $p_2(x)\neq 0$ for $\alpha<x<\omega$,
 %\item[(ii)] $p_2(\alpha)=0$ if $\alpha>-\infty$ and,
 %similarly, $p_2(\omega)=0$ if $\omega<\infty$,
 \item[(i)] there exist a constant $\theta\neq 0$
 such that $p_1(x)+p_2'(x)=(\mu-x)/\theta$, $x\in\R$, and
 \item[(ii)] either $\lim_{x\searrow\alpha} p_2(x)f(x)=0$
 or $\lim_{x\nearrow\omega} p_2(x)f(x)=0$.
 \end{itemize}
 \end{itemize}
 If (i) and (ii) are true then the polynomials $p_2$ and $q$
 are related through
 $q(x)=\theta p_2(x)$ where $\theta\neq 0$ is as in (i).
 Moreover, if (\ref{Pearson.ord}) is satisfied in an {\it
 unbounded} interval $(\alpha,\omega)$ then (ii)
 is unnecessary since it
 is implied by (i).
 \end{prop}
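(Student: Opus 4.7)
My plan is to establish (A)$\Leftrightarrow$(B) by elementary calculus: differentiate (\ref{qua1}) for (A)$\Rightarrow$(B), and integrate the Pearson ODE (\ref{Pearson.ord}) for (B)$\Rightarrow$(A). The relation $q=\theta p_2$ will drop out of either direction, with $\theta$ pinned down by condition~(i).

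For (A)$\Rightarrow$(B), I would differentiate the defining identity $\int_{-\infty}^{x}(\mu-t)f(t)\,\ud t=q(x)f(x)$ to obtain $(\mu-x)f(x)=q'(x)f(x)+q(x)f'(x)$ on $J=(\alpha,\omega)$. Since $f>0$ on $J$ by Proposition~\ref{prop.1}(i), this rearranges to (\ref{Pearson.ord}) with $p_2=q$, $p_1=\mu-x-q'$, and $\theta=1$; condition~(i) is then immediate. For (ii) at a finite endpoint, passing to the limit in (\ref{qua1}) yields $\lim_{x\searrow\alpha}q(x)f(x)=0$ (and similarly at $\omega$), since $f$ vanishes off $J$; at an infinite endpoint, the asymptotic argument in the last paragraph below delivers the same conclusion.

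For (B)$\Rightarrow$(A), the key observation is that (i) together with (\ref{Pearson.ord}) rewrites as the product identity $\theta(p_2 f)'(x)=\theta(p_2'f+p_2 f')(x)=\theta(p_1+p_2')(x)\,f(x)=(\mu-x)f(x)$ on $(\alpha,\omega)$. If the vanishing hypothesis in (ii) holds at~$\alpha$, I would integrate this from $\alpha$ to $x$, obtaining $\theta p_2(x)f(x)=\int_{\alpha}^{x}(\mu-t)f(t)\,\ud t=\int_{-\infty}^{x}(\mu-t)f(t)\,\ud t$ (using $f\equiv 0$ off $J$), which is exactly (\ref{qua1}) with $q=\theta p_2$. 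If instead the vanishing is at~$\omega$, I would integrate from $x$ to $\omega$ and use $\int_{\R}(\mu-t)f(t)\,\ud t=\mu-\E X=0$ to reach the same conclusion.

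The step I expect to be the main obstacle is the final ``moreover'' assertion: when $(\alpha,\omega)$ is unbounded, (ii) is already forced by (i) and the standing hypotheses. The identity $\theta(p_2 f)'=(\mu-x)f$ together with $\E|X|<\infty$ makes $p_2 f$ the indefinite integral of an $L^{1}$ function, so $p_2 f$ admits finite limits at both endpoints. It remains to rule out a nonzero limit at an infinite endpoint: if $\omega=+\infty$ and $\lim_{x\to\infty}p_2(x)f(x)=L\neq 0$, then $f(x)\sim L/p_2(x)$ as $x\to\infty$, and since $\deg p_2\le 2$ this tail is too heavy---if $\deg p_2\le 1$ then $\int f=\infty$, while if $\deg p_2=2$ then $\int xf(x)\,\ud x=\infty$---contradicting either $\int f=1$ or $\E|X|<\infty$. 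Hence $L=0$, and the symmetric argument handles $\alpha=-\infty$.
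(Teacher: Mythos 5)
Your proof of the equivalence (A)$\Leftrightarrow$(B) is correct and follows essentially the same route as the paper: for (A)$\Rightarrow$(B) you differentiate (\ref{qua1}) to get (\ref{Pearson.ord}) with $p_2=q$, $p_1=\mu-x-q'$, $\theta=1$, and obtain (ii) by passing to the limit in (\ref{qua1}); for (B)$\Rightarrow$(A) you integrate $\theta(p_2f)'=(\mu-t)f(t)$ from the endpoint at which the vanishing in (ii) is assumed, using $\int_{\R}(\mu-t)f(t)\,\ud t=0$ for the $\omega$-side case. This is exactly the paper's argument (the paper is slightly more careful in integrating over compact $[x,y]\subset(\alpha,\omega)$ first and then invoking dominated convergence, but your version is the same in substance).

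Where you genuinely diverge is the final ``moreover'' clause. The paper argues directly from the differential equation: $f'=p_1f/p_2$ forces $f$ to be eventually monotone near an infinite endpoint, whence the estimate $x^2f(x)\le \tfrac{8}{3}\int_{x/2}^{\infty}tf(t)\,\ud t\to 0$ gives $f(x)=o(x^{-2})$ and hence $p_2(x)f(x)\to 0$ outright, without using (i). You instead use (i) to see that $\theta p_2f$ is an antiderivative of the integrable function $(\mu-x)f$, so $p_2f$ has a finite limit $L$ at each endpoint, and then rule out $L\neq 0$ at an infinite endpoint by the tail-divergence contradiction ($\int f=\infty$ if $\deg p_2\le 1$, $\int xf=\infty$ if $\deg p_2=2$). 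Both arguments are valid and rest on the same hypotheses ($\E|X|<\infty$, $\deg p_2\le 2$); yours is arguably cleaner in that it avoids the monotonicity step, while the paper's has the small advantage of showing the stronger pointwise decay $f(x)=o(x^{-2})$, which it reuses elsewhere (e.g.\ in the proof of Lemma \ref{lem.star}). No gaps.
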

 \begin{proof}
 If $X\sim\IPq$ then we see from (\ref{Pearson})
 that (\ref{Pearson.ord}) is satisfied for the polynomials
 $p_1(x)=\mu-x-q'(x)$ and $p_2(x)=q(x)$. With this choice
 of $p_1$, $p_2$, Proposition \ref{prop.1} shows that (i) (with $\theta=1$)
 is valid. Also,
 (ii) reduces to $p_2(x)f(x)=q(x)f(x)\to 0$ as $x\nearrow\omega$ or
 $x\searrow\alpha$; this follows by an obvious application of
 dominated convergence since the mean exists and, by
 assumption, $p_2(x)f(x)=q(x)f(x)=\int_{\alpha}^x (\mu-t)f(t)\ud{t}$ --
 see (\ref{qua1}). Conversely, (\ref{Pearson.ord})
 and (i) imply that $[\theta p_2(t)f(t)]'= (\mu-t)f(t)$,
 $\alpha<t<\omega$. Integrating this
 equation over the interval $[x,y]\subset (\alpha,\omega)$
 we get
 \be
 \label{double}
 \int_{x}^y (\mu-t)f(t)\ud{t} = \theta p_2(y)f(y)
 -\theta p_2(x) f(x), \ \ \ \alpha<x<y<\omega.
 \ee
 Now, let us take into account the first assumption in (ii),
 $\lim_{x\searrow\alpha} p_2(x)f(x)=0$.
 Taking limits in (\ref{double}) and using dominated convergence
 for the l.h.s.\
 we conclude that
 \[
 \int_{\alpha}^y (\mu-t)f(t)\ud{t} = {\theta} p_2(y)f(y),
 \ \ \ \alpha<y<\omega;
 \]
 that is, $X\sim\IPq$ with $q(x)=\theta p_2(x)$. Clearly
 we get the same conclusion if we use the second assumption in (ii),
 $\lim_{y\nearrow\omega} p_2(y)f(y)=0$,
 and evaluate  the limits as $y\nearrow\omega$ in (\ref{double});
 in this case we get the identity
 %\[
 $\int_{x}^\omega (t-\mu)f(t)\ud{t} = {\theta} p_2(x)f(x)=q(x)f(x)$,
 %\ \ \
 $\alpha<x<\omega$,
 %\]
 which is equivalent to (\ref{qua1}),
 since $\int_{\alpha}^{\omega} (\mu-t)f(t)\ud{t}=0$.

 It is clear that, in the presence of (i), both assumptions in (ii) are
 equivalent. In fact, (\ref{double}) shows that both limits
 $\lim_{y\nearrow\omega}p_2(y)f(y)$ and $\lim_{x\searrow\alpha}p_2(x)f(x)$
 exist (in $\R$) and are equal. Indeed,
 \[
 {\theta}p_2(y)f(y)={\theta}p_2(x)f(x)+\int_{x}^y (\mu-t) f(t)\ud{t}, \ \
 \ \alpha<x<y<\omega,
 \]
 and the existence of the first moment implies that, as $y\nearrow\omega$,
 the r.h.s.\ has the well-defined finite limit
 $C(x)={\theta}p_2(x)f(x)+\int_{x}^{\omega}
 (\mu-t) f(t)\ud{t}$; the l.h.s, however, is independent of
 $x$ and, certainly, the same is true for its limit, so that
 $C(x)\equiv C$. In other words,
 \[
 {\theta}p_2(x)f(x)=C+\int_{x}^{\omega}
 (t-\mu)f(t)\ud{t}, \ \ \ \alpha<x<\omega,
 \]
 and since $\lim_{x\searrow\alpha}\int_{x}^{\omega}
 (t-\mu)f(t)\ud{t}=\int_{\alpha}^{\omega} (t-\mu)f(t)\ud{t}=0$ we
 conclude that
 \[
 \lim_{x\searrow\alpha}p_2(x)f(x)=\lim_{y\nearrow\omega}p_2(y)f(y)
 =\frac{C}{\theta}\in\R.
 \]
 It remains to verify that if (\ref{Pearson.ord}) holds in an unbounded
 interval $(\alpha,\omega)$ and $X$ has finite first moment then (i) implies (ii).
 To this end assume that $\omega=\infty$ so that $J(X)=(\alpha,\infty)$
 with $\alpha\in[-\infty,\infty)$. It follows that $f'(x)=p_1(x)f(x)/p_2(x)$
 does not change sign for large enough $x$, and thus,
 $f'(x)<0$ for $x>x_0$. Therefore,
 for $x>\max\{2x_0,0\}$,
 \[
 0<x^2 f(x)=\frac{8}{3}f(x)
 \int_{x/2}^{x}t\ud{t}< \frac{8}{3}\int_{x/2}^{x} tf(t)\ud{t}<\frac{8}{3}\int_{x/2}^{\infty}
 tf(t)\ud{t}\to 0,
 \]
 as
 $x\to\infty$, i.e.\ $f(x)=o(x^{-2})$ as $x\to\infty$.
 Thus, $p_2(x)f(x)\to 0$ as $x\to\infty$. The case $\alpha=-\infty$
 is similar and the proof is complete.
 \end{proof}

 \section{Are the Rodrigues-type polynomials orthogonal
 in the ordinary Pearson system?}
 \label{sec4}
 Associated with any Pearson density $f$
 is a (unique) sequence of polynomials,
 defined by a Rodrigues-type formula.
 Actually, these polynomials are by-products of the pair
 $(p_1,p_2)$ that appears in the nominator and the
 denominator of the differential equation
 (\ref{Pearson.ord}); that is, they have nothing to do
 either with $f$ or with the interval
 $(\alpha,\omega)$.

 These considerations will become more clear
 if we slightly relax the form of differential equation
 (\ref{Pearson.ord}) and permit more solutions,
 as follows:

 \begin{defi}
 \label{def.compatible}
 Let $\varnothing\neq(\alpha,\omega)\subseteq \R$, and consider
 a pair of real polynomials $(p_1,p_2)=(a_0+a_1 x,b_0+b_1x+b_2 x^2)$
 such that $p_2\not\equiv0$ (i.e., $|{b}_0|+|{b}_1|+|{b}_2|>0$).
 The pair $(p_1,p_2)$ is called {\it Pearson-compatible}
 in $(\alpha,\omega)$, or simply {\it compatible},
 if there exists a differentiable function $f:(\alpha,\omega)\to\R$,
 $f\not\equiv 0$
 ($f$ is not assumed nonnegative or integrable),
 %differentiable in $(\alpha,\omega)$,
 such that the following {\it generalized {Pearson} differential
 equation} is fulfilled:
 \be
 \label{Pearson2}
 p_2(x) f'(x)=p_1(x)f(x), \ \ \ \alpha<x<\omega.
 \ee
 In other words, $(p_1,p_2)$ is compatible if (\ref{Pearson2})
 has non-trivial solutions for $f$.
 \end{defi}

 It is easily seen that $(p_1,p_2)$ is compatible
 whenever $p_2$ has no roots in $(\alpha,\omega)$; in this case,
 the general solution $f$ is $C^{\infty}(\alpha,\omega)$ and can be
 chosen to be strictly positive in $(\alpha,\omega)$.
 The presence of a zero of $p_2$ in $(\alpha,\omega)$,
 however, may results in incompatibility; e.g.,
 in the interval $(\alpha,\omega)=(-2,2)$ the pair
 $(p_1,p_2)=(4x,x^2-1)$ is compatible,
 in contrast to the pair $(p_1,p_2)=(x,x^2-1)$.

 If $(p_1,p_2)$ is compatible in $(\alpha,\omega)$ then
 we can find the general solution as follows: First we
 solve (\ref{Pearson2}) separately in any open
 subinterval of $(\alpha,\omega)\cap \{x:p_2(x)\neq 0\}$;
 clearly, there are at most three subintervals and,
 in the worst case, the three general solutions
 for the distinct intervals
 $(J_1,J_2,J_3)=((\alpha,\rho_1),(\rho_1,\rho_2),(\rho_2,\omega))$
 will be of the form $f_i=C_ie^{g_i}$ for some $g_i\in C^{\infty}(J_i)$,
 $i=1,2,3$, with $C_i$ being arbitrary constants.
 Next, we match the solutions and their first derivatives at
 the common endpoints of any two $J_i$; any such point is,
 necessarily, a zero of $p_2$. The compatibility of $(p_1,p_2)$
 guarantees that this procedure will success
 in producing
 some solution
 $f\not\equiv0$ (in which case, $|f|\geq 0$ will
 be also a non-trivial solution),
 but it may happen that $f_i\equiv 0$
 in some $J_i$. The following proposition describes
 all possible cases for the support of $f$.
 \begin{prop}
 \label{prop.support}
 Assume that the function $f:(\alpha,\omega)\to\R$,
 $f\not\equiv 0$ (not necessarily positive or integrable)
 is differentiable in $(\alpha,\omega)$ and satisfies the
 differentiable equation (\ref{Pearson2})
 for some real polynomials $p_1(x)=a_0+a_1x$
 and $p_2(x)=b_0+b_1x+b_2 x^2$ with
 $|{b}_0|+|{b}_1|+|{b}_2|>0$.
 Then, the support of
 $f$, $S(f):=\{x\in(\alpha,\omega):f(x)\neq 0\}$,
 is either of the form
 $(\widetilde{\alpha},\widetilde{\omega})\subseteq
 (\alpha,\omega)$ with
 $\alpha\leq\widetilde{\alpha}<\widetilde{\omega}\leq\omega$,
 or of the form $(\widetilde{\alpha},\rho_1)
 \cup(\rho_2,\widetilde{\omega})\subseteq (\alpha,\omega)$
 with $\alpha\leq\widetilde{\alpha}<\rho_1\leq\rho_2
 <\widetilde{\omega}\leq\omega$,
 or, finally, of the form
 $(\alpha,\rho_1)\cup(\rho_1,\rho_2)\cup(\rho_2,\omega)$,
 with $\alpha<\rho_1<\rho_2<\omega$.
 Moreover, the boundary of
 $S(f)$ is contained  in the set
 $\{\alpha,\omega\}\cup \{x\in (\alpha,\omega):p_2(x)=0\}$,
 that is, $\partial S(f)\subseteq\{\alpha,\omega\}\cup \{x\in
 (\alpha,\omega):p_2(x)=0\}$. Finally, for any solution
 $f$, $f(\rho)=0$ (that is, $\rho\notin S(f)$)
 whenever $\rho$ is a zero of $p_2$ which is not a
 zero of $p_1$.
 \end{prop}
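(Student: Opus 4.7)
The plan is to analyze $S(f)$ by solving the ODE (\ref{Pearson2}) explicitly on each component of $(\alpha,\omega)$ where $p_2$ does not vanish. Let $Z:=\{x\in(\alpha,\omega):p_2(x)=0\}$; since $\deg p_2\leq 2$, $Z$ has at most two elements, so $Z=\varnothing$, $Z=\{\rho_1\}$, or $Z=\{\rho_1,\rho_2\}$ with $\rho_1<\rho_2$. The complement $(\alpha,\omega)\setminus Z$ decomposes into $k\in\{1,2,3\}$ open subintervals $J_1,\ldots,J_k$. On each $J_i$ the coefficient $p_1/p_2$ is smooth, and the linear first-order ODE (\ref{Pearson2}) admits the general solution $f(x)=C_i\exp(G_i(x))$, where $G_i$ is any smooth antiderivative of $p_1/p_2$ on $J_i$ and $C_i\in\R$. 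Consequently, on each $J_i$ the function $f$ is either identically zero (when $C_i=0$) or never vanishes (when $C_i\neq 0$).

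Next, I would observe that at any $\rho\in Z$ the equation (\ref{Pearson2}) reads $0=p_2(\rho)f'(\rho)=p_1(\rho)f(\rho)$, so $f(\rho)=0$ whenever $p_1(\rho)\neq 0$; this immediately establishes the last assertion of the proposition. Since $f\not\equiv 0$, at least one $C_i$ is nonzero, and writing $I:=\{i:C_i\neq 0\}\neq\varnothing$ one gets $S(f)\cap\bigcup_i J_i=\bigcup_{i\in I}J_i$, while membership of each $\rho\in Z$ in $S(f)$ is an additional binary choice, forced to ``no'' whenever $p_1(\rho)\neq 0$.

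With this dichotomy in hand, the three stated forms are produced by a direct enumeration. If $Z=\varnothing$, then $I=\{1\}$ forces $S(f)=(\alpha,\omega)$, i.e.\ the first form. If $|Z|=1$, say $Z=\{\rho\}$ with $J_1=(\alpha,\rho)$, $J_2=(\rho,\omega)$, the possibilities $I=\{1\},\{2\},\{1,2\}$ combined with the yes/no choice at $\rho$ yield either the first form $(\widetilde{\alpha},\widetilde{\omega})$ or the second form with $\rho_1=\rho_2=\rho$. If $|Z|=2$, the analogous enumeration over the nonempty subsets $I\subseteq\{1,2,3\}$, together with the binary choices at $\rho_1$ and $\rho_2$, yields all three forms; the third form arises precisely when $I=\{1,2,3\}$ and neither $\rho_1$ nor $\rho_2$ lies in $S(f)$.

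Finally, the boundary assertion $\partial S(f)\subseteq\{\alpha,\omega\}\cup Z$ is immediate from the dichotomy: any $x_0\in(\alpha,\omega)$ with $p_2(x_0)\neq 0$ lies in some $J_i$ on which $f$ is either identically zero or everywhere nonzero, so a whole neighborhood of $x_0$ is contained either in $S(f)$ or in its complement, and thus $x_0\notin\partial S(f)$. The main task is really just the careful case book-keeping at the zeros of $p_2$; beyond this, the argument is a routine integration of a first-order linear ODE and involves no delicate analytic subtleties.
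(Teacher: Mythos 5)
Your proof is correct and follows essentially the approach the paper itself indicates: the paper states Proposition \ref{prop.support} without a formal proof, but the paragraph preceding it outlines exactly your strategy (solve $f'=(p_1/p_2)f$ separately on the at most three components of $(\alpha,\omega)\cap\{p_2\neq 0\}$ to get $f=C_ie^{g_i}$, hence the zero/nonvanishing dichotomy on each piece, then match at the zeros of $p_2$), and your case enumeration plus the observation that $0=p_2(\rho)f'(\rho)=p_1(\rho)f(\rho)$ forces $f(\rho)=0$ when $p_1(\rho)\neq 0$ correctly fills in that outline. One cosmetic remark: calling membership of $\rho\in Z$ in $S(f)$ a ``binary choice'' is slightly loose, since $f(\rho)$ is in fact determined by continuity from the adjacent branches; but as you only need to enumerate both a priori possibilities to classify the support, this does not affect the argument.
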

 \begin{cor}
 \label{cor.support}
 The differential equation (\ref{Pearson2}) has a
 nontrivial and nonnegative solution if and only if
 the pair $(p_1,p_2)$ is compatible in $(\alpha,\omega)$.
 Moreover, assuming that $(p_1,p_2)$ is compatible in
 $(\alpha,\omega)$, it follows that:
 \begin{itemize}
 \item[(a)] any nonnegative solution is of the form $|f|$ for some solution
 $f$;
 \item[(b)] the support $S(f)=\{x\in(\alpha,\omega):f(x)\neq 0\}$
 of any nontrivial solution $f$ of (\ref{Pearson2})
 is a union of one, two or three disjoint open intervals of positive length,
 and the same is true for any nonnegative and nontrivial solution;
 \item[(c)] the boundary points of $S(f)=S(|f|)$ of any nontrivial solution $f$
 of (\ref{Pearson2}) are either roots of $p_2$ or boundary points of
 $(\alpha,\omega)$.
 \end{itemize}
 \end{cor}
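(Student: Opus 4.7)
The plan is to read Corollary \ref{cor.support} as a packaging of Proposition \ref{prop.support}, with the single extra ingredient being that the absolute value of a nontrivial solution of (\ref{Pearson2}) is again a solution. I would start by dispatching the easy items: the forward implication of the iff is tautological, since a nontrivial nonnegative solution is, in particular, a nontrivial solution, so $(p_1,p_2)$ is compatible by Definition \ref{def.compatible}; parts (b) and (c) are then direct transcriptions of the support description and of the inclusion $\partial S(f)\subseteq\{\alpha,\omega\}\cup\{x:p_2(x)=0\}$ already established in Proposition \ref{prop.support}, applied both to $f$ and to $|f|$ (whose support coincides with $S(f)$). Once we know that $|f|$ is a solution whenever $f$ is, part (a) is also immediate, since every nonnegative solution $g$ trivially equals $|g|$.

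The heart of the argument is therefore to show that $|f|$ is a solution of (\ref{Pearson2}) whenever $f$ is, which simultaneously supplies the nontrivial half of the iff. I would invoke Proposition \ref{prop.support}: the open set $(\alpha,\omega)\cap\{x:p_2(x)\neq 0\}$ decomposes into at most three open subintervals, and on each of them $f$ is either identically zero or of the form $Ce^{g}$ with $C\neq 0$ and $g\in C^{\infty}$, hence of constant sign. Consequently $|f|$ coincides with a locally constant sign choice of $\pm f$ on each such subinterval, is $C^{1}$ there, and satisfies $p_{2}|f|'=p_{1}|f|$; at the remaining points, which are the finitely many roots of $p_{2}$ lying in $(\alpha,\omega)$, the final assertion of Proposition \ref{prop.support} gives $f(\rho)=0$ and hence $|f|(\rho)=0$, so the ODE is satisfied there in the trivial form $0=0$.

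The step I expect to be the main obstacle is verifying that $|f|$ is actually differentiable at each interior root $\rho$ of $p_{2}$, i.e.\ that the one-sided derivatives of $|f|$ at $\rho$ agree. When $f\equiv 0$ on one side of $\rho$ (which covers the one- and two-component support cases), the assumption $f\in C^{1}(\alpha,\omega)$ immediately forces $f'(\rho)=0$, and differentiability of $|f|$ at $\rho$ follows at once. The remaining case, corresponding to a three-component support in which both punctured one-sided neighbourhoods of $\rho$ lie in $S(f)$, requires a local analysis of (\ref{Pearson2}) near $\rho$: writing $p_{2}(x)\sim p_{2}'(\rho)(x-\rho)$ and $p_{1}(x)\sim p_{1}(\rho)$, one reads off $|f(x)|\sim C|x-\rho|^{p_{1}(\rho)/p_{2}'(\rho)}$, and the hypothesis that $f$ is differentiable at $\rho$ together with the matching of the two sides dictated by compatibility forces the exponent to be large enough that $|f|$ is also differentiable at $\rho$ with vanishing derivative. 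This local calculation at each interior root of $p_{2}$ is the only nonroutine piece of the proof; once it is in place, Corollary \ref{cor.support} follows from Proposition \ref{prop.support} by direct bookkeeping.
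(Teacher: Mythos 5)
Your reduction is essentially the only one available: the paper states Corollary \ref{cor.support} without proof, and the key claim that $|f|$ solves (\ref{Pearson2}) whenever $f$ does appears only as a parenthetical remark in the discussion preceding Proposition \ref{prop.support}. Parts (a), (b), (c) and the ``only if'' direction are handled correctly and exactly as intended. The genuine gap is in the step you yourself flag as the main obstacle, and it cannot be closed as you propose. At an interior simple root $\rho$ of $p_2$ with $f$ nonzero on both punctured sides, the local solutions are $f(x)=C_{\pm}\,(x-\rho)^{\kappa}e^{H(x)}$ with $\kappa=p_1(\rho)/p_2'(\rho)$ and $H$ smooth; differentiability of $f$ at $\rho$ forces $\kappa\geq 1$, but in the borderline case $\kappa=1$ the matching of one-sided derivatives forces $f(x)=c\,(x-\rho)e^{H(x)}$ near $\rho$, so $f$ changes sign there and $|f|$ has a corner. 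Concretely, take $(p_1,p_2)=(1,x)$ on $(\alpha,\omega)=(-1,1)$: every solution of $xf'=f$ that is differentiable at $0$ is $f(x)=cx$, so the pair is compatible, yet $|f|(x)=|c|\,|x|$ is not differentiable at $0$ and the only nonnegative solution is $f\equiv 0$. Thus it is not merely that your argument for ``the exponent is large enough'' is incomplete; the assertion that $|f|$ is always a solution (and with it the ``if'' direction of the equivalence, read literally with ``solution'' meaning differentiable on all of $(\alpha,\omega)$) fails in this $\kappa=1$ case. Your argument does go through whenever $\kappa\neq 1$ at every interior root of $p_2$ lying in the closure of $S(f)$ on both sides, so an honest proof must either exclude or separately discuss this degenerate configuration.

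A second, more minor, inaccuracy: you invoke the final assertion of Proposition \ref{prop.support} to conclude $f(\rho)=0$ at \emph{every} interior root $\rho$ of $p_2$, but that assertion applies only to roots of $p_2$ that are not roots of $p_1$. At a common root (e.g.\ $(p_1,p_2)=(-x-1,\,x+1)$ with $f(x)=Ce^{-x}$) one has $f(\rho)\neq 0$. This case is harmless for your purposes --- $f$ then has constant sign near $\rho$, so $|f|=\pm f$ locally and is differentiable --- but it belongs to the ``constant sign'' branch of your case analysis, not to the ``$f(\rho)=0$'' branch where you placed it.
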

 %\noindent
 We now turn to the corresponding Rodrigues polynomials.
 It is well-known that
 the (generalized) Pearson differential equation
 (\ref{Pearson2}) produces a sequence of polynomials
 $\{h_{k}, k=1,2,\ldots\}$, defined by a
 {\it Rodrigues-type formula},
 as follows:
 %%%\pagebreak

 \begin{theo}
 %[\textrm{\cite[p.\ 401]{Hild}, \cite[pp.\ 99-100]{Beale2}, \cite[p.\ 295]{DZ}}]
 [Hildebrandt (1931), p.\ 401; Beale (1941), pp.\ 99--100;
 Diaconis and Zabell (1991), p.\ 295]
 \label{theo.polynomials}
 Assume that a function $f:(\alpha,\omega)\to\R$
 (not necessarily positive or integrable)
 does not vanish identically in $(\alpha,\omega)$ and
 satisfies the differential equation (\ref{Pearson2})
 %\be
 %\label{Pearson2-b}
 %p_2(x) f'(x)=p_1(x)f(x), \ \ \ \alpha<x<\omega,
 %\ee
 for some polynomials $p_1(x)=a_0+a_1x$
 and $p_2(x)=b_0+b_1x+b_2 x^2$, with
 $|{b}_0|+|{b}_1|+|{b}_2|>0$.
 Then, the set $\{x\in(\alpha,\omega):f(x)\neq 0\}$
 contains some interval of positive length
 and the function
 \be
 \label{polynomials}
 h_k(x):=\frac{1}{f(x)}\frac{\ud^k}{\ud{x}^k} [p_2^{k}(x)f(x)], \
 \ \ x\in (\alpha,\omega)\smallsetminus\{x:f(x)=0\}, \ \ \  k=0,1,2,\ldots
 \ee
 is a polynomial (more precisely, $h_k$ is
 the restriction in $(\alpha,\omega)\smallsetminus\{x:f(x)=0\}$
 of a polynomial $\widetilde{h}_k:\R\to\R$)
 with
 \be
 \label{lead1}
 \deg(h_k)\leq k\quad \textrm{and}
 \quad \lead(h_k)=\prod_{j=k+1}^{2k}({a}_1+j{b}_2),
 \quad k=0,1,2,\ldots,
 \ee
 where $\lead(h_k):=\lim_{x\to\infty}\widetilde{h}_k(x)/x^k$
 denotes the coefficient of $x^k$ in
 $h_k(x)$.
 \end{theo}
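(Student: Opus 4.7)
The plan is to prove the statement by induction on the order of differentiation and read off the degree and leading coefficient from a recursive description of a family of polynomials $P_{k,j}$ that arise along the way. First, since $f$ is differentiable on $(\alpha,\omega)$ and $f\not\equiv 0$, the set $\{x\in(\alpha,\omega):f(x)\neq 0\}$ is a nonempty open subset of $(\alpha,\omega)$ and therefore contains an interval of positive length, which is the first assertion of the theorem.

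Next I would define polynomials $P_{k,j}$ ($0\leq j\leq k$) recursively by $P_{k,0}\equiv 1$ and
\[
P_{k,j+1}(x):=P_{k,j}'(x)\,p_2(x)+P_{k,j}(x)\bigl[(k-j)p_2'(x)+p_1(x)\bigr],
\]
and prove by induction on $j$ that
\[
\frac{\ud^j}{\ud x^j}\bigl[p_2^k(x)f(x)\bigr]=P_{k,j}(x)\,p_2^{k-j}(x)\,f(x),\qquad x\in(\alpha,\omega).
\]
The base case is trivial. For the inductive step I differentiate the right-hand side via Leibniz, collect the three resulting terms, and use $p_2 f'=p_1 f$ to substitute $P_{k,j}p_2^{k-j}f'=P_{k,j}p_2^{k-j-1}p_1 f$---an identity between functions on all of $(\alpha,\omega)$, requiring no division by $p_2$---and factor out $p_2^{k-j-1}f$ to recognize $P_{k,j+1}$. (The existence of the higher derivatives of $p_2^k f$ is automatic, because the inductive identity expresses them as polynomial multiples of $f$, and $f$ is differentiable by assumption.) Setting $j=k$ gives $(p_2^k f)^{(k)}(x)=P_{k,k}(x)\,f(x)$ for all $x\in(\alpha,\omega)$, so $h_k(x)=P_{k,k}(x)$ wherever $f(x)\neq 0$; the polynomial $\widetilde h_k:=P_{k,k}$ is thus the promised extension.

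The degree bound $\deg P_{k,j}\leq j$ is immediate by induction on $j$: $\deg(P_{k,j}'p_2)\leq(j-1)+2=j+1$ and $\deg(P_{k,j}\,[(k-j)p_2'+p_1])\leq j+1$. For the leading coefficient, let $L_j$ denote the coefficient of $x^j$ in $P_{k,j}$ (possibly zero). The $x^{j+1}$-coefficient of $P_{k,j+1}$ picks up $jL_j b_2$ from $P_{k,j}'p_2$ and $L_j\bigl[2(k-j)b_2+a_1\bigr]$ from $P_{k,j}\cdot[(k-j)p_2'+p_1]$, yielding the clean recurrence
\[
L_{j+1}=L_j\bigl[(2k-j)b_2+a_1\bigr],\qquad L_0=1.
\]
Iterating from $j=0$ to $j=k-1$ and reindexing via $m=2k-j$ gives $\lead(h_k)=L_k=\prod_{m=k+1}^{2k}(a_1+m\,b_2)$, which is exactly (\ref{lead1}).

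The main obstacle is not conceptual but rather the verification that the algebraic recurrence obtained from Leibniz does \emph{not} require dividing by $p_2$, so that the identity for $(p_2^k f)^{(j)}$ holds globally on $(\alpha,\omega)$ and not only on $\{p_2\neq 0\}$; this is precisely what allows $h_k$ to be identified with the single polynomial $P_{k,k}$ on the whole set $\{f\neq 0\}$, without any separate treatment at zeros of $p_2$. The remaining work is pure bookkeeping on the recurrence for $L_j$.
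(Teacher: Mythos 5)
Your proof is correct. The paper does not actually prove this theorem---it is cited from Hildebrandt (1931), Beale (1941) and Diaconis--Zabell (1991), with only the remark that $p_2f'=p_1f$ implies $D^k[p_2^kf]=\widetilde{h}_kf$ for inductively defined polynomials $\widetilde{h}_k$ of degree at most $k$---and your induction on the order of differentiation, via the intermediate identity $(p_2^kf)^{(j)}=P_{k,j}\,p_2^{k-j}f$ and the leading-coefficient recurrence $L_{j+1}=L_j\bigl(a_1+(2k-j)b_2\bigr)$, is exactly that classical argument carried out in full; in particular, your observation that the substitution $p_2^{k-j}f'=p_2^{k-j-1}p_1f$ only requires $k-j\geq 1$, which holds at every step of the induction, correctly handles the identity on all of $(\alpha,\omega)$ without dividing by $p_2$.
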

 Hildebrandt (1931)
 %\cite{Hild},
 actually showed that the relation $p_2 f'= p_1 f$
 implies that $D^k[p_2^k f]=\widetilde{h}_k f$,
 \mbox{$k=0,1,2,\ldots$,} where the polynomials $\widetilde{h}_k$ (with
 $\deg(\widetilde{h}_k)\leq k$) are defined inductively.
 Each polynomial $\widetilde{h}_k$ can be viewed as the
 value of a functional ${\cal R}_k$ that maps any pair
 $(p_1,p_2)$ to a real polynomial of degree at most $k$.
 The form of this functional is
 \[
 (p_1,p_2)\mapsto {\cal R}_k(p_1,p_2)
 := \widetilde{h}_k=\sum_{r,i,j} C_{k;rij}^{a_1,b_2}
 (p_1)^r (p_2')^i (p_2)^j
 \]
 where the sum ranges over all integers $r,i,j\geq 0$
 with $r+i+2j\leq k$, and the constant $C_{k;rij}^{a_1,b_2}$
 depends only on $k,r,i,j,p_1'=a_1$ and $p_2''=2b_2$.
 On the other hand it is clear that, given an arbitrary pair
 $(p_1,p_2)$ with $p_2\not\equiv0$,
 we can fix an interval $(\alpha,\omega)$ containing
 no roots of $p_2$. With the help of a positive solution $f$
 of the differential equation (\ref{Pearson2}) we can determine
 $h_k(x)$, $\alpha<x<\omega$, using the Rodrigues-type formula
 (\ref{polynomials}). Obviously, this $h_k$ extends uniquely
 to $\widetilde{h}_k$.

 To give an idea about the nature of the polynomials in (\ref{polynomials})
 we expand the first four:
 \[
 \begin{split}
 &h_0=1;\\
 &h_1 = p_1 + p_2' = ({a}_1+2{b}_2)x+({a}_0+{b}_1);\\
 &h_2= p_1^2 +  3 p_1 p_2' + p_1' p_2 +  2 p_2 p_2'' + 2 (p_2')^2 \\
  &\hspace{10ex}=({a}_1+3{b}_2)({a}_1+4{b}_2)x^2+2(a_0+2b_1)(a_1+3b_2)x\\
  &\hspace{12.5ex}+(a_0+b_1)(a_0+2b_1)+b_0(a_1+4b_2);\\
 &h_3=p_1^3
    + 6 p_1^2 p_2' + 3 p_1 p_1' p_2
    + 8 p_1 p_2 p_2''+ 11 p_1 (p_2')^2 + 7 p_1' p_2 p_2'
    + 18 p_2 p_2' p_2'' +  6(p_2')^3\\
    &
    \hspace{10ex}
    =({a}_1+4{b}_2)({a}_1+5{b}_2)(a_1+6b_2)x^3+ 3(a_0+3b_1)(a_1+4b_2)(a_1+5b_2)x^2\\
    &
    \hspace{12.5ex}+3(a_1+4b_2)[(a_0+2b_1)(a_0+3b_1)+b_0(a_1+6b_2)]x\\
    &\hspace{12.5ex}+a_0^3+6a_0^2b_1+a_0[11b_1^2+b_0(3a_1+16b_2)]+b_1[6b_1^2+b_0(7a_1+36b_2)].
 \end{split}
 \]
 Provided that the solution $f$ of (\ref{Pearson2}) is a probability density
 in $(\alpha,\omega)$, the polynomials $h_k$ are candidate to
 form an orthogonal system for $f$. Indeed, Hildebrandt
 (1931), pp.\ 404--405,
 %, \cite[pp.\ 404--405]{Hild},
 showed
 that each $h_k$ satisfies a specific second order differential
 equation in $(\alpha,\omega)$.
 Using this
 differential equation
 Diaconis and Zabell (1991)
 %\cite{DZ},
 proved that the $h_k$ are eigenfunctions of
 a particular self-adjoint, second order Sturm-Liouville differential
 equation; thus, their orthogonality with respect to the density $f$
 is a consequence of the Sturm-Liouville theory.
 Specifically, it is shown in Theorem 1
 of \cite{DZ} (see p.\ 295)
 %
 %\cite[Theorem 1, p.\ 295]{DZ}
 %showed
 that each polynomial $h_k$
 satisfies the equation
 \be
 \label{sl}
 [f(x)p_2(x) h_k'(x)]'
 =k({a}_1+(k+1){b}_2) f(x) h_k(x),\quad \alpha<x<\omega,\quad
 k=0,1,2,\ldots \  .
 \ee
 An adaption of the Diaconis-Zabell approach to the present
 general case reveals that the orthogonality is valid only
 when a number of regularity conditions is satisfied.
 It will be proved here that these regularity conditions
 consist of an equivalent definition of
 the Integrated Pearson system. In fact,
 it will be shown that the Rodrigues polynomials
 (\ref{polynomials}) are orthogonal with respect to the
 corresponding density $f$ if and only if this
 $f$ belongs to Integrated Pearson family,
 provided that we have chosen a correct $p_2$
 in the differential equation (\ref{Pearson2}),
 i.e.\ provided that $p_2=q/\theta$ for some $\theta\neq 0$.
 We mention here that, even for Integrated Pearson densities,
 a wrong choice of $p_2$ results in
 non-orthogonality of the
 Rodrigues polynomials; see, e.g.,
 the polynomials $h_k=P_k^2$ given in
 \cite{DZ}, p.\ 297,
 %\cite[p.\ 297]{DZ}
 for the Beta-type density $f(x)=Cx^N$, $0<x<x_0$.
 In light
 of Proposition \ref{prop.alg2} (and Table \ref{table densities}),
 a correct choice for this density is given by
 $p_2=x(x_0-x)$.

 In order to discuss the orthogonality of $h_k$
 we first show the following lemma.
 \begin{lem}
 \label{lem.orth}
 Let $f$ be a density satisfying (\ref{Pearson2})
 and for fixed $k,m\in\{0,1,\ldots\}$, $k\neq m$,
 consider the polynomials $h_k$ and $h_m$,
 given by (\ref{polynomials}). Assume that
 \smallskip

 \noindent
 (a) The density $f$ process a suitable
 number of moments so that
 \smallskip
 $\int_{\alpha}^{\omega} |h_k(t)
 h_m(t)| f(t)\ud{t}<\infty$;
 %equivalently, for
 %$r=\deg(h_k)+\deg(h_m)$,
 %\[
 %\int_{\alpha}^{\omega} |t|^r f(t)\ud{t}<\infty;
 %\]
 %\item[(a)] The interval
 %$(\alpha,\omega)$ is a maximal subinterval of
 %$\{x:p_2(x)\neq 0\}$ -- cf.\ Remark \ref{rem.maximal};

 \noindent
 (b) ${a}_1+(k+m+1)b_2 \neq 0$;
 %for $k_1\neq k_2$;
 %\item[(b)] The function $f$ must be positive and integrable in
 %$(\alpha,\omega)$, so that its integral can be taken to be $1$
 %-- a density;
 %\item[(d)] The density $f$ must process a suitable
 %number of moments;
 \smallskip

 \noindent
 (c)
 %\mbox{~}
 %\vspace{-2em}
 %\begin{eqnarray*}
 %&& \hspace{-10ex}
 $\displaystyle\lim_{x\nearrow\omega} \{ p_2(x)f(x)
 [h_k'(x)h_m(x)-h_k(x) h_m'(x)]\}=
 \smallskip
 \lim_{x\searrow\alpha} \{ p_2(x)f(x)[h_k'(x)h_m(x)-h_k(x)h_m'(x)]\}.$
 %\vspace{.5ex}
 %\\
 %&&
 %
 %\mbox{~}\hspace{25ex}$=\lim_{x\searrow\alpha} \{ p_2(x)f(x)[h_k'(x)h_m(x)-h_k(x)h_m'(x)]\}.$
 %\end{eqnarray*}

 \noindent
 Then,
 %$h_k$ and $h_m$ are orthogonal
 %to each other with respect to $f$,
 %that is,
 \[
 \int_{\alpha}^{\omega} h_k(x)h_m(x)f(x)\ud{x}=0.
 \]
 [We shall show that, under (a) and (b), both limits in (c) exist
 (in $\R$), but it is not guaranteed that they are equal; in fact,
 their difference equals to
 $(k-m)(a_1+(k+m+1)b_2)\times\int_{\alpha}^{\omega}h_k(t)h_m(t)f(t)\ud{t}$.]
 \end{lem}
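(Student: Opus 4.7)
The plan is to exploit the Sturm--Liouville equation (\ref{sl}), which says
\[
[f(x)p_2(x) h_k'(x)]'=k({a}_1+(k+1){b}_2)f(x) h_k(x),
\]
and similarly with $k$ replaced by $m$. Multiplying the first equation by $h_m$, the second by $h_k$, and subtracting, the right-hand side becomes
\[
\bigl[k(a_1+(k+1)b_2)-m(a_1+(m+1)b_2)\bigr] f h_k h_m
=(k-m)(a_1+(k+m+1)b_2)\, f\, h_k h_m,
\]
after the routine algebraic simplification of the coefficient. On the other hand, the left-hand side is a perfect derivative, since
\[
h_m[fp_2 h_k']'-h_k[fp_2 h_m']'
=\bigl[fp_2(h_k' h_m - h_k h_m')\bigr]',
\]
the cross terms $fp_2 h_k' h_m'$ canceling.

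Next I would integrate this identity over an arbitrary subinterval $[x,y]\subset(\alpha,\omega)$, obtaining the Wronskian-type relation
\[
\bigl[fp_2(h_k' h_m - h_k h_m')\bigr]\bigg|_{x}^{y}
=(k-m)(a_1+(k+m+1)b_2)\int_{x}^{y}h_k(t)h_m(t)f(t)\,\ud t.
\]
Assumption (a) guarantees that the integral on the right has a finite limit as $y\nearrow\omega$ (with $x$ fixed) and as $x\searrow\alpha$ (with $y$ fixed); hence both one-sided limits
\[
L_{\omega}:=\lim_{y\nearrow\omega}\bigl[fp_2(h_k' h_m - h_k h_m')\bigr](y),
\qquad
L_{\alpha}:=\lim_{x\searrow\alpha}\bigl[fp_2(h_k' h_m - h_k h_m')\bigr](x)
\]
exist in $\R$, justifying the parenthetical remark. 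Letting $x\searrow\alpha$ and $y\nearrow\omega$ simultaneously yields
\[
L_{\omega}-L_{\alpha}=(k-m)(a_1+(k+m+1)b_2)\int_{\alpha}^{\omega}h_k(t)h_m(t)f(t)\,\ud t,
\]
which is exactly the formula quoted in the lemma. Assumption (c) says $L_{\omega}=L_{\alpha}$, so the right-hand side vanishes; assumption (b) together with $k\neq m$ ensures the scalar factor $(k-m)(a_1+(k+m+1)b_2)$ is nonzero, and the claimed orthogonality follows.

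The one point that will require mild care is the legitimacy of the integration-by-parts / perfect-derivative manipulation on the open interval: since $f\in C^{\infty}(\alpha,\omega)$ by the arguments analogous to Proposition~\ref{prop.1}(ii) applied to (\ref{Pearson2}) in connected components of $\{p_2\neq 0\}$, and since $p_2$, $h_k$, $h_m$ are polynomials, the integrand is smooth on $(\alpha,\omega)$ and the identity $h_m[fp_2 h_k']'-h_k[fp_2 h_m']'=[fp_2(h_k' h_m - h_k h_m')]'$ is a pointwise identity there; no subtlety occurs at interior zeros of $p_2$ because of smoothness of $f$ (and, where needed, because any such zero of $p_2$ that is also a zero of $f$ contributes no term thanks to the prefactor). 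The genuine hypothesis-dependent step is the boundary cancellation in (c), which replaces the stronger (and here unavailable) a priori bounds that would be used in the classical compact-support Sturm--Liouville argument.
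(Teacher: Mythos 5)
Your argument is correct and is essentially the paper's own proof: both multiply the Sturm--Liouville equation (\ref{sl}) by $h_m$, antisymmetrize in $k$ and $m$, use the Lagrange identity to write the left side as the derivative of the Wronskian-type expression $f p_2(h_k'h_m-h_kh_m')$, integrate over $[x,y]\subset(\alpha,\omega)$, and then pass to the limits using (a), (c) and the nonvanishing of $(k-m)(a_1+(k+m+1)b_2)$ from (b). The existence of the two boundary limits under (a) alone, and the identity for their difference, are likewise obtained in the paper exactly as in your closing remarks.
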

 \begin{proof}
 Multiply both hands of (\ref{sl}) by $h_m$,
 interchange the roles of $k$ and $m$ and subtract the
 resulting equations to get
 \be
 \label{a}
 \lambda h_k(t)h_m(t)f(t)
 =h_m(t)[f(t)p_2(t)
 h_k'(t)]'-h_k(t)[f(t)p_2(t) h_m'(t)]',
 \ \ \ \alpha<t<\omega,
 \ee
 where $\lambda=(k-m)(a_1+(k+m+1)b_2)\neq 0$, by (b).
 Now, it is easy to verify the Lagrange identity:
 \be\label{b}
 \begin{split}
 \big\{\left[f(t)p_2(t)h_k'(t)\right]h_m(t)-\big[f&(t)p_2(t)h_m'(t)\big]h_k(t)\big\}'\\
 &=h_m(t)[f(t)p_2(t)h_k'(t)]'-h_k(t)[f(t)p_2(t) h_m'(t)]'.
 \end{split}
 \ee
 Thus, integrating (\ref{a}) over
 $[x,y]\subseteq(\alpha,\omega)$,
 and in view of (\ref{b}),
 we conclude that
 \[
 \begin{split}
 \int_{x}^{y}h_k(t)h_m(t)f(t) \ud{t}
 =& \ \frac{1}{\lambda}p_2(y)f(y)[h_k'(y)h_m(y)-h_k(y) h_m'(y)]
 \\
 &-\frac{1}{\lambda}p_2(x)f(x)[h_k'(x)h_m(x)-h_k(x)h_m'(x)].
 \end{split}
 \]
 Therefore, taking limits as
 $x\searrow\alpha$ and $y\nearrow\omega$
 and using (a) and (c) we get the result.
 Working as in the proof of Proposition
 \ref{prop.2} it is easily seen that
 both limits in (c) exist in $\R$, whenever (a) and (b)
 hold. In fact, it is true that under (a),
 \be\label{limits}
 \begin{split}
 (k-m)(a_1+(k+m+&1)b_2)\int_{\alpha}^{\omega} h_k(t)h_m(t)f(t)\ud{t}\\
 &=\lim_{y\nearrow\omega} \{ p_2(y)f(y)[h_k'(y)h_m(y)-h_k(y) h_m'(y)]\}\\
 &\hspace{10ex}-\lim_{x\searrow\alpha} \{ p_2(x)f(x)[h_k'(x)h_m(x)-h_k(x)
 h_m'(x)]\}.
 \end{split}
 \ee
 \vspace*{-1.3cm}

 \end{proof}

 \noindent
 The following result is an immediate consequence
 of Lemma \ref{lem.orth}.
 \begin{theo}
 \label{theo.orthogonal}
 Let $f$ be a density in $(\alpha,\omega)$ which satisfies
 (\ref{Pearson2}).
 For some (fixed) $n\in\{1,2,\ldots\}$
 consider the set ${\cal H}_n:=\{h_0,h_1,\ldots,h_n\}$,
 formed by the first $n+1$ polynomials in
 (\ref{polynomials}).
 Then the set ${\cal H}_n$ is an orthogonal system
 (containing only non-zero elements) with respect
 to
 $f$ if and only if the following conditions
 are satisfied:
 \begin{itemize}
 \item[(i)] The density $f$ process $2n-1$
 finite moments;
 \item[(ii)] $\prod_{j=2}^{2n}({a}_1+j b_2)\neq 0$;
 \item[(iii)]
 $\lim_{x\nearrow\omega} x^j p_2(x)f(x)=\lim_{x\searrow\alpha}
 x^j p_2(x)f(x)$ for each $j\in\{0,1,\ldots,2n-2\}$.
 \end{itemize}
 \end{theo}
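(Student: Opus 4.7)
My plan is to prove the two directions separately, relying on Lemma \ref{lem.orth}, on the Rodrigues leading-coefficient formula (\ref{lead1}), and on identity (\ref{limits}) derived inside the proof of that lemma.

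\emph{Sufficiency} ((i)--(iii)$\Rightarrow$ orthogonality). For every pair $0\le k<m\le n$ I would verify the three hypotheses (a), (b), (c) of Lemma \ref{lem.orth}. Hypothesis (a) is immediate from (i), since $\deg(h_kh_m)\le k+m\le 2n-1$. Hypothesis (b) is just a rewriting of (ii), because $k+m+1\in\{2,\ldots,2n\}$. For (c), set $W_{k,m}(x):=h_k'(x)h_m(x)-h_k(x)h_m'(x)$; this polynomial has degree at most $k+m-1\le 2n-2$, so expanding $W_{k,m}=\sum_{j=0}^{2n-2}c_jx^j$ and taking the one-sided limits of $p_2 f\,W_{k,m}$ reduces (c) to the equalities in (iii). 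One has to check that each individual limit $\lim x^j p_2(x)f(x)$ exists and is finite; this follows because $(x^j p_2 f)'=x^j(p_1+p_2')f+j x^{j-1}p_2 f$ is a polynomial of degree at most $j+1\le 2n-1$ times $f$, hence integrable by (i). Finally, each $h_k$ is a non-zero polynomial by (ii) combined with (\ref{lead1}).

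\emph{Necessity} (orthogonality $\Rightarrow$ (i)--(iii)). The key intermediate step is to show $\deg h_k=k$ for every $k\in\{0,\ldots,n\}$, which I would obtain from the linear independence of $\{h_0,\ldots,h_n\}$. Suppose $\sum_{k=0}^n c_k h_k\equiv 0$ with $c_{j_0}\neq 0$. Multiplying by $h_{j_0}f$ and integrating over a compact subinterval $[a,b]\subset(\alpha,\omega)$ yields
\[
c_{j_0}\int_a^b h_{j_0}^2\,f\,dx+\sum_{k\neq j_0}c_k\int_a^b h_k h_{j_0}\,f\,dx=0.
\]
Letting $a\searrow\alpha$ and $b\nearrow\omega$, each off-diagonal term converges to $c_k\cdot 0=0$ by orthogonality and finiteness of the cross-integrals, while the diagonal term increases monotonically to $c_{j_0}\int_\alpha^\omega h_{j_0}^2\,f\,dx\in c_{j_0}\cdot(0,\infty]$, because $h_{j_0}$ is a non-zero polynomial and $f>0$ on a set of positive measure. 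This is impossible, so the $h_k$'s are linearly independent; combined with $\deg h_k\le k$ this forces $\deg h_k=k$, and condition (ii) then follows from (\ref{lead1}). Condition (i) is now immediate: orthogonality of $h_{n-1}$ and $h_n$ requires $\int|h_{n-1}h_n|f<\infty$, and since $h_{n-1}h_n$ has exact degree $2n-1$ with non-zero leading coefficient, this forces $\E|X|^{2n-1}<\infty$.

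For (iii), plug orthogonality and (ii) into identity (\ref{limits}) to obtain
\[
\lim_{x\nearrow\omega}p_2(x)f(x)W_{k,m}(x)=\lim_{x\searrow\alpha}p_2(x)f(x)W_{k,m}(x)
\]
for every pair $0\le k<m\le n$. The family $\{W_{0,m}\}_{m=1}^{n}\cup\{W_{k,n}\}_{k=1}^{n-1}$ consists of $2n-1$ polynomials of pairwise distinct degrees $0,1,\ldots,2n-2$ ($W_{0,m}=-h_m'$ has degree $m-1$, and $W_{k,n}$ has degree $k+n-1$ with leading coefficient $(k-n)\lead(h_k)\lead(h_n)\neq 0$), hence they form a basis of the space of polynomials of degree at most $2n-2$. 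Writing $x^j$ for $0\le j\le 2n-2$ in this basis and using linearity of the limits yields (iii). The main obstacle is the linear-independence step, because $\int h_{j_0}^2 f$ need not be finite (for instance, $\int h_n^2 f$ diverges in Student-type cases where $f$ has only finitely many moments); the monotone passage to the limit over $[a,b]$ is the essential device, since it handles the finite and the infinite case simultaneously. After that, the rest of the necessity direction is routine bookkeeping of degrees and limits.
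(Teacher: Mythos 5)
Your proposal is correct and follows essentially the same route as the paper: sufficiency by verifying hypotheses (a)--(c) of Lemma \ref{lem.orth} pairwise, and necessity by first forcing $\deg(h_k)=k$ (the paper does this by taking the smallest $k$ with $\lead(h_k)=0$, writing $h_k$ as a combination of lower-index polynomials and deducing $\int h_k^2 f=0$, hence $h_k\equiv 0$; your compact-exhaustion linear-independence argument is a minor variant of the same idea), then reading off (i) from $\E|h_{n-1}(X)h_n(X)|<\infty$ and (iii) from identity (\ref{limits}) applied to a degree-graded basis of the $W_{k,m}$'s, exactly as the paper does with its set $\{g_0,\ldots,g_{2n-2}\}$.
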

 \begin{proof} Let $X\sim f$ and assume first that
 (i)--(iii) are satisfied.
 Condition (ii) shows, in view of
 (\ref{lead1}), that $\deg(h_k)=k$ for all
 $k\in\{0,1,\ldots,n\}$. Fix $k,m\in\{0,1,\ldots,n\}$ with $m\neq k$.
 Since $\E|X|^{2n-1}<\infty$ by (i), it follows that
 $\E|h_k(X)h_m(X)|<\infty$, i.e.\
 the integral $\int_{\alpha}^{\omega} h_k(x)h_m(x)f(x)\ud{x}$ is
 (well-defined and) finite. Finally, since $h_k'h_m-h_k h_m'$
 is a polynomial of degree $k+m-1$ (observe that
 $\lead(h_k'h_m-h_k h_m')=(k-m)\lead(h_k)\lead(h_m)\neq 0$),
 (iii) ensures that assumption (c) of Lemma \ref{lem.orth}
 is also fulfilled and, hence,
 \[
 \int_{\alpha}^{\omega} h_k(x)h_m(x)
 %\times
 f(x)\ud{x}=0.
 \]

 Conversely, assume that the set ${\cal H}_n=\{h_0,h_1,\ldots,h_n\}$
 is orthogonal with respect to $f$; that is,
 $\E|h_k(X)h_m(X)|=\int_{\alpha}^\omega |h_k(x)h_m(x)|f(x) \ud{x}<\infty$
 for all $k,m\in\{0,1,\ldots,n\}$
 with $m\neq k$, and $\int_{\alpha}^{\omega}h_k(x)h_m(x)f(x)\ud{x}=0$.
 It follows that, necessarily, $\deg(h_k)=k$ for all
 $k=1,2,\ldots,n$; for if $k$ is the smallest integer
 in $\{1,2,\ldots,n\}$ for which $\lead(h_k)=0$ then we can write
 $h_{k}(x)=\sum_{j=0}^{k-1}c_j h_j(x)$ for some
 constants $c_j$, and this implies that
 \[
 h_k^2(x)f(x)
 =\left|
 %\mbox{$
 \sum_{j=0}^{k-1}
 %$}
 c_j h_j(x)h_k(x)\right|f(x)
 \leq
 %\mbox{$
 \sum_{j=0}^{k-1}
 %$}
 |c_j| \ |h_j(x)h_k(x)|f(x).
 \]
 Subsequently, the inequality
 \[
 \int_{\alpha}^\omega h_k^2(x)f(x) \ud{x} \leq \sum_{j=0}^{k-1}|c_j|
 \int_{\alpha}^\omega |h_k(x)h_j(x)|f(x) \ud{x}<\infty
 \]
 shows that $h_k\in L_f^2(\alpha,\omega)$ and, finally,
 \[
 \int_{\alpha}^\omega h_k^2(x)f(x) \ud{x} = \sum_{j=0}^{k-1}c_j
 \int_{\alpha}^\omega h_k(x)h_j(x)f(x) \ud{x}=0,
 \]
 by
 the orthogonality assumption. Since $h_k$ is
 continuous (a polynomial) and $f$ is  positive
 in a subinterval of $(\alpha,\omega)$ with positive
 length, it follows that $h_k\equiv 0$, which contradicts
 %(\ref{polynomials}).
 the assumption that ${\cal H}_n$ contains only non-zero
 elements. Therefore,
 $\prod_{k=0}^n\lead(h_k)\neq 0$,
 %for all $k\in\{1,2,\ldots,n\}$,
 and (\ref{lead1}) yields (ii). Obviously,
 $\E|h_n(X) h_{n-1}(X)|<\infty$ is equivalent to
 $\E|X|^{2n-1}<\infty$ and (i) follows.
 Since $g_{k,m}=h_k'h_m-h_k h_m'$
 is a polynomial of degree exactly $k+m-1$ (for $k\neq m$),
 we can form a linearly independent
 set
 \[
 \{g_0,g_1,\ldots,g_{2n-2}\}\subseteq \{g_{k,m}: k,m=0,1,\ldots,n, \ k\neq m\},
 \]
 with $\deg(g_j)=j$ for each $j$.
 Applying (\ref{limits})
 inductively to $g_0,g_1,\ldots,g_{2n-2}$
 we get (iii).
 \end{proof}
 \begin{exam}
 \label{ex.zero}
 It may happen that $h_k\equiv 0$ for all $k\geq 1$. For instance
 consider the density $f(x)=C/x$, $1<x<2$; this density satisfies
 (\ref{Pearson2}) with $(p_1,p_2)=(-1,x)$. Although
 $\int_{1}^{2} h_k h_m f=0$ for $m\neq k$, the
 trivial system ${\cal H}_n=\{1,0,\ldots,0\}$
 is not considered as orthogonal in this case.
 Condition (ii) of Theorem \ref{theo.orthogonal}
 eliminates such trivial cases.
 \end{exam}

 \begin{exam}
 \label{ex.nonorth}
 The density $f(x)=\frac{3}{2}x^2$,
 $-1<x<1$, satisfies (\ref{Pearson2}) in $(\alpha,\omega)=(-1,1)$. The
 choice $(p_1,p_2)=(2,x)$ leads to constant polynomials,
 $h_k\equiv(k+2)!/2$. A set $\{h_k,h_m\}$ can never be
 orthogonal; this explains that condition (b)
 of Lemma \ref{lem.orth} is necessary.
 On the other hand, the choice $(p_1,p_2)=(2x,x^2)$
 yields the polynomials $h_k=c_k x^k$ with $c_k=(2k+2)!/(k+2)!$.
 The limits in Lemma \ref{lem.orth}(c) are $\frac{3}{2}c_kc_m(k-m)$
 and $\frac{3}{2}c_kc_m(k-m)(-1)^{k+m+1}$; they are equal
 if and only if $k+m$ is odd, in which case $h_k$ and $h_m$
 are, obviously, orthogonal. Clearly, any set containing three
 (or more) polynomials cannot be an orthogonal set.
 \end{exam}

 \begin{rem}
 \label{rem.question}
 While the density $f$ of Example \ref{ex.nonorth}
 satisfies the (generalized) Pearson differential equation
 (\ref{Pearson2}) and has finite moments of any order,
 the system $\{h_0,h_1,h_2\}$ fails to be orthogonal.
 The same is true for the Pearson density
 \[
 f(x)=\frac{C}{\sqrt{1+x^2}}, \ \
 -\infty<\alpha<x<\omega<\infty.
 \]
 Now $(p_1,p_2)=(-x,1+x^2)$
 and $\{h_0,h_1,h_2\}=\{1,x,3+6x^2\}$ so that $h_0h_2\geq 3$
 and the system $\{h_0,h_1,h_2\}$ cannot be orthogonal
 (with respect to any measure).
 Does this happen because these $f$ lie outside
 the Integrated Pearson family? In other words,
 it is natural to state the following question:
 \begin{quote}
 If a density $f$
 %(with $S(f)\subseteq(\alpha,\omega)$)
 has finite moments up to order $2n-1$
 (for some fixed $n\geq 2$) and satisfies
 (\ref{Pearson2}),
 and if the system $\{h_0,h_1,\ldots,h_n\}$
 of the first $n+1$ Rodrigues polynomials
 %obtained by
 %(\ref{polynomials}))
 is orthogonal with respect to $f$, does it follow
 that this $f$ belongs to the Integrated Pearson family?
 \end{quote}
 \end{rem}

 The answer is in the affirmative.
 In particular, the following result holds.
 \begin{theo}
 \label{theo.orth2}
 Assume that a differentiable density
 $f$ with $S(f)=\{x:f(x)>0\}\subseteq(\alpha,\omega)$
 has finite third moment and satisfies (\ref{Pearson2}).
 Let $h_0\equiv 1,h_1,h_2$ be the first three Rodrigues polynomials
 given by (\ref{polynomials}),
 consider the system ${\cal H}_2=\{h_0,h_1,h_2\}$ and assume that
 ${\cal H}_2$ is non-trivial, i.e.,
 $h_1\not\equiv0$ and $h_2\not\equiv0$.
 If the system ${\cal H}_2$ is orthogonal with respect to
 $f$ then there exists a subinterval
 $({\alpha}',{\omega}')\subseteq(\alpha,\omega)$,
 a quadratic polynomial
 \[
 q(x)=\delta x^2+\beta x+\gamma, \ \
 \mbox{with} \ \
 \{x:q(x)>0\}=({\alpha}',{\omega}'),
 \]
 and a number $\mu\in({\alpha}',{\omega}')$
 such that $f\sim\IPq\equiv\IP$. Moreover, there exists a
 constant $\theta\neq 0$ such that $q(x)=\theta p_2(x)$,
 $x\in\R$.
 \end{theo}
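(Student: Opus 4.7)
The plan is to verify conditions (i) and (ii) of Proposition \ref{prop.2} from the orthogonality hypotheses; Proposition \ref{prop.2} then immediately yields $f\sim\IPq$ with $q=\theta p_2$ for a suitable scalar $\theta\ne 0$. The key link throughout is the identity $(p_2 f)'=h_1 f$, which follows from the generalized Pearson equation $p_2 f'=p_1 f$ together with $h_1=p_1+p_2'$.

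First I would pin down $\mu=\E X$ and verify condition (i). Orthogonality $\int h_0 h_1 f\,dx=\E h_1(X)=0$, combined with $h_1(x)=(a_1+2b_2)x+(a_0+b_1)$, gives $(a_1+2b_2)\mu+(a_0+b_1)=0$. If $a_1+2b_2=0$, then $a_0+b_1=0$ as well, forcing $h_1\equiv 0$ and contradicting the non-triviality assumption; hence $a_1+2b_2\ne 0$ and $h_1(x)=(a_1+2b_2)(x-\mu)$. Setting $\theta:=-1/(a_1+2b_2)\ne 0$, the identity $p_1+p_2'=h_1=(\mu-x)/\theta$ is exactly condition (i) of Proposition \ref{prop.2} for the candidate $q=\theta p_2$.

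The substantive step is establishing condition (ii), the vanishing of $p_2 f$ at (at least) one endpoint of the essential support of $f$. On a connected component $(\alpha',\omega')$ of $S(f)$, write $L_{\alpha'}:=\lim_{x\searrow\alpha'}p_2(x)f(x)$ and $L_{\omega'}:=\lim_{x\nearrow\omega'}p_2(x)f(x)$; as in the proof of Proposition \ref{prop.2}, both limits exist in $\R$ once (i) holds and $\E|X|<\infty$. Integrating $(p_2 f)'=h_1 f$ over $(\alpha',\omega')$ and using $\E h_1(X)=0$ gives $L_{\omega'}=L_{\alpha'}=:L$. Next, from $h_2 f=(p_2^2 f)''=[p_2 f(h_1+p_2')]'$, integrating and using $\int h_0 h_2 f\,dx=0$ yields
\[
0=L\cdot\bigl[(h_1+p_2')(\omega')-(h_1+p_2')(\alpha')\bigr]=L\,(\omega'-\alpha')(a_1+4b_2).
\]
If the component is bounded and $a_1+4b_2\ne 0$, this forces $L=0$; if the ambient interval is unbounded, the last sentence of Proposition \ref{prop.2} makes (ii) automatic from (i).

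It remains to rule out the degeneracy $a_1+4b_2=0$. The explicit expansion of $h_2$ displayed in Section \ref{sec4} collapses, in that case, to $h_2=(a_0+2b_1)\,h_1$; non-triviality of $h_2$ then forces $a_0+2b_1\ne 0$, and orthogonality of $h_1$ and $h_2$ becomes $(a_0+2b_1)\int h_1^2 f\,dx=0$, which is impossible since $h_1\not\equiv 0$ is a polynomial and $f>0$ on a set of positive measure. Hence $a_1+4b_2\ne 0$ is automatic, Proposition \ref{prop.2} applies on $(\alpha',\omega')$ and gives $f\sim\IPq$ with $q=\theta p_2$; Corollary \ref{cor.maximal} then shows that $S(f)$ is a single interval coinciding with the connected component of $\{q>0\}$ containing $\mu$. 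The main obstacle is the careful boundary bookkeeping in the main step --- in particular, the observation that the relation $\int h_1 h_2 f\,dx=0$, rather than $\int h_2 f\,dx=0$ alone, is what eliminates the coefficient degeneration $a_1+4b_2=0$.
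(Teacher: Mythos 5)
Your route is genuinely different from the paper's: the paper first reduces orthogonality of ${\cal H}_2$ to the conditions (\ref{non.trivial}) and (\ref{lim.eq}) via Theorem \ref{theo.orthogonal} and then grinds through an exhaustive case analysis ($\deg p_2$, sign of the discriminant, which branches of the general solution are active), whereas you verify conditions (i) and (ii) of Proposition \ref{prop.2} directly from the three orthogonality relations using the identities $(p_2f)'=h_1f$ and $(p_2^2f)''=[p_2f(h_1+p_2')]'$. The pieces that are correct are genuinely nice: $\E h_1(X)=0$ plus non-triviality does give $a_1+2b_2\neq0$ and hence condition (i) with $\theta=-1/(a_1+2b_2)$; and your elimination of $a_1+4b_2=0$ via $h_2=(a_0+2b_1)h_1$ and $\int h_1h_2f=(a_0+2b_1)\int h_1^2f$ is a clean replacement for part of (\ref{non.trivial}).

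The genuine gap is the connectedness of $S(f)$. The hypothesis is only $S(f)\subseteq(\alpha,\omega)$, and by Proposition \ref{prop.support} / Corollary \ref{cor.support} a nontrivial solution of (\ref{Pearson2}) may be supported on two or three disjoint intervals separated by zeros of $p_2$. Your key step ``integrating $(p_2f)'=h_1f$ over $(\alpha',\omega')$ and using $\E h_1(X)=0$'' silently equates the integral over one component with the full expectation; when there are several components, $\int_{\alpha'}^{\omega'}h_1f$ need not vanish. More importantly, Proposition \ref{prop.2} is stated for a density whose set of positivity \emph{is} a single interval, so it cannot be ``applied on $(\alpha',\omega')$'' if that component is proper: the restriction of $f$ there is not a probability density, and a density with disconnected support can never satisfy (\ref{qua1}) (the integral $\int_{-\infty}^{x}(\mu-t)f(t)\,\ud t$ is strictly positive at interior points where $q(x)f(x)$ would have to vanish). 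Since the theorem's conclusion entails that $S(f)$ is one interval, any proof must rule the disconnected configurations out, and a substantial fraction of the paper's proof (the sub-cases indexed by $(C_1,C_2,C_3)$) does exactly that, using differentiability of $f$ at the interior zeros of $p_2$ together with integrability and the limit conditions. Your telescoping argument can likely be upgraded --- the boundary terms at interior zeros of $p_2$ vanish by continuity of $f$, so summing over components still yields $L_{\alpha'}=L_{\omega'}=L$ and $L(a_1+4b_2)(\omega'-\alpha')=0$ for the extreme endpoints, and one must then show that $L=0$ at a finite extreme endpoint where $p_2$ does not vanish forces $f$ to vanish there, and that differentiability at two interior zeros together with integrability is contradictory --- but as written this essential part of the argument is missing. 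A second, smaller point: invoking Corollary \ref{cor.maximal} to conclude that $S(f)$ is a single interval is circular, since that corollary presupposes $X\sim\IPq$, which is exactly what the connectedness is needed to establish.
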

 \begin{proof}
 In view of Theorem
 \ref{theo.orthogonal} and the fact that $f$ has finite third moment,
 the orthogonality assumption
 is equivalent to
 \be
 \label{non.trivial}
 (a_1+2b_2)(a_1+3b_2)(a_1+4b_2)\neq 0
 \ee
 and
 \be
 \label{lim.eq}
 L_j(\alpha)=L_j(\omega),\quad   j=0,1,2,
 \ee
 where
 \be
 \label{lim}
 L_j(\alpha):=\lim_{x\searrow \alpha} x^j p_2(x)f(x), \ \ \
 L_j(\omega):=\lim_{x\nearrow \omega} x^j p_2(x)f(x).
 \ee
 To simplify cases we
 can apply an affine transformation
 $x\mapsto \lambda x+c$ ($\lambda\neq 0$, $c\in\R$) to $f$.
 By considering
 $\widetilde{f}(x)=\frac{1}{|\lambda|}f(\frac{x-c}{\lambda})$
 in place of $f$ it is easily seen that (\ref{Pearson2})
 is satisfied in the translated interval
 $(\widetilde{\alpha},\widetilde{\omega})$ for
 $\widetilde{p}_1(x)=\lambda p_1(\frac{x-c}{\lambda})$
 and
 $\widetilde{p}_2(x)=\lambda^2 p_2(\frac{x-c}{\lambda})$;
 since $\widetilde{a}_1=a_1$ and $\widetilde{b}_2=b_2$,
 (\ref{non.trivial}) remains unchanged. Obviously
 $f$ has finite third moment if and only if $\widetilde{f}$
 does. Moreover, it is easily seen from
 (\ref{polynomials}) that the translated polynomials $\widetilde{h}_k$
 are related to $h_k$ by $\widetilde{h}_k(x)=\lambda^k h_k(\frac{x-c}{\lambda})$;
 thus, $\lead(\widetilde{h}_k)=\lead(h_k)$ and, in particular,
 the system ${\cal H}_2$ is non-trivial if and only if the same is true for
 the system
 ${\cal\widetilde{H}}_2
 :=\{\widetilde{h}_0,\widetilde{h}_1,\widetilde{h}_2\}$.
 The orthogonality of the system ${\cal\widetilde{H}}_2$ with respect to
 $\widetilde{f}$ is equivalent to the orthogonality  of the system
 ${\cal{H}}_2$ with respect to $f$; indeed,
 \[
 \int_{\widetilde{\alpha}}^{\widetilde{\omega}}
 \widetilde{h}_k (x)  \widetilde{h}_m (x)  \widetilde{f}
 (x) \ud{x}= \lambda^{k+m}
 \int_{{\alpha}}^{{\omega}}
 {h}_k (x) {h}_m (x)  {f} (x) \ud{x}.
 \]
 It remains to verify that (\ref{lim.eq}) are equivalent to
 $\widetilde{L}_j(\widetilde{\alpha})=\widetilde{L}_j(\widetilde{\omega})$
 $(j=0,1,2)$,
 where $\widetilde{L}_j(\widetilde{\alpha}):= \lim_{x\searrow \widetilde{\alpha}}
 x^j \widetilde{p}_2(x)\widetilde{f}(x)$,
 $\widetilde{L}_j(\widetilde{\omega}):=
 \lim_{x\nearrow \widetilde{\omega}} x^j
 \widetilde{p}_2(x)\widetilde{f}(x)$.
 To this end, it
 suffices to observe the relations
 %for $\lambda>0$,
 \[
 \begin{split}
 &\sum_{i=0}^j
 {j\choose i} \lambda^{i+1} c^{j-i} L_i(\alpha)
 =
 \left\{
 \begin{array}{cc}
 \widetilde{L}_j(\widetilde{\alpha}), &
 \textrm{if } \ \lambda>0,
 \\
 -\widetilde{L}_j(\widetilde{\omega}), &
 \textrm{if } \ \lambda<0,
 \end{array}
 \right.
 \\
 &\hspace{15ex}
 \sum_{i=0}^j
 {j\choose i} \lambda^{i+1} c^{j-i} L_i(\omega)
 =
 \left\{
 \begin{array}{cc}
 \widetilde{L}_j(\widetilde{\omega}), &
 \mbox{if } \ \lambda>0,
 \\
 -\widetilde{L}_j(\widetilde{\alpha}), &
 \mbox{if } \ \lambda<0.
 \end{array}
 \right.
 \end{split}
 \]
 Thus, it is easily seen that  $L_j(\alpha)=L_j(\omega)$ ($j=0,1,2$)
 if and only if $\widetilde{L}_j(\widetilde{\alpha})
 =\widetilde{L}_j(\widetilde{\omega})$ ($j=0,1,2$).

 It is clear from the above considerations, and in view of
 Proposition \ref{prop.1}(vi), that we can freely
 apply any affine transformation, either to the polynomial
 $p_2$ or to the density $f$ and its support
 $(\alpha,\omega)$; under such
 transformations, the conclusions as well as
 the assumptions of our theorem remain unchanged.

 The rest of the proof is easy but tedious since we just have
 to examine all possible non-equivalent cases
 by solving the differential equation (\ref{Pearson2})
 in each case.
 We shall try to give a somewhat complete approach as
 follows:

 Assume first that $\deg(p_2)=2$ and that its discriminant,
 $\Delta$, is strictly negative. Applying an affine
 transformation and dividing both $p_1$ and $p_2$ by
 $\lead(p_2)\neq 0$ we may assume that $p_2=x^2+b_0$ for
 some $b_0>0$. If $p_1\equiv 0$ then, necessarily,
 $(\alpha,\omega)$ is finite and $f\sim U(\alpha,\omega)$;
 but in this case, $h_2(x)=6x^2+4 b_0\geq 4b_0>0$
 cannot be orthogonal to $h_0\equiv 1$.
 If $\deg(p_1)=0$, that is,
 $p_1\equiv a_0\neq 0$, then the density
 \[
 f(x)=C\exp\left(\frac{a_0}{b_0} \tan^{-1}\left(\frac{x}{\sqrt{b_0}}\right)\right)
 \]
 is bounded away from zero,
 so that $(\alpha,\omega)$ must be again finite.
 Then, the assumed orthogonality of ${\cal H}_2$
 fails because (\ref{lim.eq}) shows that $\alpha=\omega$.
 Finally, assume that $\deg(p_1)=1$ i.e.\ $p_1=a_0+a_1 x$
 with $a_1\neq 0$. In this case, $a_1\not\in\{-2,-3,-4\}$
 because of (\ref{non.trivial}). Since
 \[
 f(x)=C(x^2+b_0)^{\frac{a_1}{2}}\exp\left(\frac{a_0}{b_0}
 \tan^{-1}\left(\frac{x}{\sqrt{b_0}}\right)\right),
 \]
 it follows that either $(\alpha,\omega)$ is finite or, otherwise,
 $a_1<-4$ (for the third moment to exists).
 If $(\alpha,\omega)$ is finite, the assumed orthogonality
 fails because (\ref{lim.eq}) shows that $\alpha=\omega$.
 If $\alpha>-\infty$, $\omega=\infty$ then
 the assumed orthogonality fails again
 from (\ref{lim.eq}) since $L_0(\alpha)>0$,
 $L_0(\infty)=0$. The case $\alpha=-\infty$,
 $\omega<\infty$ is similar to the previous one
 (we can also make the transformation $x\mapsto -x$).
 Therefore, the unique case where ${\cal H}_2$
 is indeed orthogonal is when $(\alpha,\omega)=\R$.
 Then,
 \[
 \E h_1(X)=(a_1+2b_2)\mu+(a_0+2 b_1)=0 \ \
 \mbox{implies that} \ \
 \mu=\frac{a_0}{-2-a_1}
 \]
 (note that $b_2=1$, $b_1=0$)
 and, hence, $p_1+p_2'=(-2-a_1)(\mu-x)$. In view of
 Proposition \ref{prop.2} we see
 that
 \[
 f\sim\IPq
 \ \
 \mbox{with}
 \ \
 \mu=\frac{a_0}{-2-a_1}
 \ \
 \mbox{and}
 \ \
 q(x)=\frac{x^2+b_0}{-2-a_1}=\frac{p_2(x)}{-2-a_1}.
 \]

 Next, assume that $\deg(p_2)=2$ and $\Delta=0$.
 Applying an affine transformation and dividing both
 $p_1$ and $p_2$ by
 $\lead(p_2)\neq 0$ we may further assume that $p_2=x^2$.
 If $p_1\equiv 0$ then, necessarily,
 $(\alpha,\omega)$ is finite and $f\sim U(\alpha,\omega)$;
 but in this case, $h_2(x)=12x^2\geq0$
 cannot be orthogonal to $h_0\equiv 1$.
 Let $\deg(p_1)=0$, that is,
 $p_1\equiv a_0\neq 0$.
 With the map $x\mapsto-x$, if necessary,
 we may further translate the density to
 have either
 the form
 \[
 f(x)=C e^{-\frac{a_0}{x}},
 \ \
 0\leq \alpha<x<\omega<\infty,
 \]
 or the form
 \[
 f(x)=\left\{
 \begin{array}{ll}
 C_1 e^{-\frac{a_0}{x}},
 & 0<x<\omega<\infty,
 \\
 0, & -\infty\leq \alpha<x\leq 0,
 \end{array}
 \right.
 \]
 %and
 %$f(x)=0$, $-\infty\leq \alpha<x\leq0$,
 where, necessarily, $a_0>0$ in the second case.
 %if $\alpha\leq 0$.
 In both cases the assumed
 orthogonality fails because of (\ref{lim.eq}).
 Finally, assume that $\deg(p_1)=1$ i.e.\ $p_1=a_0+a_1 x$
 with $a_1\neq 0$. In this case, $a_1\not\in\{-2,-3,-4\}$
 because of (\ref{non.trivial}).
 With the map $x\mapsto-x$, if necessary,
 we may further translate the density to
 have either
 the form
 \[
 f(x)=C x^{a_1}e^{-\frac{a_0}{x}},
 \ \ 0\leq \alpha<x<\omega\leq \infty,
 \]
 or the form
 \[
 f(x)
 =\left\{
 \begin{array}{ll}
 C_1 x^{a_1}e^{-\frac{a_0}{x}},
 &
 0<x<\omega\leq \infty,
 \\
 0, & -\infty\leq \alpha<x\leq0,
 \end{array}
 \right.
 \]
 %and
 %$f(x)=0$, $-\infty\leq \alpha<x\leq0$,
 where, necessarily, $a_0>0$ in the second case.
 %whenever $\alpha\leq 0$.
 If $\omega<\infty$
 then, due to (\ref{lim.eq}),
 the assumed orthogonality fails for both cases.
 If $\omega=\infty$ and $\alpha>0$ then we must take
 $a_1<-4$ for the finiteness of the third moment
 (note that in this case, $a_0\in\R$ can be arbitrary since
 $\alpha>0$), but the orthogonality fails because of
 (\ref{lim.eq}), since $L_0(\alpha)>0$, $L_0(\infty)=0$.
 In the last case where $\alpha\leq 0$ and $\omega=\infty$
 (thus, $a_0>0$ and $a_1<-4$) the orthogonality
 is indeed satisfied. This is so because it is easy
 to verify both (\ref{lim.eq}) and (\ref{non.trivial}).
 On the other hand, since we have assumed that
 $\E h_1(X)=(a_1+2b_2)\mu+(a_0+b_1)=0$, it follows that
 $\mu=\frac{a_0}{-a_2-2}$ (note that $b_2=1$, $b_1=0$)
 and $p_1+p_2'=(-2-a_1)(\mu-x)$. In view of Proposition
 \ref{prop.2}, this density belongs to the Integrated
 Pearson system with
 \[
 \mu=\frac{a_0}{-a_2-2}
 \ \
 \mbox{and}
 \ \
 q(x)=\frac{p_2(x)}{-2-a_1}=\frac{x^2}{-2-a_1}.
 \]
 Moreover, observe that its support,
 $(\alpha',\omega)
 =(0,\infty)\subseteq (\alpha,\omega)$, is
 different than $(\alpha,\omega)$, whenever
 $\alpha<0$.

 Next, assume that $\deg(p_2)=2$ and $\Delta>0$.
 Applying an affine transformation and dividing both
 $p_1$ and $p_2$ by
 $\lead(p_2)\neq 0$ we may further assume that
 $p_2=x(1-x)$.
 Solving the differential
 equation (\ref{Pearson2}) for arbitrary $p_1$
 and for all $x\in\R\smallsetminus\{0,1\}$ we see that
 the general solution has the form
 \[
 f(x)=\left\{
 \begin{array}{ll}
 C_1 (-x)^A (1-x)^B, & \mbox{if } \ x<0,  \\
 C_2 x^A (1-x)^B,   & \mbox{if } \ 0<x<1, \\
 C_3 x^A (x-1)^B,   & \mbox{if } \ x>1, \\
 \end{array}
 \right.
 \]
 where $A$ and $B$ are arbitrary parameters and
 $C_1,C_2,C_3\geq 0$ are arbitrary constants,
 not all zero.
 The restrictions on $A$ and $B$ depend on the
 interval $(\alpha,\omega)$ that we consider
 and the positivity or vanishing of each branch;
 they have to be chosen in such a way that the resulting
 function is differentiable and integrable in
 $(\alpha,\omega)$.
 For example, if $[0,1]\subseteq (\alpha,\omega)$
 and $C_1,C_2,C_3>0$ then, in order that $f$ is
 (continuous and) differentiable at the points $0$ and $1$,
 we must take $A>1$ and $B>1$; but then it is necessary for
 $(\alpha,\omega)$ to be bounded, since, otherwise, the resulting
 $f$ could not be integrable.
 The several possibilities can be classified according to
 the number of roots of $p_2$ that fall into
 $(\alpha,\omega)$, as follows:

 (1) Let $\{0,1\}\cap (\alpha,\omega)=\varnothing$. Then,
 either $(\alpha,\omega)\subseteq (0,1)$ or
 $(\alpha,\omega)\subseteq (-\infty,0)$ or
 $(\alpha,\omega)\subseteq (1,\infty)$.
 For the first case we observe that
 (\ref{lim.eq}) fails whenever $\alpha>0$ or $\omega<1$;
 if $(\alpha,\omega)=(0,1)$ then $A>-1$, $B>-1$,
 $p_1=A-(A+B)x$ and the
 orthogonality assumption yields
 $\E h_1(X)=(a_1+2b_2)\mu+(a_0+b_1)=-(A+B+2)\mu+(A+1)=0$.
 Hence, $p_1+p_2'=A+1-(A+B+2)x=(A+B+2)(\mu-x)$ and
 $p_2(x)f(x)=Cx^{A+1}(1-x)^{B+1}\to 0$ as $x\nearrow 1$;
 thus, Proposition \ref{prop.2} shows that $f\sim\IPq$
 with
 \[
 \mu=\frac{A+1}{A+B+2}
 \ \
 \mbox{and}
 \ \
 q(x)=\frac{p_2(x)}{A+B+2}.
 \]

 Using the map $x\mapsto-x$ for the second case and the map
 $x\mapsto x-1$ for the third case it is seen that both cases
 are reduced to $(\alpha,\omega)\subseteq(0,\infty)$ and
 translate $p_2$ to $p_2=-x(x+1)$;  equivalently, we can
 take $p_2=x(x+1)$.
 Moreover, the general solution in this case
 takes the form
 \[
 f(x)=Cx^{\theta}
 (x+1)^{\lambda},
 \ \
 0\leq\alpha<x<\omega\leq \infty.
 \]
 If
 $\omega<\infty$ or $\alpha>0$ it is easily seen that
 (\ref{lim.eq}) fails. In the remaining case where
 $(\alpha,\omega)=(0,\infty)$ we must have $\theta>-1$ (for
 integrability close to zero) and $\theta+\lambda<-4$
 (for finiteness of the third moment). Since
 $p_1=a_0+a_1x=\theta+(\theta+\lambda)x$, $p_2=b_0+b_1
 x+b_2 x^2=x+x^2$ and $h_1=(a_1+2b_2)x+(a_0+b_1)
 =(\theta+\lambda+2)x+(\theta+1)$, the assumed
 orthogonality yields
 $\E h_1(X)=(\theta+\lambda+2)\mu+\theta+1=0$;
 thus,
 $p_1+p_2'=(\theta+\lambda+2)x+(\theta+1)
 =-(\theta+\lambda+2)(\mu-x)$
 and Proposition \ref{prop.2} shows that
 \[
 f\sim\IPq
 \ \
 \mbox{with}
 \ \
 \mu=\frac{\theta+1}{-(\theta+\lambda+2)}
 \ \
 \mbox{and}
 \ \
 q(x)=\frac{p_2(x)}{-(\theta+\lambda+2)}.
 \]

 (2) Let $\{0,1\}\cap (\alpha,\omega)=\{1\}$ or
 $\{0,1\}\cap (\alpha,\omega)=\{0\}$,
 that is, $0\leq \alpha<1<\omega\leq\infty$ or $-\infty\leq \alpha<0<\omega\leq 1$.
 Clearly the map $x\mapsto 1-x$ translates the second case
 to the first one and leaves $p_2$ unchanged;
 thus, it suffices to consider only the first case.
 If $0<\alpha<1<\omega<\infty$ it is easily
 seen that (\ref{lim.eq}) fails for all choices of
 $(C_2,C_3)\in\{(+,+),(+,0),(0,+)\}$, where $(C_2,C_3)=(+,0)$
 means $C_2>0$, $C_3=0$, etc. If $\alpha=0$ and
 $1<\omega<\infty$
 then (\ref{lim.eq}) fails for all choices of
 $(C_2,C_3)\in\{(+,+),(0,+)\}$,
 while it is satisfied when $C_2>0$ and $C_3=0$.
 Similarly, if $0<\alpha<1$ and
 $\omega=\infty$ then (\ref{lim.eq})
 fails for all choices of $(C_2,C_3)\in\{(+,+),(+,0)\}$,
 while it is satisfied when $C_2=0$ and $C_3>0$.
 Finally, if $\alpha=0$ and $\omega=\infty$ then
 (\ref{lim.eq}) is satisfied for all choices
 of $(C_2,C_3)\in\{(0,+),(+,0)\}$, while  $C_2>0$, $C_3>0$
 is not a permissible choice because $f$
 is not integrable.
 Therefore, the two distinct situations
 where orthogonality can be
 verified are given by
 \[
 f_1(x)=\left\{
 \begin{array}{ll}
 C_2x^A(1-x)^B, & 0<x<1, \\
 0, & 1\leq x<\omega,
 \end{array}
 \right.
 \quad \textrm{and} \quad
 f_2(x)=\left\{
 \begin{array}{ll}
 0, & \alpha<x\leq 1, \\
 C_3x^A(x-1)^B, & 1< x<\infty,
 \end{array}
 \right.
 \]
 where $C_2>0$, $A>-1$, $B>1$ and $1<\omega\leq \infty$
 for $f_1$;
 $C_3>0$, $B>1$, $A+B<-4$ and $0\leq \alpha<1$ for $f_2$.
 Now it is easily seen that both $f_1$ and $f_2$ belong to
 the Integrated Pearson family. Specifically,
 Proposition \ref{prop.2} shows that $f_1\sim\IPq$
 with
 \be
 \label{eq.two}
 \mu=\frac{A+1}{A+B+2}
 \ \
 \mbox{and}
 \ \
 q(x)=\frac{p_2(x)}{A+B+2}=\frac{x(1-x)}{A+B+2},
 \ee
 while  $f_2\sim\IPq$ with
 \be
 \label{eq.three}
 \mu=\frac{A+1}{A+B+2}=1+\frac{B+1}{-A-B-2}
 \ \
 \mbox{and}
 \ \
 q(x)=\frac{p_2(x)}{A+B+2}=\frac{x(x-1)}{-A-B-2}.
 \ee

 (3) Let $\{0,1\}\subseteq (\alpha,\omega)$,
 that is, $-\infty\leq\alpha<0<1<\omega \leq\infty$.
 We have to study the following cases:
 (3a): $\alpha=-\infty$, $\omega=\infty$;
 (3b): $-\infty<\alpha<0$, $\omega=\infty$;
 (3b$'$): $\alpha=-\infty$, $1<\omega<\infty$;
 (3c):   $-\infty<\alpha<0$, $1<\omega<\infty$.
 Clearly the map $x\mapsto 1-x$ translates the case
 (3b$'$) to (3b) and leaves $p_2$ unchanged;
 thus, it suffices to consider only the cases (3a), (3b)
 and (3c).

 Assume first (3a). If
 $(C_1,C_2,C_3)\in\{(+,+,+),(+,0,+),(0,+,+)\}$ (where, e.g.,
 \linebreak $(C_1,C_2,C_3)=(+,0,+)$ means $C_1>0$, $C_2=0$,
 $C_3>0$ etc.) it follows that $A>1$ and $B>1$ and, thus,
 $f$ fails to be integrable (at a neighborhood of $+\infty$).
 The case
 $(C_1,C_2,C_3)=(+,+,0)$ is equivalent to $(C_1,C_2,C_3)=(0,+,+)$
 (by the map $x\mapsto 1-x$) and, again, $f$ fails to be
 integrable. By the same map, the cases $(+,0,0)$ and $(0,0,+)$
 are also equivalent. Assuming, e.g., $(C_1,C_2,C_3)=(0,0,+)$
 it is easily seen that $B>1$, $A+B<-4$ are necessary and
 sufficient for $f$ being integrable, differentiable
 at $0$ and $1$ and with finite third moment.
 In this case both (\ref{lim.eq}) and (\ref{non.trivial})
 are satisfied  so that
 the system $\{h_0,h_1,h_2\}$ is indeed orthogonal.
 Finally, if we assume that $(C_1,C_2,C_3)=(0,+,0)$
 then, necessarily, $A>1$, $B>1$ (for differentiability
 of $f$ at $0$ and $1$) and it follows that the system
 $\{h_0,h_1,h_2\}$ is indeed orthogonal, since both
 (\ref{lim.eq}) and (\ref{non.trivial})
 are satisfied.

 Next, assume  (3b). If
 $(C_1,C_2,C_3)\in\{(+,+,+),(+,0,+),(0,+,+)\}$
 it follows that $A>1$ and $B>1$ and, thus,
 $f$ fails to be integrable.
 If $(C_1,C_2,C_3)=(+,+,0)$ then $A>1$, $B>1$
 and (\ref{lim.eq}) fails. Also, if
 $(C_1,C_2,C_3)=(+,0,0)$ then $B>1$ and
 (\ref{lim.eq}) again fails.
 Assuming  $(C_1,C_2,C_3)=(0,0,+)$
 it is easily seen that $B>1$, $A+B<-4$ are necessary and
 sufficient for $f$ being integrable, differentiable
 at $0$ and $1$ and with finite third moment.
 In this case both (\ref{lim.eq}) and (\ref{non.trivial})
 are satisfied  so that
 the system $\{h_0,h_1,h_2\}$ is indeed orthogonal.
 Finally, if we assume that $(C_1,C_2,C_3)=(0,+,0)$
 then, necessarily, $A>1$, $B>1$ (for differentiability
 of $f$ at $0$ and $1$) and it follows that the system
 $\{h_0,h_1,h_2\}$ is indeed orthogonal, since both
 (\ref{lim.eq}) and (\ref{non.trivial})
 are satisfied.

 Finally, assume (3c). If
 $(C_1,C_2,C_3)\in\{(+,+,+),(+,0,+),(0,+,+),(+,+,0)\}$
 it follows that $A>1$ and $B>1$ and (\ref{lim.eq})
 fails.
 By the map $x\mapsto 1-x$ it is easily seen that
 the cases $(+,0,0)$ and $(0,0,+)$ are equivalent.
 Assuming, e.g., $(C_1,C_2,C_3)=(0,0,+)$
 it is easily seen that $B>1$ is necessary and
 sufficient for $f$ being integrable, differentiable
 at $0$ and $1$ and with finite third moment; but then,
 (\ref{lim.eq}) fails.
 Finally, if we assume that $(C_1,C_2,C_3)=(0,+,0)$
 then, necessarily, $A>1$, $B>1$ (for differentiability
 of $f$ at $0$ and $1$) and it follows that the system
 $\{h_0,h_1,h_2\}$ is indeed orthogonal, since both
 (\ref{lim.eq}) and (\ref{non.trivial})
 are satisfied.

 Therefore, the two distinct situations
 where orthogonality can be
 verified are given by
 \[
 f_1(x)=\left\{
 \begin{array}{ll}
 0, & \alpha<x\leq 0,
 \\
 C_2x^A(1-x)^B, & 0<x<1, \\
 0, & 1\leq x<\omega,
 \end{array}
 \right.
 \quad \textrm{and} \quad
 f_2(x)=\left\{
 \begin{array}{ll}
 0, & \alpha<x\leq 1, \\
 C_3x^A(x-1)^B, & 1< x<\infty,
 \end{array}
 \right.
 \]
 where $C_2>0$, $A>1$, $B>1$ and $-\infty\leq\alpha<0$,
 $1<\omega\leq \infty$
 for $f_1$;
 $C_3>0$, $B>1$, $A+B<-4$ and $-\infty\leq \alpha<0$
 for $f_2$.
 Now it is easily seen that both $f_1$ and $f_2$ belong to
 the Integrated Pearson family. Specifically,
 Proposition \ref{prop.2} shows that
 $f_1\sim\IPq$ with $\mu$ and $q$ as in
 (\ref{eq.two}), while
 $f_2\sim\IPq$ with $\mu$ and $q$ as in
 (\ref{eq.three}).
 %\[ \ \  \mbox{with}  \ \  \mu=\frac{A+1}{A+B+2}  \ \  \mbox{and}
 %\ \  q(x)=\frac{p_2(x)}{A+B+2}=\frac{x(1-x)}{A+B+2},
 % \]   while  \[   f_2\sim\IPq   \ \  \mbox{with}
 %\ \  \mu=\frac{A+1}{A+B+2}=1+\frac{B+1}{-A-B-2}  \ \
 %\mbox{and}   \ \  q(x)=\frac{p_2(x)}{A+B+2}=\frac{x(x-1)}{-A-B-2}.  \]

 Next, assume that $\deg(p_2)=1$ and, without loss of generality
 (by using an affine map) we shall further assume that $p_2=x$.
 If $p_1=a_0+a_1x$, the general solution of (\ref{Pearson2}) is
 \[
 f(x)=\left\{
 \begin{array}{cc}
 C_1x^{a_0}e^{a_1 x} & \mbox{if }\ x<0,
 \\
 C_2x^{a_0}e^{a_1 x} & \mbox{if }\ x>0,
 \end{array}
 \right.
 \]
 where $a_0$ and $a_1$ are arbitrary parameters and
 $C_1,C_2\geq 0$ are arbitrary constants,
 not both zero.
 The restrictions on $a_0$ and $a_1$ depend on the
 interval $(\alpha,\omega)$ that we consider
 and the positivity or vanishing of each branch;
 they have to be chosen in such a way that the resulting
 function is differentiable and integrable in
 $(\alpha,\omega)$.
 Assuming that $0<\alpha<\omega<\infty$ we readily see that
 any values of $a_0,a_1\in\R$  are admissible but
 (\ref{lim.eq}) fails. If $0<\alpha<\omega=\infty$ then
 either $a_1=0$ and $a_0<-3$ (for finiteness of the third
 moment) or $a_1<0$ and $a_0\in\R$. In the first case
 both (\ref{lim.eq}) and (\ref{non.trivial}) are violated:
 the limits are unequal although
 \[
 \int_{\alpha}^\infty h_k(x) h_m(x) f(x)\ud x=0
 \ \
 \mbox{for}
 \ \
 k\neq m,
 \ \
 k,m\in\{0,1,2\},
 \]
 because
 $h_1=h_2\equiv0$. In the second case, (\ref{lim.eq}) fails.
 If $\alpha=0<\omega<\infty$ then $a_0>-1$ and $a_1\in\R$;
 it follows that (\ref{lim.eq}) fails. Finally, if $\alpha=0$ and
 $\omega=\infty$ then $a_0>-1$ and $a_1<0$. In this case
 both (\ref{lim.eq}) and (\ref{non.trivial}) are satisfied
 and the system $\{h_0,h_1,h_2\}$ is, indeed, orthogonal.
 Also we see that $\E h_1(X)=a_1\mu+a_0+1=0$ so that
 $p_1+p_2'=a_1x+a_0+1=-a_1(\mu-x)$. Now, from Proposition
 \ref{prop.2} it follows that
 \be
 \label{eq.one}
 f\sim\IPq
 \ \
 \mbox{with}
 \ \
 \mu=\frac{a_0+1}{-a_1}
 \ \
 \mbox{and}
 \ \
 q(x)=\frac{x}{-a_1}=\frac{p_2(x)}{-a_1}.
 \ee
 By the map $x\mapsto-x$ we can transform the cases
 $-\infty\leq \alpha<\omega\leq0$ to the previous ones,
 since $p_2=x$ is transformed to $p_2=-x$. It remains to
 investigate the cases $-\infty\leq
 \alpha<0<\omega\leq\infty$; then, necessarily,
 $a_0>1$. Assuming that
 $-\infty<\alpha<0<\omega<\infty$ it is easily seen that
 (\ref{lim.eq}) fails for all choices of
 $(C_1,C_2)\in\{(+,+),(+,0),(0,+)\}$. Assuming that $\alpha=-\infty$,
 $\omega=\infty$ we see that for $f$ to be integrable
 it is necessary and sufficient that $a_1<0$ if $C_2>0$
 and $a_1>0$ if $C_1>0$; therefore, if $(C_1,C_2)=(+,+)$
 then $f$ is not integrable. The case $(C_1,C_2)=(+,0)$
 is transformed (by $x\mapsto-x$) to $(C_1,C_2)=(0,+)$.
 In the last case we can see that $a_0>1$ and $a_1<0$ are
 necessary and sufficient for $f$ to be differentiable
 (in $(\alpha,\omega)=\R$) and to have finite third moment.
 As before we can easily check that
 both (\ref{lim.eq}) and (\ref{non.trivial}) are satisfied,
 that $\{h_0,h_1,h_2\}$ is orthogonal and
 that $f\sim\IPq$ with $\mu$ and $q$ as in (\ref{eq.one}).
 %\[
 %f\sim\IPq
 %\ \
 %\mbox{with}
 %\ \
 %\mu=\frac{a_0+1}{-a_1},
 %\ \
 %q(x)=\frac{x}{-a_1}=\frac{p_2(x)}{-a_1}.
 %\]
 The map $x\mapsto-x$ shows that the last two cases, $\alpha=-\infty$,
 $0<\omega<\infty$,
 and $-\infty<\alpha<0$, $\omega=\infty$, are equivalent.
 By considering the second one we see that $a_0>1$
 and $a_1<0$ are necessary and sufficient for $f$ to be differentiable
 (in $(\alpha,\infty)$) and to have finite third moment.
 However, if $(C_1,C_2)\in\{(+,+),(+,0)\}$ it is easily
 seen that (\ref{lim.eq}) is violated because the limits
 as $x\searrow\alpha$ are nonzero. In the remaining case
 $(C_1,C_2)=(0,+)$ we can easily check, as before, that
 both (\ref{lim.eq}) and (\ref{non.trivial}) are satisfied,
 that $\{h_0,h_1,h_2\}$ is orthogonal and
 that $f\sim\IPq$ with
 $\mu$ and $q$ as in (\ref{eq.one}).
 %$\mu=\frac{a_0+1}{-a_1}$, $q(x)=\frac{x}{-a_1}=\frac{p_2(x)}{-a_1}$.

 Finally, assume that $\deg(p_2)=0$ or, equivalently,
 $p_2\equiv1$. Then, if $p_1=a_0+a_1x$, it follows that
 \[
 f(x)=C\exp(a_0x+a_1 x^2/2),
 \ \ \alpha<x<\omega.
 \]
 If the support $(\alpha,\omega)$ is bounded then it is
 easily seen that (\ref{lim.eq}) fails. The cases
 $-\infty<\alpha<\omega=\infty$ and $-\infty=\alpha<\omega<\infty$
 are, obviously, equivalent (by the map $x\mapsto-x$, which leaves
 $p_2$ unchanged). Assuming that $-\infty<\alpha<\omega=\infty$
 we see that either $a_1=0$, $a_0<0$ or $a_1<0$,
 $a_0\in\R$; in the first case both (\ref{lim.eq}) and
 (\ref{non.trivial}) fail, while (\ref{lim.eq}) fails in
 the second one. Finally, in the last remaining case where $(\alpha,\omega)=\R$
 we see that, necessarily, $a_1<0$. Then, for any value of
 $a_0\in\R$ we check that  both (\ref{lim.eq}) and (\ref{non.trivial})
 are satisfied so that $\{h_0,h_1,h_2\}$ is, indeed,
 orthogonal. Observe that, by assumption,
 $\E h_1(X)=a_1\mu+a_0=0$; thus, $p_1+p_2'=a_0+a_1
 x=-a_1(\mu-x)$. Proposition \ref{prop.2} shows
 that $f\sim\IPq$ with
 \[
 \mu=\frac{a_0}{-a_1}
 \ \
 \mbox{and}
 \ \
 q(x)=\frac{p_2(x)}{-a_1}=\frac{1}{-a_1};
 \ \
 \mbox{in fact},
 \ \
 f\sim N\left(a_0/(-a_1),(1/\sqrt{-a_1})^2\right).
 \]

 This subsumes all possible  cases and completes the proof.
 \end{proof}

 \section{Orthogonality of the Rodrigues-type polynomials and of
 their derivatives
 within the Integrated Pearson family}
 \label{sec5}
 Assume that $f$ is the density of a random
 variable $X\sim\IPq\equiv\IP$ with support
 $(\alpha,\omega)$.
 From Theorem \ref{theo.polynomials}
 it follows that the function
 \be
 \label{Rodrigues}
 P_k(x):=\frac{(-1)^k}{f(x)}\frac{\ud^k}{\ud{x}^k} [q^{k}(x)f(x)],
 \ \ \
 \alpha<x<\omega,
 \ \ \
 k=0,1,2,\ldots
 \ee
 is a polynomial with
 \be
 \label{lead}
 \deg(P_k)\leq k
 \ \
 \textrm{and}
 \ \
 \lead(P_k)=\prod_{j=k-1}^{2k-2}(1-j\delta):=c_k(\delta),
 \ \
 k=0,1,2,\ldots \ .
 \ee
 Obviously $c_0(\delta):=1$, i.e.\ an empty product should be
 treated as one.

 The polynomials $P_k$
 %defined by (\ref{Rodrigues})
 are
 special cases of the polynomials $h_k$
 defined by (\ref{polynomials});
 in fact, $P_k=(-1)^kh_k$.
 %, when $f\sim\IPq$.
 They are particularly important because under natural moment conditions
 they are, indeed, orthogonal with respect to the density $f$;
 see, e.g.,
 \cite{DZ} (pp.\ 295--296),
 %\cite[pp.\ 295--296]{DZ},
 \cite{John},
 \cite{Pap2},
 \cite{APP2}.
 The orthogonality follows immediately from
 Theorems \ref{theo.orthogonal} and
 \ref{theo.orth2}.
 %; see, also,
 %Diaconis and Zabell (1991, p.p.\ 295--296); Johnson (1993),
 %Papathanasiou (1995), Afendras et al.\ (2011).
 %This is true because, under natural moment conditions, they
 %satisfy the assumptions of Lemma \ref{lem.orth}
 %in Appendix \ref{app.c}.
 Moreover, the polynomials $P_k$ and their derivatives satisfy a number
 of useful properties that will be reviewed here.
 The first three are
 \be
 \label{three}
 \begin{split}
 &P_0(x)=1,\\
 &P_1(x)=x-\mu,\\
 &P_2(x)=(1-\delta)(1-2\delta)x^2-2(1-\delta)(\mu+\beta)x+\mu^2+\beta\mu-(1-2\delta)\gamma.
 \end{split}
 \ee

 An alternative simple proof of the orthogonality of the
 polynomials defined by (\ref{Rodrigues})
 can be derived by means of the following covariance
 identity, which extends Stein's identity
 for the Normal distribution and has independent interest
 in itself.
 %%%\pagebreak

 \begin{theo}[\textrm{\cite{APP2}, pp.\ 515--516}]
 %[\textrm{\cite{APP2}, pp.\ 515--516}]
 \label{theo.cov}
 Let $X\sim \IP\equiv\IPq$ with density $f$
 and support $(\alpha,\omega)$. Assume that $X$ has $2k$
 finite moments for some fixed $k\in\{1,2,\ldots\}$.
 Let $g:(\alpha,\omega)\to\R$ be any function such that
 $g\in C^{k-1}(\alpha,\omega)$, and assume that the function
 \[
 g^{(k-1)}(x):=\frac{d^{k-1}}{\ud{x}^{k-1}}g(x)
 \]
 is absolutely
 continuous in $(\alpha,\omega)$ with a.s.\ derivative $g^{(k)}$.
 %
 %\noindent
 If $\E q^k(X)|g^{(k)}(X)|<\infty$ then
 $\E|P_k(X)g(X)|<\infty$
 and the following covariance identity holds:
 \be
 \label{cov.id}
 \E P_k(X) g(X) = \E q^{k}(X)g^{(k)}(X).
 \ee
 \end{theo}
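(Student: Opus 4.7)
The plan is to establish both claims by $k$-fold integration by parts applied to the Rodrigues representation $P_k(x) f(x) = (-1)^k D^k[q^k(x) f(x)]$, supported by a structural lemma for the intermediate derivatives. The lemma states that for each $j = 0, 1, \ldots, k$ there is a polynomial $Q_{k,j}$ with $\deg Q_{k,j} \le j$ such that $D^j[q^k f] = Q_{k,j}\, q^{k-j} f$ on $(\alpha, \omega)$, with $Q_{k,0} \equiv 1$ and $Q_{k,k} = (-1)^k P_k$. It is proved by induction on $j$ using the product rule together with the Pearson relation $q(x) f'(x) = (\mu - x - q'(x)) f(x)$ from (\ref{Pearson}), yielding the recursion $Q_{k,j+1}(x) = q(x) Q_{k,j}'(x) + [(k-j-1) q'(x) + \mu - x]\, Q_{k,j}(x)$; the degree bound is preserved because $\deg q \le 2$ and $\deg((k-j-1)q' + \mu - x) \le 1$.

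With the structural lemma in hand, iterating integration by parts $k$ times gives
\[
\int_\alpha^\omega g \cdot [q^k f]^{(k)} \, dx = \sum_{j=0}^{k-1}(-1)^j \Big[ g^{(j)}(x)\, Q_{k,k-1-j}(x)\, q^{j+1}(x)\, f(x) \Big]_\alpha^\omega + (-1)^k \int_\alpha^\omega g^{(k)}\, q^k\, f \, dx.
\]
Multiplying by $(-1)^k$ and invoking the Rodrigues formula, the desired covariance identity (\ref{cov.id}) follows once every boundary term is shown to vanish; the same argument, run with $|g^{(k)}|$ in place of $g^{(k)}$, will simultaneously establish $\E|P_k(X) g(X)| < \infty$.

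The main obstacle is the boundary vanishing, because the hypothesis $\E q^k(X) |g^{(k)}(X)| < \infty$ controls only the top derivative of $g$, whereas $g^{(j)}$ with $j < k$ may a priori be unbounded near the endpoints. I would address this by applying Taylor's formula with integral remainder, $g^{(j)}(x) = \sum_{i=0}^{k-1-j} g^{(j+i)}(\mu)(x-\mu)^i/i! + \int_\mu^x (x-u)^{k-1-j}\, g^{(k)}(u) \, du/(k-1-j)!$ (valid since $g^{(k-1)}$ is absolutely continuous), and then estimating each piece via Tonelli together with the pointwise decay $\lim_{x \nearrow \omega} x^i q(x) f(x) = 0$ for $i = 0, \ldots, 2k-1$ provided by Lemma \ref{lem.limits2} (applicable because $X$ has $2k$ moments). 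At a finite endpoint, the additional fact $q(\omega) = 0$ (Proposition \ref{prop.1}(v)) supplies extra vanishing through the factor $q^{j+1}(x)$; at an infinite endpoint, the polynomial decay of $q^{j+1}(x) f(x)$ against the controlled growth of the iterated integral of $g^{(k)}$ closes the argument, the interchange of limits being justified by the finiteness of $\E q^k |g^{(k)}|$.
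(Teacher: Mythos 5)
The paper itself does not prove Theorem \ref{theo.cov} --- it is imported verbatim from \cite{APP2} --- so your argument has to stand on its own. Its skeleton is sound: the structural lemma $D^j[q^kf]=Q_{k,j}\,q^{k-j}f$ with $\deg Q_{k,j}\le j$ is correct (your recursion $Q_{k,j+1}=qQ_{k,j}'+[(k-j-1)q'+\mu-x]Q_{k,j}$ checks out against the Pearson relation), the $k$-fold integration by parts is legitimate on compact subintervals because $f\in C^{\infty}(J)$ and $g^{(k-1)}$ is absolutely continuous, and the polynomial part of each Taylor-expanded boundary term does vanish by Lemma \ref{lem.limits2}, since its total degree is at most $2k-2$.

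The genuine gap is in the Taylor-remainder part of the boundary terms, which you rightly flag as the main obstacle but do not actually close. The quantity $\bigl|\int_\mu^x(x-u)^{m}g^{(k)}(u)\,du\bigr|$ (with $m=k-1-j$) has no a priori ``controlled growth'': the hypothesis $\E q^k(X)|g^{(k)}(X)|<\infty$ permits $g^{(k)}(u)$ to grow like $e^{\lambda u/2}$ for a Gamma-type $f$, so the iterated integral can grow exponentially, and $q^{j+1}(x)f(x)$ does not decay polynomially there either; the proposed mechanism ``polynomial decay against controlled growth'' at an infinite endpoint is therefore not the one that works. What is actually required is a kernel domination: writing the remainder contribution as $\int_\mu^xK(x,u)\,q^k(u)f(u)|g^{(k)}(u)|\,du$ with $K(x,u)=(x-u)^m|Q_{k,m}(x)|\,q^{j+1}(x)f(x)\big/\bigl(q^k(u)f(u)\bigr)$, one must prove $\sup_{\mu\le u\le x<\omega}K(x,u)<\infty$, after which dominated convergence together with Lemma \ref{lem.limits2} finishes the limit. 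That bound is precisely where $\delta<\frac{1}{2k-1}$ enters: the naive estimate via eventual monotonicity of $q^kf$ gives only $K\le (x-u)^m|Q_{k,m}(x)|/q^m(x)$, which is already unbounded for the Gamma type, so one needs either a case-by-case verification from the explicit densities of Table \ref{table densities} or a comparison lemma of the form $\int_u^\omega(1+|x|^i)\,q^j(x)f(x)\,dx\le C(1+|u|^{i-1})\,q^{j+1}(u)f(u)$, proved by an l'H\^{o}pital-type argument. Neither appears in your sketch. A second, smaller gap: replacing $g^{(k)}$ by $|g^{(k)}|$ produces a new function $\tilde g$ for which $P_k(X)\tilde g(X)$ is not $|P_k(X)g(X)|$, so that substitution does not yield $\E|P_k(X)g(X)|<\infty$; this claim needs the separate integrability chain $\E(1+|X|^{k-j})\,q^j(X)|g^{(j)}(X)|<\infty$ for $j=k,k-1,\dots,0$, obtained by downward induction from Tonelli and the same comparison lemma.
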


 \noindent
 It should be noted that when we claim that
 $h:(\alpha,\omega)\to\R$ is an absolutely continuous
 function with a.s.\ derivative $h'$ we mean that there exists
 a Borel measurable function $h':(\alpha,\omega)\to \R$ such that
 $h'$ is integrable in every finite subinterval
 $[x,y]$ of $(\alpha,\omega)$
 such that
 \[
 \int_{x}^y h'(t)\ud{t}=h(y)-h(x) \quad \textrm{for all}
 \quad  [x,y]\subseteq (\alpha,\omega).
 \]

 \begin{cor}[\textrm{\cite{APP2}, p.\ 516}]
 %[\textrm{\cite[pp.\ 516]{APP2}}]
 \label{cor.orth}
 Let $X\sim \IP\equiv\IPq$.
 %with density $f$
 %and support $(\alpha,\omega)$.
 Assume that for some
 $n\in\{1,2,\ldots\}$,
 $\E |X|^{2n}<\infty$ or, equivalently,
 $\delta<\frac{1}{2n-1}$.
 Then
 \be
 \label{orth}
 \begin{split}
 \E[P_k(X) P_m(X)] = \delta_{k,m} k! \E q^k(X)\!
 \prod_{j=k-1}^{2k-2}\!(1-j\delta) = \delta_{k,m} k! c_k(\delta)\E q^k(X),
 \\
 k,m\in\{0,1,\ldots,n\},
 \end{split}
 \ee
 where $\delta_{k,m}$ is Kronecker's delta and where an empty product
 should be treated as one.
 \end{cor}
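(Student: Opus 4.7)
The plan is to apply the covariance identity of Theorem \ref{theo.cov} with the choice $g=P_m$, which reduces both the orthogonality and the normalization to essentially trivial computations.

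First I would verify the integrability conditions. Since $\E|X|^{2n}<\infty$ is equivalent to $\delta<\frac{1}{2n-1}$ by Corollary \ref{cor.moments}, and since for every $k,m\in\{0,1,\ldots,n\}$ the product $P_k(x)P_m(x)$ is a polynomial of degree at most $k+m\leq 2n$, the moment $\E|P_k(X)P_m(X)|$ is finite. Likewise, $q^k(X)$ has degree at most $2k\leq 2n$ in $X$, so $\E q^k(X)<\infty$. This lets me legitimately apply Theorem \ref{theo.cov} with $g=P_m\in C^{\infty}(\alpha,\omega)$.

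Next, for the off-diagonal case $m<k$ (which covers $k\neq m$ by symmetry since $\E P_k P_m=\E P_m P_k$), I would take $g=P_m$ in (\ref{cov.id}). Because $\deg(P_m)\leq m<k$, the $k$-th derivative $g^{(k)}\equiv 0$, and the condition $\E q^k(X)|g^{(k)}(X)|<\infty$ is trivially met. Hence
\[
\E P_k(X)P_m(X)=\E q^k(X)\,P_m^{(k)}(X)=0.
\]
For the diagonal case $k=m$, I again take $g=P_k$. Now $g^{(k)}(x)$ is the constant $k!\,\lead(P_k)=k!\,c_k(\delta)$ by (\ref{lead}), so applying (\ref{cov.id}) yields
\[
\E P_k^2(X)=\E q^k(X)\cdot k!\,c_k(\delta)=k!\,c_k(\delta)\,\E q^k(X),
\]
which matches the stated normalization $k!\prod_{j=k-1}^{2k-2}(1-j\delta)\,\E q^k(X)$.

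I do not foresee a real obstacle here: the whole argument is essentially one line once Theorem \ref{theo.cov} is in hand. The only subtlety worth double-checking is bookkeeping at the boundary case $k=0$, where the empty product convention $c_0(\delta)=1$ together with $P_0\equiv 1$ gives $\E P_0^2=1=0!\,c_0(\delta)\,\E q^0(X)$, consistent with the formula. Combining the two cases via the Kronecker delta produces the displayed identity (\ref{orth}).
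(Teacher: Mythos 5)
Your proposal is correct and is essentially the argument the paper intends: Corollary \ref{cor.orth} is presented (with citation to \cite{APP2}) precisely as an immediate consequence of the covariance identity of Theorem \ref{theo.cov}, obtained by taking $g=P_m$ so that $g^{(k)}\equiv 0$ when $m<k$ and $g^{(k)}\equiv k!\,c_k(\delta)$ when $m=k$. Your handling of the integrability hypotheses and of the degenerate case $k=m=0$ is accurate, so there is nothing to add.
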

 It should be noted that the orthogonality of $P_k$ and
 $P_m$, $k\neq m$, $k,m\in\{0,1,\ldots,n\}$,
 remains valid even if
 $\delta\in[\frac{1}{2n-1},\frac{1}{2n-2})$; in this case, however,
 $P_n\not\in L^2(\R,X)$ since $\lead(P_n)>0$ and $\E|X|^{2n}=\infty$.
 On the other hand, in view of Corollary \ref{cor.moments},
 the assumption $\E|X|^{2n}<\infty$ is
 equivalent to the condition $\delta<\frac{1}{2n-1}$.
 Therefore, for each $k\in\{0,1,\ldots,n\}$
 and for all $j\in\{k-1,\ldots,2k-2\}$,
 $1-j\delta>0$ because $\{k-1,\ldots,2k-2\}\subseteq\{0,1,\ldots,2n-2\}$.
 Thus, $c_k(\delta)>0$. Since $\Pr[q(X)>0]=1$, $\deg(q)\leq 2$ and
 $\E|X|^{2n}<\infty$ we conclude that $0<\E q^k(X)<\infty$ for all
 $k\in\{0,1,\ldots,n\}$. It follows that the set
 $\{\phi_0,\phi_1,\ldots,\phi_n\}\subset L^2(\R,X)$,
 where
 \be
 \label{orthonormal}
 \phi_k(x):=\frac{P_k(x)}{\left(k! c_k(\delta)
 \E q^k(X)\right)^{1/2}}=
 \frac{\frac{(-1)^k}{f(x)}\frac{\ud^k}{\ud x^k}[q^k(x)f(x)]}{\left(k!
 \E q^k(X)\prod_{j=k-1}^{2k-2}(1-j\delta)\right)^{1/2}}, \ \ k=0,1,\ldots,n,
 \ee
 is an orthonormal basis of all polynomials with degree at most $n$.
 Moreover, (\ref{lead}) shows that the leading coefficient is given by
 \be
 \label{lead.orth}
 \begin{split}
 \lead(\phi_k):=d_k(\mu;q)&=
 \left(\frac{\prod_{j=k-1}^{2k-2}(1-j\delta)}{k! \E q^k(X)}\right)^{1/2}\\
 &=\left(\frac{c_k(\delta)}{k! \E q^k(X)}\right)^{1/2}>0,
 \ \ \ k=0,1,\ldots,n.
 \end{split}
 \ee
 Let $X$ be any random variable with $\E|X|^{2n}<\infty$ and
 assume that the support of $X$ is not concentrated on a finite subset
 of $\R$. It is well known that we can always construct
 an orthonormal set of real polynomials up to order $n$.
 This construction is based on the first $2n$ moments of $X$
 and is a by-product of the Gram-Schmidt orthonormalization
 process, applied to the linearly independent system
 $\{1,x,x^2,\ldots,x^n\}\subset L^2(\R,X)$. The orthonormal
 polynomials are then uniquely defined, apart from the fact that
 we can multiply each polynomial by $\pm1$. It follows that
 the standardized Rodrigues polynomials $\phi_k$ of
 (\ref{orthonormal}) are the unique orthonormal polynomials
 that can be defined for a density $f\sim\IP$,
 provided that $\lead(\phi_k)>0$. Therefore,
 it is useful to express the $L^2$-norm of each $P_k$
 in terms of the parameters $\delta,\beta,\gamma$ and
 $\mu$ and, in view of (\ref{orth}) and (\ref{orthonormal}),
 it remains to obtain an expression for
 $\E q^k(X)$.
 To this end, we first
 recall a definition from \cite{PP}; cf. \cite{GR1}.

 \begin{defi}
 \label{def.star}
 Let $X\sim f$ and assume that $X$ has support
 $J(X)=(\alpha,\omega)$ and belongs to
 the integrated Pearson family, that is, $f\sim \IPq\equiv\IP$.
 Furthermore, assume that $\E X^2<\infty$ (i.e.\ $\delta<1$).
 Then we define $X^*$ to be the random variable
 with density $f^*$ given by
 \be
 \label{star}
 f^*(x):=\frac{q(x)f(x)}{\E q(X)}, \ \ \ \alpha<x<\omega.
 \ee
 \end{defi}
 Since $P_1=x-\mu$, setting $k=1$ in the covariance identity (\ref{cov.id})
 we get (see \cite{CP2}, \cite{PP})
 \be
 \label{cov.id2}
 \E[(X-\mu)g(X)]=\Cov[X,g(X)]=\E[q(X)g'(X)].
 \ee
 This identity is valid for all absolutely continuous functions
 $g:(\alpha,\omega)\to\R$ with a.s.\ derivative $g'$
 such that $\E q(X)|g'(X)|<\infty$.
 Thus, applying (\ref{cov.id2}) to the identity function
 $g(x)=x$ it is easily seen that $\E q(X)=\Var X=\sigma^2$,
 so that (cf. \cite{GR1})
 \[
 X^*\sim f^*(x)=\frac{1}{\sigma^2}q(x)f(x), \ \ \ \alpha<x<\omega.
 \]
 The following lemma shows that $X^*$ is integrated Pearson
 whenever $X$ is integrated Pearson and has finite third moment.
 \begin{lem}
 \label{lem.star}
 {\rm
 If $X\sim\IP\equiv\IPq$ with support $J(X)=(\alpha,\omega)$ and
 $\E|X|^3<\infty$ then $X^*\sim \IPqb$ with the same support
 $J(X^*)=J(X)=(\alpha,\omega)$,
 \be
 \label{eq.qstar}
 \mu^*=\frac{\mu+\beta}{1-2\delta},
 \quad \textrm{and} \quad q^*(x)=\frac{q(x)}{1-2\delta},
 \ \ \alpha<x<\omega.
 \ee
 }
 \end{lem}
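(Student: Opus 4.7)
The plan is to verify the hypotheses of Proposition \ref{prop.2} for the density $f^*$ with the candidate pair $p_1^*(x) = \mu - x$, $p_2^*(x) = q(x)$, and then read off $q^*$ and $\mu^*$ from the conclusion. The support claim is almost immediate: since $q(x) > 0$ on $J(X) = (\alpha, \omega)$ by Proposition \ref{prop.1}(iv), the function $f^*(x) = q(x) f(x)/\sigma^2$ is strictly positive precisely on $(\alpha, \omega)$, so $J(X^*) = J(X)$.

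I would next verify the ordinary Pearson equation for $f^*$ on $J$ by direct computation:
\[
\frac{(f^*)'(x)}{f^*(x)} = \frac{q'(x)}{q(x)} + \frac{f'(x)}{f(x)} = \frac{q'(x)}{q(x)} + \frac{\mu - x - q'(x)}{q(x)} = \frac{\mu - x}{q(x)},
\]
where the middle equality uses (\ref{Pearson}). Thus $f^*$ satisfies (\ref{Pearson.ord}) with $(p_1^*, p_2^*) = (\mu - x, q)$. To check condition (B)(i) of Proposition \ref{prop.2}, I would compute
\[
p_1^*(x) + (p_2^*)'(x) = \mu - x + q'(x) = (\mu + \beta) - (1 - 2\delta) x.
\]
Since $\E|X|^3 < \infty$ forces $\delta < 1/2$ by Corollary \ref{cor.moments}, we have $1 - 2\delta > 0$, so the above equals $(\mu^* - x)/\theta$ with $\theta := 1/(1-2\delta) \neq 0$ and $\mu^* := (\mu+\beta)/(1-2\delta)$. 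To apply Proposition \ref{prop.2} I still need $\E X^* = \mu^*$; this follows from the covariance identity (\ref{cov.id2}) applied with $g(x) = q(x)$, which gives $\E[(X-\mu)q(X)] = \E[q(X)q'(X)] = 2\delta\E[Xq(X)] + \beta\sigma^2$, or equivalently $(1-2\delta)\E[Xq(X)] = (\mu+\beta)\sigma^2$, whence $\E X^* = \E[Xq(X)]/\sigma^2 = \mu^*$.

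The main obstacle is condition (B)(ii), which requires $p_2^*(x) f^*(x) = q^2(x) f(x)/\sigma^2 \to 0$ at some endpoint. At a finite endpoint, $q$ vanishes by Proposition \ref{prop.1}(v), so the limit is trivially zero. At an infinite endpoint I would write
\[
q^2(x) f(x) = \delta \cdot x^2 \cdot q(x) f(x) + \beta \cdot x \cdot q(x) f(x) + \gamma \cdot q(x) f(x),
\]
and apply Lemma \ref{lem.limits2} with $n = 3$ (justified by $\E|X|^3 < \infty$), which yields $\lim x^k q(x) f(x) = 0$ for $k = 0, 1, 2$; summing the three terms gives the required limit zero. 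Once (ii) is verified, Proposition \ref{prop.2} delivers $X^* \sim \IPqb$ with $q^*(x) = \theta\, p_2^*(x) = q(x)/(1-2\delta)$ and $\mu^* = (\mu + \beta)/(1-2\delta)$, completing the proof.
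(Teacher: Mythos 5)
Your proof is correct, but it takes a genuinely different route from the paper's. The paper verifies the defining integral identity (\ref{qua1}) for $f^*$ directly: it shows that $q^2(x)f(x)-\int_{-\infty}^x\{\mu+\beta-(1-2\delta)t\}q(t)f(t)\,\ud t$ has vanishing derivative on $(\alpha,\omega)$ and then pins down the constant of integration by proving $q^2(x)f(x)\to0$ at $\omega$ (via $q(x)f(x)=o(x^{-2})$ when $\omega=\infty$). You instead outsource the integration step to Proposition \ref{prop.2}, checking its conditions (B)(i)--(ii) for the pair $(p_1^*,p_2^*)=(\mu-x,q)$; this is legitimate since that proposition is established earlier, and the endpoint limit you need is the same one the paper needs, which you obtain from Lemma \ref{lem.limits2} with $n=3$ rather than by a direct estimate. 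Your computation of $\E X^*$ is genuinely slicker: a single application of (\ref{cov.id2}) with $g=q$ replaces the paper's detour through $P_1$, $P_2$, their orthogonality relations, and the primitive $g$ with $g'(x)=x$. Two small points to tighten: Proposition \ref{prop.2} also requires $X^*$ to have finite mean, which you should note follows from $\E|X|^3<\infty$ because $|x|q(x)$ is dominated by a cubic; and at a finite endpoint the phrase ``$q$ vanishes so the limit is trivially zero'' should be supplemented by the observation that $q(x)f(x)=\int_\alpha^x(\mu-t)f(t)\,\ud t$ stays bounded (indeed tends to $0$), so that $q^2f=q\cdot(qf)\to0$.
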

 \begin{proof} From Corollary \ref{cor.moments}
 it follows that the assumption $\E|X|^3<\infty$ is equivalent to
 $\delta<\frac{1}{2}$. Let
 $X^*\sim f^*(x)=q(x)f(x)/\E q(X)=q(x)f(x)/\sigma^2$, $\alpha<x<\omega$,
 where
 $\sigma^2$ is the variance of $X$. Then, it follows that
 \[
 \mu^*=\E X^*=\frac{\E[Xq(X)]}{\sigma^2}.
 \]
 Define $P_1(x)=x-\mu$ and
 $P_2(x)=(x-\mu)^2-(x-\mu)q'(x)-(1-2\delta)q(x)$.
 We have
 $\E P_1(X)=0$ and $\E P_2(X)=\sigma^2-\Cov[X,q'(X)]-(1-2\delta)\E q(X)$.
 Applying the covariance identity (\ref{cov.id2}) to
 $g(x)=x$ and to $g(x)=q'(x)$ we see that
 $\E P_2(X)=2\delta\sigma^2-\E[q(X)q''(X)]=2\delta\sigma^2-2\delta\E q(X)=0$.
 Also,
 \begin{eqnarray*}
 \E [P_1(X)P_2(X)] & = & \E (X-\mu)^3-\E[(X-\mu)^2 q'(X)]-(1-2\delta)\E[(X-\mu)q(X)]
 \\
 &=&
 \Cov[X,(X-\mu)^2]-\Cov[X,(X-\mu)q'(X)]-(1-2\delta)\Cov[X,q(X)]
 \end{eqnarray*}
 and, once again, (\ref{cov.id2}) shows that $\E[P_1(X)P_2(X)]=0$.
 Now observe that
 \[
 x=\left(\frac{1}{2(1-\delta)(1-2\delta)}P_2(x)
 +\frac{\mu+\beta}{1-2\delta} P_1(x)\right)'=g'(x), \quad \textrm{say},
 \]
 so that
 \[
 \begin{split}
 \E X q(X)&=\E q(X)g'(X)=\Cov[X,g(X)]=\E (X-\mu)g(X)=\E P_1(X)g(X)
 \\
 &=\frac{1}{2(1-\delta)(1-2\delta)}
 \
 \E P_1(X)P_2(X)+\frac{\mu+\beta}{1-2\delta} \ \E P_1^2(X)
 \\
 &=0+\frac{\mu+\beta}{1-2\delta} \
 \E (X-\mu)^2=\frac{\mu+\beta}{1-2\delta} \ \sigma^2.
 \end{split}
 \]
 It follows that $\mu^*
 =\E[Xq(X)]/\sigma^2=(\mu+\beta)/(1-2\delta)$.

 It remains to show that $q^*(x)=q(x)/(1-2\delta)$ is the
 quadratic polynomial of $X^*$, i.e.\ that
 \[
 \int_{-\infty}^x (\mu^*-t)
 f^*(t)\ud{t}=\frac{1}{1-2\delta}\ q(x) f^*(x), \ \ \ x\in\R.
 \]
 Equivalently, it suffices to verify the identity
 \be
 \label{qstar}
 \int_{-\infty}^x \{\mu+\beta-(1-2\delta)t\}
 q(t)f(t)\ud{t}= q^2(x) f(x), \ \ \ x\in\R.
 \ee
 Since $f(x)=0$ for $x\notin (\alpha,\omega)$ it follows that the l.h.s.\
 of (\ref{qstar}) equals to zero for $x\leq \alpha$ (if
 $\alpha>-\infty$). Also, if $\omega<\infty$ and $x\geq \omega$ then the
 l.h.s.\ of (\ref{qstar}) is equal to
 $(\mu+\beta)\E
 q(X)-(1-2\delta)\E Xq(X) =(\mu+\beta)\sigma^2
 -(1-2\delta)\frac{\mu+\beta}{1-2\delta}\sigma^2
 =0$.
 Thus, (\ref{qstar}) takes the form $0=0$ whenever $x\notin
 (\alpha,\omega)$.
 For $x\in (\alpha,\omega)$ it is easily seen that
 \[
 \begin{split}
 \Big(&q^2(x)f(x)-\int_{-\infty}^x\{\mu+\beta-(1-2\delta)t\}q(t)f(t)\ud{t}\Big)'\\
 &=(q(x)\cdot q(x)f(x))'-\{\mu+\beta-(1-2\delta)x\}q(x)f(x)\\
 &=q'(x)q(x)f(x)+q(x)(\mu-x)f(x)-q(x)f(x)\{\mu+\beta-(1-2\delta)x\}\\
 &=q(x)f(x)\ [q'(x)+(\mu-x)-(\mu+\beta)+(1-2\delta)x]=q(x)f(x)\ [q'(x)-2\delta x-\beta]=0.
 \end{split}
 \]
 Thus, there exists a constant $c\in\R$ such that
 \be
 \label{eq.de}
 \int_{-\infty}^x
 \{\mu+\beta-(1-2\delta)t\}q(t)f(t)\ud{t}=q^2(x)f(x)+c, \ \ \
 \alpha<x<\omega.
 \ee
 Now observe that
 $\lim_{x\nearrow\omega}\int_{-\infty}^x \{\mu+\beta-(1-2\delta)t\}q(t)f(t)\ud{t}
 =\lim_{x\nearrow\omega} q^2(x)f(x)=0$. Indeed, the first
 limit follows from dominated convergence and the fact that
 $\E q(X)=\sigma^2$ and
 $\E[Xq(X)]=(\mu+\beta)\sigma^2/(1-2\delta)$,
 while the second one is obvious when $\omega<\infty$
 because $q(\omega)=0$ and $q(x)f(x)\to \E(\mu-X)=0$ as
 ${x\nearrow\omega}$.
 Finally, if $\omega=\infty$ we have
 $q(x)f(x)=o(x^{-2})$ as $x\to\infty$ because
 $\E|X|^3<\infty$ and for large enough $x$,
 \[
 x^2 q(x)f(x)=x^2 \int_{x}^{\infty} (t-\mu)f(t)\ud{t}
 \leq \int_{x}^{\infty} t^2 (t-\mu) f(t)\ud{t} \to 0,
 \ \
 \mbox{as}
 \ \ x\to\infty,
 \]
 by
 dominated convergence. This shows that $\lim_{x\nearrow\omega}q^2(x)f(x)=0$
 in all cases. Therefore,
 taking limits as $x\nearrow\omega$ in (\ref{eq.de}) we conclude
 that $c=0$ and (\ref{qstar})
 follows.
 \end{proof}

 \begin{theo}
 \label{theo.star}
 Let $X$ be a random variable with density
 $f\sim \IPq\equiv\IP$,
 supported in $J(X)=(\alpha,\omega)$.
 Furthermore, assume that $\E |X|^{2n+1}<\infty$ (i.e.\ $\delta<\frac{1}{2n}$)
 for some fixed $n\in\{0,1,\ldots\}$. Define the random variable
 $X_k$ with density $f_k$ given by
 \be
 \label{star2}
 f_k(x):=\frac{q^k(x)f(x)}{\E q^k(X)},
 \ \ \
 \alpha<x<\omega,
 \ \ \
 k=0,1,\ldots,n.
 \ee
 Then, $f_k\sim\IPqk$
 %and $X_k$ has the same support
 with (the same) support
 $J(X_k)=J(X)=(\alpha,\omega)$,
 \be
 \label{eq.qstar2}
 \mu_k=\frac{\mu+k\beta}{1-2k\delta},
 \ \
 \textrm{and}
 \ \
 q_k(x)=\frac{q(x)}{1-2k\delta},
 \ \
 \alpha<x<\omega,
 \ \
 k=0,1,\ldots,n.
 \ee
 Moreover, $X_0=X$, $X_1=X_0^*=X^*$, $X_2=X_1^*$ and, in
 general, $X_k=X_{k-1}^*$ for $k\in\{1,\ldots,n\}$.
 \end{theo}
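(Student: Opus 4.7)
The plan is to argue by induction on $k$, taking Lemma \ref{lem.star} (the one-step ``star'' operation) as the key ingredient. The base case $k=0$ is immediate, with $X_0=X\sim\IPq$, $\mu_0=\mu$ and $q_0=q$, and with $f_0=f$ supported on $(\alpha,\omega)$.

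For the inductive step, assume that $X_{k-1}\sim\IPqk[k-1]$ with the claimed formulas, and that $J(X_{k-1})=(\alpha,\omega)$. First I would check that Lemma \ref{lem.star} applies to $X_{k-1}$, i.e.\ that $\E|X_{k-1}|^3<\infty$. Since
\[
\E|X_{k-1}|^3=\frac{\E[|X|^3 q^{k-1}(X)]}{\E q^{k-1}(X)}
\]
and $|x|^3 q^{k-1}(x)$ grows at most like $|x|^{2k+1}\le |x|^{2n+1}$ (because $\deg(q)\le 2$ and $k\le n$), the assumption $\E|X|^{2n+1}<\infty$ is precisely what is needed. The positivity of the denominator $1-2(k-1)\delta$ is guaranteed by $\delta<\tfrac{1}{2n}$, so the inductive hypothesis is coherent.

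Next, I would identify $X_k$ with $X_{k-1}^{*}$ by a direct density computation. Writing $\delta_{k-1}$, $\beta_{k-1}$ for the coefficients of $q_{k-1}(x)=q(x)/(1-2(k-1)\delta)$, the density of $X_{k-1}^{*}$ prescribed by Definition \ref{def.star} is
\[
\frac{q_{k-1}(x)f_{k-1}(x)}{\E q_{k-1}(X_{k-1})}
\ \propto\ q(x)\cdot q^{k-1}(x)f(x)\ =\ q^{k}(x)f(x),
\]
which, after normalization, coincides with $f_k$; hence $X_k=X_{k-1}^{*}$. Applying Lemma \ref{lem.star} to $X_{k-1}$ then gives $X_k\sim\IPqkb[k-1]$ with $q_{k-1}^{*}(x)=q_{k-1}(x)/(1-2\delta_{k-1})$ and $\mu_{k-1}^{*}=(\mu_{k-1}+\beta_{k-1})/(1-2\delta_{k-1})$. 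Substituting $\delta_{k-1}=\delta/(1-2(k-1)\delta)$ and $\beta_{k-1}=\beta/(1-2(k-1)\delta)$, a brief simplification, using
\[
1-2\delta_{k-1}=\frac{1-2k\delta}{1-2(k-1)\delta},\qquad
\mu_{k-1}+\beta_{k-1}=\frac{\mu+k\beta}{1-2(k-1)\delta},
\]
yields the asserted formulas $q_k(x)=q(x)/(1-2k\delta)$ and $\mu_k=(\mu+k\beta)/(1-2k\delta)$. The support identity $J(X_k)=(\alpha,\omega)$ is automatic, since by Proposition \ref{prop.1}(iv) we have $q^k(x)>0$ on $(\alpha,\omega)$, so $f_k=q^kf/\E q^k(X)$ is positive exactly on $(\alpha,\omega)$.

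I do not anticipate a serious obstacle; the statement is structurally an iterated application of Lemma \ref{lem.star}. The only thing that needs careful bookkeeping is the chain of positivity conditions $1-2j\delta>0$ for $j=0,1,\ldots,k$: these are exactly what the global hypothesis $\delta<\tfrac{1}{2n}$ supplies, and they are precisely what makes the recursive denominators meaningful and keeps $q_k$ a positive scalar multiple of $q$ (so that $X_k$ remains an admissible integrated Pearson variate at every step).
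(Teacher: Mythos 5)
Your proposal is correct and follows essentially the same route as the paper: induction on $k$, verifying $\E|X_{k-1}|^3<\infty$ from the global moment hypothesis, identifying $f_k$ with the density of $X_{k-1}^{*}$ by the proportionality $q\cdot q^{k-1}f=q^{k}f$, and then applying Lemma \ref{lem.star} together with the same algebraic simplification of $\mu_{k-1}^{*}$ and $q_{k-1}^{*}$. The only cosmetic difference is that the paper words the induction as a step from $k$ to $k+1$ (checking $\E|X_k|^3<\infty$ via $\E|X|^{2k+3}<\infty$ for $k\le n-1$), which is the same bookkeeping you carry out with the index shifted by one.
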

 \begin{proof}
 For $k=0$ the assertion is obvious while for $k=1$ (and thus, $n\geq 1$)
 the assertion follows from Lemma \ref{lem.star} since $\E|X|^3<\infty$ and,
 by definition, $f_1=f^*$, $\mu_1=\mu^*$ and $q_1=q^*$.
 Assume now that the assertion has been proved for some
 $k\in\{1,\ldots,n-1\}$. Then,
 \[
 \E|X_k|^3=\frac{\E q^k(X)|X|^3}{\E q^k(X)}<\infty,
 \]
 because $\E |X|^{2k+3}<\infty$ since $k\leq n-1$.
 Therefore, we can apply Lemma \ref{lem.star} to the random
 variable
 $X_k\sim\IPqk\equiv\mbox{IP}(\mu_k;\delta_k,\beta_k,\gamma_k)$
 %\equiv\mbox{IP}(\frac{\mu+k\beta}{1-2k\delta};
 %\frac{\delta}{1-2k\delta},\frac{\beta}{1-2k\delta},\frac{\gamma}{1-2k\delta})$,
 obtaining $X_k^*\sim\IPqkb\equiv\IPkb$ where
 \[
 \mu_k^*=\frac{\mu_k+\beta_k}{1-2\delta_k}
 =\frac{\frac{\mu+k\beta}{1-2k\delta}
 +\frac{\beta}{1-2k\delta}}{1-2\frac{\delta}{1-2k\delta}}
 =\frac{\mu+(k+1)\beta}{1-2(k+1)\delta}=\mu_{k+1}
 \]
 and
 \[
 q_k^*(x)=\frac{q_k(x)}{1-2\delta_k}
 =\frac{\frac{q(x)}{1-2k\delta}}{1-2\frac{\delta}{1-2k\delta}}
 =\frac{q(x)}{1-2(k+1)\delta}=q_{k+1}(x), \quad
 \alpha<x<\omega.
 \]
 On the other hand, since
 $\E q(X_k)=\frac{\E q^{k+1}(X)}{\E q^{k}(X)}$
 and $X_k^*\sim f_k^*$ we get
 \[
 f_k^*(x)=\frac{q_k(x)f_k(x)}{\E q_k(X_k)}
 =\frac{\frac{q(x)}{1-2k\delta}
 \frac{q^k(x)f(x)}{\E q^k(X)}}{\frac{\E q(X_k)}{1-2k\delta}}
 =\frac{\frac{q^{k+1}(x)f(x)}{\E q^k(X)}}
 {\frac{\E q^{k+1}(X)}{\E q^{k}(X)}}
 =\frac{q^{k+1}(x)f(x)}{\E q^{k+1}(X)}=f_{k+1}(x),
 \quad
 \alpha<x<\omega,
 \]
 that is, $X_k^*=X_{k+1}\sim f_{k+1}\sim \mbox{IP}(\mu_{k+1};q_{k+1})$,
 and the proof is complete.
 \end{proof}
 \begin{cor}
 \label{cor.Eq_k}
 If $X\sim\IPq$ and $\E|X|^{2n+2}<\infty$
 (equivalently, if $\delta<\frac{1}{2n+1}$)
 then for each
 $k\in\{0,1,\ldots,n\}$,
 \be
 \label{Eq_k}
 \sigma_k^2:=\Var X_k=\E
 q_k(X_k)=\frac{q(\frac{\mu+k\beta}{1-2k\delta})}{1-(2k+1)\delta},
 \ee
 where $q_k(x)=\delta_k x^2+\beta_k x+\gamma_k$
 and $X_k$ are as in Theorem \ref{theo.star}.
 In particular, if $\delta<1$ then
 \be
 \label{Eq_k2}
 \sigma^2:=\Var X=\E q(X)=\frac{q(\mu)}{1-\delta}.
 \ee
 \end{cor}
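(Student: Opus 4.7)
The plan is to first establish the base case $k=0$ by a direct application of the covariance identity (\ref{cov.id2}), and then to deduce the general case by applying the base case to each $X_k$ separately, exploiting the Integrated Pearson structure of $X_k$ provided by Theorem \ref{theo.star}.

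For the base case, I would set $g(x)=x$ in (\ref{cov.id2}) to obtain $\Var X=\E[(X-\mu)X]=\E q(X)$. Expanding $\E q(X)=\delta\E X^2+\beta\mu+\gamma=\delta\sigma^2+q(\mu)$ and solving for $\sigma^2$ yields $\sigma^2=q(\mu)/(1-\delta)$, which is (\ref{Eq_k2}). Note that the covariance identity requires $\E q(X)|g'(X)|=\E q(X)<\infty$, and this is ensured by $\E X^2<\infty$, i.e.\ $\delta<1$, which is part of the hypothesis since $n\ge 0$ gives $\delta<1/(2n+1)\le 1$.

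For the general case with $k\in\{0,1,\ldots,n\}$, Theorem \ref{theo.star} tells us that $X_k\sim\IPqk$ with $\mu_k=(\mu+k\beta)/(1-2k\delta)$ and $q_k(x)=q(x)/(1-2k\delta)$; note that the hypothesis $\delta<1/(2n+1)$ is stronger than the condition $\delta<1/(2n)$ required by Theorem \ref{theo.star}, so $X_k$ is well-defined for all $k\le n$. The quadratic coefficient of $q_k$ is $\delta_k=\delta/(1-2k\delta)$, and the condition $\delta_k<1$ is equivalent to $\delta<1/(2k+1)$, which holds because $k\le n$ and $\delta<1/(2n+1)$. Hence the base case applies to $X_k$ in place of $X$, giving $\sigma_k^2=q_k(\mu_k)/(1-\delta_k)$.

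It then remains to simplify: $q_k(\mu_k)=q(\mu_k)/(1-2k\delta)$ and $1-\delta_k=(1-(2k+1)\delta)/(1-2k\delta)$, so the factor $1-2k\delta$ cancels and one is left with
\[
\sigma_k^2=\frac{q(\mu_k)}{1-(2k+1)\delta}=\frac{q\bigl(\tfrac{\mu+k\beta}{1-2k\delta}\bigr)}{1-(2k+1)\delta},
\]
as required. There is no real obstacle here; the only thing to watch out for is verifying that the moment hypothesis $\E|X|^{2n+2}<\infty$ transfers correctly through the chain $X_0\to X_1\to\cdots\to X_k$ so that the base case may legitimately be applied at each stage, which is immediate from the equivalence between moment conditions and bounds on $\delta$ recorded in Corollary \ref{cor.moments}.
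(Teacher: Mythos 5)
Your proof is correct and follows essentially the same route as the paper: both establish $\sigma^2=\E q(X)=q(\mu)/(1-\delta)$ via the covariance identity (\ref{cov.id2}) with $g(x)=x$ together with the expansion of $\E q(X)$ about $\mu$, and then obtain (\ref{Eq_k}) by applying this base case to $X_k\sim\IPqk$ from Theorem \ref{theo.star} and simplifying $q_k(\mu_k)/(1-\delta_k)$. The only cosmetic difference is that the paper writes the expansion as $q(X)=q(\mu)+q'(\mu)(X-\mu)+\delta(X-\mu)^2$ rather than expanding $\E q(X)$ termwise, which is the same algebra.
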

 \begin{proof}
 First observe that for any $k\in\{0,1,\ldots,n\}$, $\E |X_k|^2<\infty$
 (and thus, $\E q^k(X_k)<\infty$) since
 $\delta_k=\frac{\delta}{1-2k\delta}<1$
 because $\delta<\frac{1}{2n+1}\leq \frac{1}{2k+1}$.
 Note that it suffices to show only (\ref{Eq_k2}). Indeed,
 since $X_k\sim \IPqk$ it follows from
 (\ref{cov.id2}) (applied to the random variable $X_k$ and to the function $g(x)=x$)
 that $\sigma_k^2=\Var X_k=\E q_k(X_k)$. On the other hand,
 if we manage to show that $\Var X=\frac{q(\mu)}{1-\delta}$ for any
 $X\sim \IPq$ with $\delta<1$ then, by (\ref{Eq_k2}) applied to
 $X_k$, we get
 \[
 \Var X_k = \frac{q_k(\mu_k)}{1-\delta_k}.
 \]
 Since
 \[
 \mu_k=\frac{\mu+k\beta}{1-2k\delta},
 \ \
 q_k(x)=\frac{q(x)}{1-2k\delta}
 \ \
 \mbox{and}
 \ \
 \delta_k=\frac{\delta}{1-2k\delta}<1,
 \]
 (\ref{Eq_k2})
 yields the identity (\ref{Eq_k}) as follows:
 \[
 \E q_k(X_k)=\Var X_k =\frac{q_k(\mu_k)}{1-\delta_k}
 =\frac{\frac{q(\mu_k)}{1-2k\delta}}{1-\frac{\delta}{1-2k\delta}}
 =\frac{q(\mu_k)}{1-(2k+1)\delta}
 =\frac{q(\frac{\mu+k\beta}{1-2k\delta})}{1-(2k+1)\delta}.
 \]
 It remains to verify that $\Var X =\sigma^2=\frac{q(\mu)}{1-\delta}$
 whenever $X\sim\IPq$ and $\delta<1$. To this end, write
 \[
 q(X)=q(\mu)+q'(\mu)(X-\mu)+\delta (X-\mu)^2
 \]
 and take
 expectations to get $\sigma^2=q(\mu)+\delta\sigma^2$,
 which is equivalent to (\ref{Eq_k2}).
 \end{proof}
 \begin{cor}
 \label{cor.Eq^k}
 If $X\sim\IPq$ and $\E|X|^{2n}<\infty$ for some $n\geq 1$
 (i.e.\ $\delta<\frac{1}{2n-1}$)
 then for each
 $k\in\{1,\ldots,n\}$,
 \be
 \label{Eq^k}
 A_k=A_k(\mu;q):=\E q^{k}(X)
 =\frac{\prod_{j=0}^{k-1}(1-2j\delta)}{\prod_{j=0}^{k-1}(1-(2j+1)\delta)}
 \prod_{j=0}^{k-1}q\left(\frac{\mu+j\beta}{1-2j\delta}\right).
 \ee
 \end{cor}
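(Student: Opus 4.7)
The plan is to exploit the tower of auxiliary random variables $X_0=X, X_1, X_2,\ldots,X_n$ from Theorem \ref{theo.star} so that the ratio $A_{k+1}/A_k$ unfolds directly into the factor appearing in the product, and then to telescope. The key observation is that, by the very definition of $f_k$ in (\ref{star2}), one has
\[
\E q(X_k)=\int_\alpha^\omega q(x)\,\frac{q^k(x)f(x)}{A_k}\,\ud x=\frac{A_{k+1}}{A_k}, \qquad k=0,1,\ldots,n-1.
\]
So the whole problem reduces to expressing $\E q(X_k)$ in terms of $(\mu,q,\delta,k)$ in closed form.

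To this end, I would first check that the hypothesis $\delta<\frac{1}{2n-1}$ permits the use of Corollary \ref{cor.Eq_k} on each $X_k$ for $k\leq n-1$. Indeed, by Theorem \ref{theo.star}, $X_k\sim\IPqk$ with $\delta_k=\delta/(1-2k\delta)$, and one checks that $\delta_k<1$ is equivalent to $\delta<\frac{1}{2k+1}$, which follows from $k\leq n-1$ and $\delta<\frac{1}{2n-1}$. Then I would apply the basic variance identity (\ref{Eq_k2}) of Corollary \ref{cor.Eq_k} to $X_k$ (rather than the more general form), giving
\[
\E q_k(X_k)=\Var X_k=\frac{q_k(\mu_k)}{1-\delta_k}.
\]
Substituting $q_k(x)=q(x)/(1-2k\delta)$, $\mu_k=(\mu+k\beta)/(1-2k\delta)$ and $1-\delta_k=(1-(2k+1)\delta)/(1-2k\delta)$ and simplifying yields
\[
\E q(X_k)=(1-2k\delta)\,\E q_k(X_k)=\frac{(1-2k\delta)\,q\!\left(\frac{\mu+k\beta}{1-2k\delta}\right)}{1-(2k+1)\delta}.
\]

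The final step is the telescoping: since $A_0=\E q^0(X)=1$,
\[
A_k=\prod_{j=0}^{k-1}\frac{A_{j+1}}{A_j}=\prod_{j=0}^{k-1}\E q(X_j)=\frac{\prod_{j=0}^{k-1}(1-2j\delta)}{\prod_{j=0}^{k-1}(1-(2j+1)\delta)}\prod_{j=0}^{k-1}q\!\left(\frac{\mu+j\beta}{1-2j\delta}\right),
\]
which is exactly (\ref{Eq^k}).

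There is no real obstacle here once Theorem \ref{theo.star} and Corollary \ref{cor.Eq_k} are in place; the mild annoyance is a moment-condition bookkeeping argument verifying that each $X_k$ in the chain still has a finite second moment (so that (\ref{Eq_k2}) legitimately applies), but this comes down to the single inequality $\delta<\frac{1}{2n-1}$ propagated through $\delta_k=\delta/(1-2k\delta)$. The algebraic simplification that converts $q_k(\mu_k)/(1-\delta_k)$ into $q(\mu_k)/(1-(2k+1)\delta)$ is routine, and the telescoping then produces the stated product verbatim.
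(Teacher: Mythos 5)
Your proposal is correct and follows essentially the same route as the paper: the paper's proof is precisely the observation that $(1-2j\delta)\E q_j(X_j)=\E q(X_j)=A_{j+1}/A_j$, followed by telescoping and substitution of (\ref{Eq_k}), which is itself obtained by applying (\ref{Eq_k2}) to $X_k$ exactly as you do. The only difference is that you unfold Corollary \ref{cor.Eq_k} one step further and redo the algebra $q_k(\mu_k)/(1-\delta_k)=q(\mu_k)/(1-(2k+1)\delta)$ explicitly, which the paper has already packaged into (\ref{Eq_k}).
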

 \begin{proof}
 Observe that
 \[
 (1-2j\delta)\E q_{j}(X_j)=\E q(X_j)=\frac{A_{j+1}}{A_j},
 \ \
 j=0,1,\ldots,n-1,
 \]
 where $A_0:=1$, $q_0=q$, $X_0=X$.
 Multiplying  these relations for $j=0,1,\ldots,k-1$ and using
 (\ref{Eq_k}) we get (\ref{Eq^k}).
 \end{proof}
 \begin{rem}
 \label{rem.complete}
 (a) It is important to note that the identity (\ref{cov.id}) enables
 a convenient calculation of the Fourier coefficients
 of any smooth enough function $g$ with $\Var g(X)<\infty$
 (i.e., $g\in L^2(\R,X)$). Indeed, if $X\sim\IP\equiv\IPq$
 and $\E|X|^{2n}<\infty$
 then the Fourier coefficients $c_k=\E \phi_k(X)g(X)$
 are given by $c_0=\E g(X)$ and
 \be
 \label{Fourier2}
 c_k=\frac{\E q^k(X)g^{(k)}(X)}{(k!c_k(\delta)A_k(\mu;q))^{1/2}},
 \quad k=1,2,\ldots,n,
 \ee
 where $c_k(\delta)$ and $A_k(\mu;q)$ are given by (\ref{lead}) and
 (\ref{Eq^k}), respectively, provided that $g$ is smooth enough so that
 $\E q^k(X) |g^{(k)}(X)|<\infty$ for
 \medskip
 $k\in\{1,2,\ldots,n\}$.
 \\
 (b)
 Obviously, if $X\sim\IP$ and $\delta\leq 0$ (i.e.\ if $X$
 is of Normal, Gamma or Beta-type) then $\E|X|^n<\infty$
 for all $n$. Moreover, since there exist an $\epsilon>0$
 such that $\E e^{tX}<\infty$ for $|t|<\epsilon$
 it follows that the corresponding polynomials $\{\phi_k\}_{k=0}^{\infty}$,
 given by (\ref{orthonormal}), form a complete orthonormal
 system in $L^2(\R;X)$; see, e.g., \cite{Riesz}, \cite{BC}, \cite{APP2}.
 Therefore, for smooth
 enough $g$ with $\Var g(X)<\infty$ and $\E q^k(X) |g^{(k)}(X)|<\infty$
 for all $k\geq 1$, the Fourier coefficients are given by
 \be
 \label{Fourier}
 c_k=\E \phi_k(X)g(X)
 =\frac{\E q^k(X)g^{(k)}(X)}{(k!c_k(\delta)A_k(\mu;q))^{1/2}},
 \quad k=0,1,2,\ldots,
 \ee
 and the variance of $g$ can be calculated as
 (see \cite{APP2}, Theorem 5.1, pp.\ 522--523)
 %\cite[Theorem 5.1, pp.\ 522--523]{APP2})
 \be
 \label{variance}
 \Var g(X)=\sum_{k=1}^{\infty}
 \frac{\E^2 q^k(X)g^{(k)}(X)}{k!c_k(\delta)A_k(\mu;q)}.
 \ee
 Furthermore, the completeness of the Rodrigues polynomials
 (when $X\sim\IP$ and $\delta\leq 0$) enables one to write
 (\cite{APP2}, Theorem 5.2, p.\ 523)
 %\cite[Theorem 5.2, p.\ 523]{APP2})
 \be
 \label{covariance}
 \Cov[g_1(X),g_2(X)]=\sum_{k=1}^{\infty}
 \frac{\E[q^k(X)g_1^{(k)}(X)]\E[q^k(X)g_2^{(k)}(X)]}{k!c_k(\delta)A_k(\mu;q)},
 \ee
 provided that for $i=1,2$, $g_i\in L^2(\R,X)$ and
 $\E q^k(X)|g_i^{(k)}(X)|<\infty$ for
 all $k\geq 1$. The important thing in (\ref{variance}) and (\ref{covariance})
 is that we do not need explicit forms for the polynomials;
 in view of (\ref{lead}) and (\ref{Eq^k}), everything is calculated
 from the four numbers $(\mu;\delta,\beta,\gamma)$ and the derivatives
 of $g$ or $g_i$ ($i=1,2$).
 In particular, for the first three types of Table
 \ref{table densities}, (\ref{variance})
 yields the formulae
 \begin{eqnarray}
 &&\hspace{-1.7cm}\xC{$\ds\Var g(X)
 =\sum_{k=1}^{\infty}\frac{\sigma^{2k}}{k!}
 \E^2 g^{(k)}(X),$\hfill if $X\sim N(\mu,\sigma^2)$,}
 \label{var.normal}
 \\
 &&\hspace{-1.7cm}\xC{$\ds\Var g(X)
 =\sum_{k=1}^{\infty}\frac{\varGamma(a)}{k!\varGamma(a+k)}
 \E^2 X^k g^{(k)}(X),$\hfill if $X\sim \varGamma(a,\lambda)$,}
 \label{var.gamma}
 \\
 &&\hspace{-1.7cm}\xC{$\ds\Var g(X)
 =\sum_{k=1}^{\infty}\frac{(a+b+2k-1)\varGamma(a)
 \varGamma(b)\varGamma(a+b+k-1)}{k!\varGamma(a+b)
 \varGamma(a+k)\varGamma(b+k)}\E^2 X^k (1-X)^kg^{(k)}(X),$\hfill\,}
 \label{var.beta}
 \\
 [-1ex]
 &&\hspace{-1.7cm}\xC{\,\hfill if $X\sim B(a,b)$.}
 \nonumber
 \end{eqnarray}
 \end{rem}

 Turn now to the orthogonal polynomial system
 $\{P_k; \ k=0,1,\ldots,n\}$, of
 (\ref{Rodrigues}), obtained for a random variable $X\sim\IP$
 with support $J(X)=(\alpha,\omega)$ and $\E |X|^{2n}<\infty$ for some $n\geq 2$,
 i.e.\ with $\delta<\frac{1}{2n-1}$.
 By Lemma \ref{lem.star} the random variable
 $X^*=X_1\sim\mbox{IP}(\mu_1;q_1)\equiv
 \mbox{IP}(\mu_1;\delta_1,\beta_1,\gamma_1)$ with
 \[
 \mu_1=\frac{\mu+\beta}{1-2\delta}
 \ \
 \mbox{and}
 \ \
 q_1(x)=\frac{q(x)}{1-2\delta}
 \]
 and has support $(\alpha,\omega)$. Since $\delta<\frac{1}{2n-1}$
 is equivalent to
 $\delta_1=\frac{\delta}{1-2\delta}<\frac{1}{2n-3}$ we
 conclude that $\E|X_1|^{2n-2}<\infty$ and, in particular,
 $\Var X_1<\infty$.
 Therefore, we can define the orthogonal
 polynomial system
 \[
 \{P_{k,1}; \ k=0,1,\ldots,n-1\},
 \]
 by applying (\ref{Rodrigues}) to the density $f_1$ and
 to the quadratic polynomial
 $q_1$ of $X_1$, that is (recall that $f_1(x)=q(x)f(x)/\E q(X)$)
 \be
 \label{polyn.orthogonal.star}
 \begin{split}
 P_{k,1}(x):=\frac{(-1)^k}{f_1(x)}\frac{\ud^k}{\ud x^k}[q_1^k(x)f_1(x)]
 =\frac{(-1)^k}{(1-2\delta)^k q(x)f(x)}\frac{\ud^k}{\ud x^k}[q^{k+1}(x)f(x)],&
 \\
 \alpha<x<\omega,
 \ \  k=0,1,\ldots&,n-1.
 \end{split}
 \ee
 Clearly the system $\{P_{k,1}; \ k=0,1,\ldots,n-1\}$ is
 orthogonal with respect to $X_1$, but the important
 observation is that we can reobtain it by differentiating
 the polynomials $P_k$ (which are orthogonal with respect to $X$).
 In fact, the following lemma holds.
 \begin{lem}
 \label{lem.derivatives}
 If $X\sim\IPq$ and $\E|X|^{2n}<\infty$ for some $n\geq 1$
 then the polynomials $P_k$ of (\ref{Rodrigues}) and $P_{k,1}$
 of (\ref{polyn.orthogonal.star}) are related through
 \be
 \label{derivatives.polynomials}
 \begin{split}
 P_{k+1}'(x)=C_k(\delta)&P_{k,1}(x),
 \ \ k=0,1,\ldots,n-1,\\
 &\textrm{where}
 \ \
 C_k(\delta):=(k+1)(1-k\delta)(1-2\delta)^k.
 \end{split}
 \ee
 \end{lem}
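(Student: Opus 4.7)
The plan is to reduce the target identity to a single differential identity for the single function $\phi:=q^{k+1}f$ and then to derive that identity by differentiating a first-order ODE for $\phi$ exactly $k+1$ times. Write
\[
P_{k+1}(x)=\frac{(-1)^{k+1}\phi^{(k+1)}(x)}{f(x)},\qquad
P_{k,1}(x)=\frac{(-1)^k\phi^{(k)}(x)}{(1-2\delta)^k\,q(x)f(x)}.
\]
Differentiating $P_{k+1}$ and substituting the Pearson ODE $f'/f=(\mu-x-q')/q$ from (\ref{Pearson}) I get
\[
P_{k+1}'(x)=\frac{(-1)^{k+1}}{q(x)f(x)}\Bigl[q(x)\phi^{(k+2)}(x)-(\mu-x-q'(x))\phi^{(k+1)}(x)\Bigr].
\]
Hence the claimed equality $P_{k+1}'=C_k(\delta)P_{k,1}$ is algebraically equivalent to the identity
\be
\label{plan.target}
q(x)\phi^{(k+2)}(x)-(\mu-x-q'(x))\phi^{(k+1)}(x)+(k+1)(1-k\delta)\phi^{(k)}(x)=0,
\ee
after canceling the common factor $(-1)^k/(qf)$ and the surviving powers of $(1-2\delta)^k$.

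Next I derive a first-order ODE for $\phi$. Starting from $\phi=q^{k+1}f$ and using $f'=(\mu-x-q')f/q$ again,
\[
\phi'=(k+1)q^kq'f+q^{k+1}f'=q^kf\bigl[(k+1)q'+\mu-x-q'\bigr]=q^kf\bigl[kq'+\mu-x\bigr],
\]
which (multiplying by $q$) rewrites as
\be
\label{plan.ode}
q(x)\phi'(x)+A(x)\phi(x)=0,\qquad A(x):=(1-2k\delta)x-(\mu+k\beta).
\ee
Note that $A$ is affine with $A'(x)=1-2k\delta$ and $A^{(j)}\equiv0$ for $j\ge 2$, while $q$ is quadratic with $q''\equiv 2\delta$ and $q^{(j)}\equiv 0$ for $j\ge 3$.

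The final step is to differentiate \eqref{plan.ode} exactly $k+1$ times using Leibniz. Since only three terms survive in each expansion, I obtain
\[
D^{k+1}(q\phi')=q\phi^{(k+2)}+(k+1)q'\phi^{(k+1)}+k(k+1)\delta\,\phi^{(k)},
\]
\[
D^{k+1}(A\phi)=A\phi^{(k+1)}+(k+1)(1-2k\delta)\phi^{(k)}.
\]
Adding these and collecting like terms, the coefficient of $\phi^{(k+1)}$ becomes
\[
(k+1)q'+A=(k+1)(2\delta x+\beta)+(1-2k\delta)x-(\mu+k\beta)=(1+2\delta)x+(\beta-\mu)=-(\mu-x-q'),
\]
while the coefficient of $\phi^{(k)}$ simplifies to $(k+1)[k\delta+(1-2k\delta)]=(k+1)(1-k\delta)$. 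This is precisely \eqref{plan.target}, completing the proof.

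There is no genuine obstacle here: the argument is essentially a direct computation once one recognizes the first-order ODE \eqref{plan.ode} for $\phi$. The only points requiring care are (i) bookkeeping of the signs $(-1)^k$ and the powers $(1-2\delta)^k$ that separate the definitions of $P_{k+1}$ and $P_{k,1}$, and (ii) verifying that the two Leibniz simplifications collapse exactly to the coefficient $-(\mu-x-q')$ and the scalar $(k+1)(1-k\delta)$, which are the same objects appearing on the left-hand side of \eqref{plan.target}. The moment hypothesis $\E|X|^{2n}<\infty$ plays no role in the algebraic derivation itself; it merely ensures that the polynomials $P_{k,1}$ and $P_{k+1}$ (for $k\le n-1$) are well-defined objects in $L^2(\R,X)$ and $L^2(\R,X_1)$, as guaranteed by Corollaries \ref{cor.moments} and \ref{cor.orth} and Lemma \ref{lem.star}.
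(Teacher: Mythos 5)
Your argument is correct, but it takes a genuinely different route from the paper's proof. The paper establishes Lemma \ref{lem.derivatives} by an orthogonality argument: it first shows, via integration by parts (with the boundary terms killed by Lemma \ref{lem.limits2}), that the derivatives $P_{k+1}'$ form an orthogonal system with respect to $X_1$, then invokes the uniqueness of orthogonal polynomial systems to conclude $P_{k+1}'=C_kP_{k,1}$, and finally extracts $C_k$ by comparing leading coefficients via (\ref{lead}). Your proof instead reduces everything to the purely algebraic identity $q\phi^{(k+2)}-(\mu-x-q')\phi^{(k+1)}+(k+1)(1-k\delta)\phi^{(k)}=0$ for $\phi=q^{k+1}f$, obtained by differentiating the first-order ODE $q\phi'+A\phi=0$ exactly $k+1$ times with Leibniz; I have checked the coefficient bookkeeping ($(k+1)q'+A=-(\mu-x-q')$ and $k(k+1)\delta+(k+1)(1-2k\delta)=(k+1)(1-k\delta)$) and the translation back through the factors $(-1)^k$ and $(1-2\delta)^k$, and it all works. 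What your approach buys is exactly what the paper itself points out in Remark \ref{rem.Beale} when discussing Beale's recurrence: the identity is valid for all $k\geq0$ with no moment hypothesis and no appeal to orthogonality, so your observation that $\E|X|^{2n}<\infty$ plays no role in the derivation is accurate and is consistent with the paper's own commentary. What the paper's route buys is that the same two-step template (orthogonality of the derivatives plus comparison of leading coefficients) is reused almost verbatim to compute the orthonormal constants in Corollary \ref{cor.derivatives} and in Remark \ref{rem.Beale.general.derivatives}(a), so the structural argument carries more of the weight of Section \ref{sec5} than the bare identity does.
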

 \begin{proof} First we show that the polynomials
 $P_{k+1}'$ are orthogonal with respect to $X_1$. Indeed,
 $\deg(P_{k+1}')=k$ (for $k=0,1,\ldots,n-1$) and for
 $k,m\in\{0,1,\ldots,n-1\}$ with $k<m$ we have
 \[
 \begin{split}
 \E P_{k+1}'(X_1)P_{m+1}'(X_1)
 &=\frac{1}{\sigma^2}\int_{\alpha}^{\omega}
 P_{m+1}'(x) P_{k+1}'(x) q(x)f(x)\ud{x}\\
 &=\frac{1}{\sigma^2}\bigg\{P_{m+1}(x)
 P_{k+1}'(x) q(x)f(x)\Big|_{\alpha}^{\omega}\\
 &\hspace{10ex}-\int_{\alpha}^{\omega} P_{m+1}(x)
 [P_{k+1}'(x)q(x)f(x)]'\ud{x}\bigg\}.
 \end{split}
 \]
 Now observe that, in view of Lemma \ref{lem.limits2},
 \[
 P_{m+1}(x) P_{k+1}'(x) q(x)f(x)\big|_{\alpha}^{\omega}=0,
 \]
 because
 %\linebreak
 $P_{m+1} P_{k+1}'$ is a polynomial of degree $m+k+1\leq 2n-2$
 and $\E|X|^{2n}<\infty$. Moreover,
 %observe that
 \[
 [P_{k+1}'(x)
 q(x)f(x)]'=P_{k+1}''(x)q(x)f(x)+P_{k+1}'(x)(\mu-x)f(x)=H_{k+1}(x)f(x),
 \]
 where $H_{k+1}(x)=P_{k+1}''(x)q(x)+(\mu-x)P_{k+1}'(x)$ is a polynomial
 in $x$ of degree at most
 $k+1<m+1$. Therefore,
 \[
 \E P_{k+1}'(X_1)P_{m+1}'(X_1)=-\frac{1}{\sigma^2}\E P_{m+1}(X) H_{k+1}(X)=0,
 \]
 since $P_{m+1}$ is orthogonal (with respect to $X$) to any
 polynomial of degree lower than $m+1$.
 Note that the same orthogonality conditions are also valid for
 $\{P_{k,1}\}_{k=0}^{n-1}$, that is,
 \[
 \E P_{k,1}(X_1)
 %\times
 P_{m,1}(X_1)=0
 \ \
 \mbox{for}
 \ \
 k,m\in\{0,1,\ldots,n-1\}
 \ \
 \mbox{with}
 \ \
 k\neq m.
 \]
 Since $\deg(P_{k+1}')=\deg(P_{k,1})=k$,
 $k=0,1,\ldots,n-1$, the uniqueness of the orthogonal polynomial system
 implies that there exist constants $C_k\neq 0$ such
 that $P_{k+1}'(x)=C_k P_{k,1}(x)$. Equating the leading
 coefficients we obtain
 $\lead(P_{k+1}')=C_k\lead(P_{k,1})$, that is (see (\ref{lead})),
 \[
 \small
 \begin{split}
 C_k&=\frac{\lead(P_{k+1}')}{\lead(P_{k,1})}
     =\frac{(k+1)\lead(P_{k+1})}{\lead(P_{k,1})}
     =\frac{(k+1)c_{k+1}(\delta)}{c_k(\delta_1)}
     =\frac{(k+1)\prod_{j=k}^{2k}(1-j\delta)}{\prod_{j=k-1}^{2k-2}(1-j\delta_1)}\\
    &=\frac{(k+1)\prod_{j=k}^{2k}(1-j\delta)}{\prod_{j=k-1}^{2k-2}(1-j\frac{\delta}{1-2\delta})}
     =\frac{(k+1)(1-2\delta)^k \prod_{j=k}^{2k}(1-j\delta)}{\prod_{j=k+1}^{2k}(1-j\delta)}
     =(k+1)(1-k\delta)(1-2\delta)^k.
 \end{split}
 \]
 \vspace*{-1.5cm}

 \end{proof}
 \medskip

 \begin{rem}
 \label{rem.Beale}
 We note that the recurrence (\ref{derivatives.polynomials})
 is
 contained in
 %\cite[eq.\ (2), p.\ 207]{Beale1}.
 %\cite{Beale1}
 Beale (1937), eq.\ (2), p.\ 207.
 Actually,
 Beale's recurrence (which is stated in a much different notation)
 is valid for the polynomials $h_k$ of (\ref{polynomials})
 and for all $k\geq 0$; thus, orthogonality is not, at all, needed for
 deriving it. Specifically, if $p_1=a_0+a_1 x$, $p_2=b_0+b_1x+b_2 x^2$,
 and if $h_k$ are the polynomials in (\ref{polynomials}) and $h_{k,1}$ are the
 polynomials given by
 \[
 h_{k,1}(x):=\frac{1}{p_2(x)f(x)}\frac{\ud^k}{\ud x^k}
 [p_2^{k+1}(x)f(x)],
 \]
 then, with Beale's notation, $h_{k+1}(x)=P_{k+1}(k+1,x)$
 and $h_{k,1}(x)=P_k(k+1,x)$; see also
 %\cite[p.\ 401]{Hild}.
 \cite{Hild}, p.\ 401.
 Therefore, Beale's identity is equivalent to
 (cf.\
 %\cite[eq.\ (2), p.\ 207]{Beale1})
 \cite{Beale1}, eq.\ (2), p.\ 207)
 \be
 \label{Beale}
 h_{k+1}'(x)=(k+1)[a_1+(k+2)b_2]h_{k,1}(x).
 \ee
 On the other hand, the current definition of $P_k$ and $P_{k,1}$
 can be translated  to Beale's notation as follows:
 Since $X\sim\IP\equiv\IPq$ we have from Proposition \ref{prop.1}
 that $f'/f=p_1/p_2$ with $p_2=q$ and $p_1=\mu-x-q'$, that
 is, $a_0=\mu-\beta$, $a_1=-(1+2\delta)$, $b_0=\gamma$, $b_1=\beta$
 and $b_2=\delta$. Furthermore,
 \[
 P_{k+1}(x)
 =\frac{(-1)^{k+1}}{f(x)}\frac{\ud^{k+1}}{\ud x^{k+1}}[q^{k+1}(x)f(x)]
 =\frac{(-1)^{k+1}}{f(x)}\frac{\ud^{k+1}}{\ud x^{k+1}}[p_2^{k+1}(x)f(x)]
 =(-1)^{k+1}h_{k+1}(x)
 \]
 and
 \[
 P_{k,1}(x)
 =\frac{(-1)^{k}}{f_1(x)}\frac{\ud^{k}}{\ud x^{k}}[q_1^{k}(x)f_1(x)]
 =\frac{(-1)^{k}}{q(x)f(x)}\frac{\ud^{k}}{\ud x^{k}}
 \left[\frac{q^{k}(x)}{(1-2\delta)^k}q(x)f(x)\right]
 =\frac{(-1)^k}{(1-2\delta)^k}h_{k,1}(x).
 \]
 Thus, $h_{k+1}=(-1)^{k+1}P_{k+1}$,
 $h_{k,1}=(-1)^k(1-2\delta)^k P_{k,1}$ and
 (\ref{Beale}) yields
 \[
 (-1)^{k+1}P_{k+1}'=(k+1)[-(1+2\delta)+(k+2)\delta](-1)^k(1-2\delta)^k
 P_{k,1}.
 \]
 That is, $P_{k+1}'=(k+1)[(1+2\delta)-(k+2)\delta](1-2\delta)^k P_{k,1}=
 (k+1)(1-k\delta)(1-2\delta)^k P_{k,1}$, which shows that
 (\ref{derivatives.polynomials})
 holds for all  $k\in\{0,1,\ldots\}$.
 \end{rem}
 Applying Lemma \ref{lem.derivatives} inductively it is
 easy to verify the following result.
 \begin{theo}
 \label{theo.derivatives}
 If $X\sim\IP$
 with
 support $J(X)=(\alpha,\omega)$ and
 $\E|X|^{2n}<\infty$ for some $n\geq 1$
 (i.e.\ $\delta<\frac{1}{2n-1}$) then
 \be
 \label{derivatives.higher}
 P_{k+m}^{(m)}(x)=C^{(m)}_k(\delta) P_{k,m}(x), \quad
 %\alpha<x<\omega, \ \
 m=1,2,\ldots,n, \quad k=0,1,\ldots,n-m,
 \ee
 where
 \be
 \label{derivatives.higher.polynomials}
 C^{(m)}_k(\delta):=\frac{(k+m)!}{k!}(1-2m\delta)^k
 \prod_{j=k+m-1}^{k+2m-2}(1-j\delta).
 \ee
 Here, $P_k$ are the polynomials given by (\ref{Rodrigues})
 associated with $f$, and $P_{k,m}$ are the corresponding
 Rodrigues polynomials of (\ref{Rodrigues}),
 associated with the density $f_m(x)=\frac{q^m(x)f(x)}{\E q^m(X)}$,
 $\alpha<x<\omega$,
 of the random variable $X_m\sim\mbox{IP}(\mu_m;q_m)$
 of Theorem \ref{theo.star}, i.e.,
 \be
 \label{polyn.orthogonal.gen.star}
 \begin{split}
 P_{k,m}(x):=\frac{(-1)^k}{f_m(x)}
 \frac{\ud^k}{\ud x^k}[q_m^k(x)f_m(x)]
 =\frac{(-1)^k}{(1-2m\delta)^k q^m(x)f(x)}
 \frac{\ud^k}{\ud x^k}[q^{k+m}(x)f(x)],
 \\
 \alpha<x<\omega, \quad k=0,1,\ldots,n-m.
 \end{split}
 \ee
 \end{theo}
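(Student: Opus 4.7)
The plan is to prove Theorem \ref{theo.derivatives} by induction on $m\in\{1,2,\ldots,n\}$, leveraging Lemma \ref{lem.derivatives} as the base case and the key structural identity $X_m=(X_{m-1})^*$ from Theorem \ref{theo.star} to pass from one level to the next. The base case $m=1$ is exactly Lemma \ref{lem.derivatives}, once one checks that the constant reduces correctly: for $m=1$, the product $\prod_{j=k}^{k}(1-j\delta)=1-k\delta$, so
\[
C_k^{(1)}(\delta)=\frac{(k+1)!}{k!}(1-2\delta)^k(1-k\delta)=(k+1)(1-k\delta)(1-2\delta)^k=C_k(\delta),
\]
matching the constant of Lemma \ref{lem.derivatives} exactly.

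For the inductive step, assume (\ref{derivatives.higher}) holds at level $m-1$ for every admissible $k$. First I would apply the inductive hypothesis at the shifted index $k+1$ (which is admissible because $k\leq n-m$ gives $k+1\leq n-(m-1)$) to obtain
\[
P_{k+m}^{(m-1)}(x)=C_{k+1}^{(m-1)}(\delta)\,P_{k+1,m-1}(x),
\]
and then differentiate once:
\[
P_{k+m}^{(m)}(x)=C_{k+1}^{(m-1)}(\delta)\,P_{k+1,m-1}'(x).
\]
Next, by Theorem \ref{theo.star}, $X_{m-1}\sim \mathrm{IP}(\mu_{m-1};q_{m-1})$ with $\delta_{m-1}=\delta/(1-2(m-1)\delta)$ and $X_m=(X_{m-1})^*$, and the moment hypothesis $\delta<\frac{1}{2n-1}$ implies $\delta_{m-1}<\frac{1}{2(n-m+1)-1}$, i.e.\ $\E|X_{m-1}|^{2(n-m+1)}<\infty$. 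Thus Lemma \ref{lem.derivatives} applies to $X_{m-1}$ and its one-step starred companion $X_m$, yielding
\[
P_{k+1,m-1}'(x)=C_k(\delta_{m-1})\,P_{k,m}(x).
\]
Substituting back gives $P_{k+m}^{(m)}(x)=C_{k+1}^{(m-1)}(\delta)\,C_k(\delta_{m-1})\,P_{k,m}(x)$.

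It remains to verify the constant identity
\[
C_k^{(m)}(\delta)=C_{k+1}^{(m-1)}(\delta)\cdot C_k(\delta_{m-1}).
\]
Using $1-2\delta_{m-1}=\frac{1-2m\delta}{1-2(m-1)\delta}$ and $1-k\delta_{m-1}=\frac{1-(k+2m-2)\delta}{1-2(m-1)\delta}$, the factors of $(1-2(m-1)\delta)$ cancel between $C_{k+1}^{(m-1)}(\delta)=\frac{(k+m)!}{(k+1)!}(1-2(m-1)\delta)^{k+1}\prod_{j=k+m-1}^{k+2m-3}(1-j\delta)$ and the denominator of $C_k(\delta_{m-1})$, the $(k+1)$ telescopes the factorials into $(k+m)!/k!$, the power $(1-2m\delta)^k$ materializes, and the extra factor $1-(k+2m-2)\delta$ extends the product from upper index $k+2m-3$ to $k+2m-2$, producing exactly $C_k^{(m)}(\delta)$.

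The main obstacle is purely bookkeeping: aligning the shifted index $k+1$ in the inductive hypothesis with the unshifted index $k$ coming out of Lemma \ref{lem.derivatives}, and tracking how the rescaling $\delta\mapsto\delta_{m-1}$ interacts with the two displayed products in $C_k^{(m)}(\delta)$. Once these substitutions are done carefully, the argument is entirely mechanical, which is why the authors simply remark that the result is obtained by applying Lemma \ref{lem.derivatives} inductively.
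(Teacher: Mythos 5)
Your proof is correct and follows essentially the same route as the paper: both iterate Lemma \ref{lem.derivatives} along the chain $X_{j}=X_{j-1}^{*}$ of Theorem \ref{theo.star} and then account for the accumulated constants. The only difference is organizational --- the paper writes the composition as an explicit product of one-step factors over $j=0,1,\ldots,m-1$ and simplifies the three resulting products at the end, whereas you phrase it as an induction on $m$ that peels off the last derivative and verify the two-term recursion $C_k^{(m)}(\delta)=C_{k+1}^{(m-1)}(\delta)\,C_k(\delta_{m-1})$; both amount to the same telescoping.
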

 \begin{proof}
 %Since $X_m\sim\mbox{IP}(\mu_m;\delta_m,\beta_m,\gamma_m)$
 %and $\delta_m=\frac{\delta}{1-2m\delta}$ it is easily seen
 %that $\delta<\frac{1}{2n-1}$ is equivalent to $\delta_m<\frac{1}{2n-2m-1}$
 %so that $\E|X_m|^{2n-2m}<\infty$ and, thus, $\E|P_{k,m}(X_m)|^2<\infty$
 %for all $k=0,1,\ldots,n-m$.
 Apply first Lemma
 \ref{lem.derivatives} to get
 \[
 P_{k+m}'=P_{(k+m-1)+1}'=(k+m)(1-(k+m-1)\delta)(1-2\delta)^{k+m-1}P_{k+m-1,1}.
 \]
 Now, since $P_{k+m-1,1}$ are the Rodrigues polynomials of
 $f_1$ we can apply again Lemma \ref{lem.derivatives} to
 $X_1$ with $\delta_1=\frac{\delta}{1-2\delta}$.
 It follows that
 \[
 P_{k+m-1,1}'=P_{(k+m-2)+1,1}'
 =(k+m-1)(1-(k+m-2)\delta_1)(1-2\delta_1)^{k+m-2}P_{k+m-2,2}.
 \]
 Combining the above equations we see that
 \[
 \begin{split}
 P_{k+m}''&=(k+m)(1-(k+m-1)\delta)(1-2\delta)^{k+m-1}P_{k+m-1,1}'\\
 &=(k+m)(k+m-1)(1-(k+m-1)\delta)(1-(k+m-2)\delta_1)\\
 &\hspace{10em}\times(1-2\delta)^{k+m-1}(1-2\delta_1)^{k+m-2}P_{k+m-2,2}.
 \end{split}
 \]
 By the same argument it follows that for any $j\in\{0,1,\ldots,m-1\}$,
 \medskip
 \\
 \centerline{\small
 $P_{k+m-j,j}'=P_{(k+m-j-1)+1,1}'
 =(k+m-j)(1-(k+m-j-1)\delta_j)(1-2\delta_j)^{k+m-j-1}
 \medskip
 P_{k+m-j-1,j+1}$,}

 \noindent
 where $\delta_j={\delta}/(1-2j\delta)$.
 Thus, we can easily show, using (finite) induction
 on $s$, that for any $s\in\{1,2,\ldots,m\}$,
 {\small
 \[
 P_{k+m}^{(s)}=
 \left\{
 \left(
 \prod_{j=0}^{s-1} (k+m-j)
 \right)\hspace{-.5ex}
 \left(
 \prod_{j=0}^{s-1}
 (1-(k+m-j-1)\delta_j)
 \right)\hspace{-.5ex}
 \left(
 \prod_{j=0}^{s-1}
 (1-2\delta_j)^{k+m-j-1}
 \right)
 \right\}
 P_{k+m-s,s}.
 \]}

 \noindent
 Setting $s=m$ it follows that
 (\ref{derivatives.higher}) is satisfied with
 \[
 C^{(m)}_k(\delta)=
 \left(
 \prod_{j=0}^{m-1} (k+m-j)
 \right)
 \left(
 \prod_{j=0}^{m-1}
 (1-(k+m-j-1)\delta_j)
 \right)
 \left(
 \prod_{j=0}^{m-1}
 (1-2\delta_j)^{k+m-j-1}
 \right).
 \]
 Now it suffices to observe that
 $\prod_{j=0}^{m-1}
 (k+m-j)=\frac{(k+m)!}{k!}$,
 that
 \[
 \begin{split}
 \prod_{j=0}^{m-1} (1-(k+m-j-1)\delta_j)&=\prod_{j=0}^{m-1}
 \left(1-(k+m-j-1)\frac{\delta}{1-2j\delta}\right)
 \\
 &=\frac{\prod_{j=0}^{m-1} (1-(k+m+j-1)\delta)}{\prod_{j=0}^{m-1}
 (1-2j\delta)}
 =\frac{\prod_{j=k+m-1}^{k+2m-2}
 (1-j\delta)}{\prod_{j=0}^{m-1} (1-2j\delta)},
 \end{split}
 \]
 and that
 \[
 \begin{split}
 \prod_{j=0}^{m-1}(1-2\delta_j)^{k+m-j-1}
 &=\prod_{j=0}^{m-1}\left(1-2\frac{\delta}{1-2j\delta}\right)^{k+m-j-1}
  =\prod_{j=0}^{m-1}\left(\frac{1-2(j+1)\delta}{1-2j\delta}\right)^{k+m-j-1}
  \\
 &=\frac{\prod_{j=0}^{m-1}(1-2(j+1)\delta)^{k+m-j-1}}
 {\prod_{j=0}^{m-1}(1-2j\delta)^{k+m-j-1}}
  =\frac{\prod_{j=1}^{m}(1-2j\delta)^{k+m-j}}
  {\prod_{j=1}^{m-1}(1-2j\delta)^{k+m-j-1}}
  \\
 &=(1-2m\delta)^{k}\frac{\prod_{j=1}^{m-1}(1-2j\delta)^{k+m-j}}
 {\prod_{j=1}^{m-1}(1-2j\delta)^{k+m-j-1}}
 \\
 &=(1-2m\delta)^{k}\prod_{j=1}^{m-1}(1-2j\delta).
 \end{split}
 \]
 \vspace{-3.5em}

 \end{proof}
 \begin{rem}
 \label{rem.Beale.general.derivatives}
 (a)
 An alternative calculation of the constant
 $C_k=C^{(m)}_k(\delta)$ can be given as follows.
 Lemma \ref{lem.derivatives} guarantees that
 $P_{k+m}^{(m)}(x)=C_k P_{k,m}(x)$ for some constant
 $C_k$. Arguing as in the proof of Lemma \ref{lem.derivatives}
 we see that $C_k$  can be derived from the corresponding leading
 coefficients. Indeed, since
 $\lead(P_{k+m}^{(m)})=C_k \lead(P_{k,m})$,
 we get, in view of (\ref{lead}), that
 \[
 \begin{split}
 C_k&=\frac{\lead(P_{k+m}^{(m)})}{\lead(P_{k,m})}
     =\frac{\frac{(k+m)!}{k!}\lead(P_{k+m})}{\lead(P_{k,m})}
     =\frac{\frac{(k+m)!}{k!}c_{k+m}(\delta)}{c_k(\delta_m)}
     =\frac{\frac{(k+m)!}{k!}\prod_{j=k+m-1}^{2k+2m-2}(1-j\delta)}
     {\prod_{j=k-1}^{2k-2}(1-j\delta_m)}
     \\
    &=\frac{\frac{(k+m)!}{k!}\prod_{j=k+m-1}^{2k+2m-2}(1-j\delta)}
    {\prod_{j=k-1}^{2k-2}(1-j\frac{\delta}{1-2m\delta})}
     =\frac{\frac{(k+m)!}{k!}(1-2m\delta)^k \prod_{j=k+m-1}^{2k+2m-2}(1-j\delta)}
     {\prod_{j=k+2m-1}^{2k+2m-2}(1-j\delta)}
     \\
    &=\frac{(k+m)!}{k!}(1-2m\delta)^k\prod_{j=k+m-1}^{k+2m-2}(1-j\delta).
 \end{split}
 \]
 (b)
 We note that the recurrence (\ref{derivatives.higher})
 is contained in Beale (1937), eq.\ (4), p.\ 207,
 %\cite[eq.\ (4), p.\ 207]{Beale1},
 although it is stated in a quite different notation there.
 %Beale's recurrence
 %(which is stated in a much different notation)
 %is valid for the polynomials $h_k$ of (\ref{polynomials})
 %and for all $k\geq 0$; thus, orthogonality is not, at all, needed for
 %deriving it.
 Specifically, if $p_1=a_0+a_1 x$, $p_2=b_0+b_1x+b_2 x^2$,
 and if $h_k$ are the polynomials in (\ref{polynomials}) and $h_{k,m}$ are the
 polynomials given by
 \[
 h_{k,m}(x):=\frac{1}{p_2^m(x)f(x)}\frac{\ud^k}{\ud x^k}
 [p_2^{k+m}(x)f(x)],
 \]
 then, with Beale's notation, $h_{k+m}(x)=P_{k+m}(k+m,x)$
 and $h_{k,m}(x)=P_k(k+m,x)$.
 %; see also Hildebrandt (1931), p.\ 401.
 Therefore, putting $q\to m$, $k\to k+m$, $n\to k+m-1$,
 $N'\to -(1+2\delta)$ and
 $D''\to 2\delta$ in eq.\ (4) of
 \cite{Beale1},
 %\cite[eq.\ (4), p.\ 207]{Beale1},
 we get
 \be\label{Beale2}
 \begin{split}
 h_{k+m}^{(m)}(x)&=\left(\prod_{i=0}^{m-1}(k+m-i)((k+m+i-1)\delta-1)\right)
 h_{k,m}(x)
 \\
 &=(-1)^m\frac{(k+m)!}{k!}\left(\prod_{j=k+m-1}^{k+2m-2}(1-j\delta)\right)h_{k,m}(x),
 \qquad k=0,1,2,\ldots \ .
 \end{split}
 \ee
 On the other hand it is easy to see that
 $P_{k+m}(x)
 %=\frac{(-1)^{k+1}}{f(x)}\frac{d^{k+1}}{d x^{k+1}}[q^{k+1}(x)f(x)]
 %=\frac{(-1)^{k+1}}{f(x)}\frac{d^{k+1}}{d x^{k+1}}[p_2^{k+1}(x)f(x)]
 =(-1)^{k+m}h_{k+m}(x)$ and, with $p_2=q$,
 \[
 P_{k,m}(x)
 =\frac{(-1)^{k}}{f_m(x)}\frac{\ud^{k}}{\ud x^{k}}[q_m^{k}(x)f_m(x)]
 =\frac{(-1)^{k}}{p_2^m(x)f(x)}\frac{\ud^{k}}{\ud x^{k}}\left[\frac{p_2^{k}(x)}{(1-2m\delta)^k}p_2^m(x)f(x)\right]
 =\frac{(-1)^kh_{k,m}(x)}{(1-2m\delta)^k}.
 \]
 Thus, $h_{k+m}=(-1)^{k+m}P_{k+m}$,
 $h_{k,m}=(-1)^k(1-2m\delta)^k P_{k,m}$, and
 (\ref{Beale2}) becomes
 \[
 (-1)^{k+m}P_{k+m}^{(m)}
 =(-1)^m \frac{(k+m)!}{k!}\left(\prod_{j=k+m-1}^{k+2m-2}(1-j\delta)\right)
 (-1)^k(1-2m\delta)^k P_{k,m}, \quad k=0,1,\ldots;
 %h_{k,m}(x) =(k+1)[-(1+2\delta)+(k+2)\delta](-1)^k(1-2\delta)^k P_{k,1},
 \]
 equivalently,
 $P_{k+m}^{(m)}=\frac{(k+m)!}{k!}(1-2m\delta)^k
 \left(\prod_{j=k+m-1}^{k+2m-2}(1-j\delta)\right)P_{k,m}$,
 which shows that (\ref{derivatives.higher})
 holds for all
 \medskip
 $k\in\{0,1,\ldots\}$.
 \\
 (c) Krall \cite{Krall1}, \cite{Krall2} characterizes the Pearson system from
 the fact that the derivatives of orthogonal polynomials
 are orthogonal polynomials.
 \end{rem}
 We can now adapt the preceding
 results to the corresponding orthonormal polynomial
 systems. Notice that the following
 corollary contains the main interest regarding
 Fourier expansions within the Pearson family and,
 to our knowledge, it is not stated elsewhere in the
 present simple, unified, explicit form.

 \begin{cor}
 \label{cor.derivatives}
 Let $X\sim\IP\equiv\IPq$ with support
 $(\alpha,\omega)$,
 and assume that
 %support $J(X)=(\alpha,\omega)$ and
 $\E|X|^{2n}<\infty$ for some fixed $n\geq 1$
 (equivalently, $\delta<\frac{1}{2n-1}$).
 Let $\{\phi_k\}_{k=0}^{n}$ be the orthonormal
 polynomials associated with $X$ (with $\lead(\phi_k)>0$ for
 all $k$;
 see (\ref{orthonormal}), (\ref{lead.orth})),
 fix a number $m\in\{0,1,\ldots,n\}$,
 and consider the corresponding orthonormal polynomials
 $\{\phi_{k,m}\}_{k=0}^{n-m}$,
 with $\lead(\phi_{k,m})>0$,
 associated with
 \[
 X_m\sim f_m(x)=\frac{q^m(x) f(x)}{\E q^{m}(X)},
 \ \ \
 \alpha<x<\omega.
 \]
 Then there exist constants $\nu_k^{(m)}=\nu_k^{(m)}(\mu;q)>0$
 such that
 \be
 \label{orthonormal.derivatives}
 \phi_{k+m}^{(m)}(x)=\nu_k^{(m)} \phi_{k,m}(x),
 \ \ \
 \alpha<x<\omega,
 \ \ \
 %\ \
 %\alpha<x<\omega, \ \
 %m=1,2,\ldots,n,
 k=0,1,\ldots,n-m.
 \ee
 Specifically, the constants $\nu_k^{(m)}$ have the explicit  form
 \be
 \label{orthonormal.constants}
 \nu_k^{(m)}=\nu^{(m)}_k(\mu;q):
 =\left\{
 \frac{\frac{(k+m)!}{k!}\prod_{j=k+m-1}^{k+2m-2}(1-j\delta)}{A_m(\mu;q)}
 \right\}^{1/2},
 \ee
 where $A_m(\mu;q)=\E q^m(X)$ is given by (\ref{Eq^k}).
 In particular, setting $\sigma^2=\Var X$ we have
 \be
 \label{eq.paragwgos}
 \phi_{k+1}'(x)=\frac{\sqrt{(k+1)(1-k\delta)}}{\sigma}
 \phi_{k,1}(x)=\sqrt{\frac{(k+1)(1-\delta)(1-k\delta)}{q(\mu)}}\phi_{k,1}(x),
 \ \
 k=0,1,\ldots,n-1.
 \ee
 %and
 %$P_k$ are the polynomials given by (\ref{Rodrigues})
 %corresponding to $f$ and $P_{k,m}$ are the corresponding
 %Rodrigues polynomials of (\ref{Rodrigues}),
 %associated with the density $f_m(x)=\frac{q^m(x)f(x)}{\E q^m(X)}$,
 %$\alpha<x<\omega$,
 %of the random variable $X_m\sim\mbox{IP}(\mu_m;q_m)$
 %of Theorem \ref{theo.star}, i.e.,
 %\begin{eqnarray}
 %\label{polynomial.orthonormal.star}
 %&& P_{k,m}(x):=\frac{(-1)^k}{f_m(x)}\frac{d^k}{d x^k}[q_m^k(x)
 % f_m(x)]=\frac{(-1)^k}{(1-2m\delta)^k q^m(x)f(x)}\frac{d^k}{d x^k}[q^{k+m}(x)
 % f(x)], \\
 %&&
 %\hspace{50ex}
 %\alpha<x<\omega, \ \ k=0,1,\ldots,n-m.
 %\nonumber
 %\end{eqnarray}
 \end{cor}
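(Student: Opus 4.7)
\medskip

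\noindent\textbf{Proof proposal.} The plan is to deduce the statement directly from Theorem~\ref{theo.derivatives}, which already supplies the analogous identity at the level of the (un-normalized) Rodrigues polynomials,
\[
P_{k+m}^{(m)}(x)=C_k^{(m)}(\delta)\,P_{k,m}(x),\qquad C_k^{(m)}(\delta)=\frac{(k+m)!}{k!}(1-2m\delta)^k\prod_{j=k+m-1}^{k+2m-2}(1-j\delta).
\]
Using the normalization formula (\ref{orthonormal}) applied to the pair $(\mu;q)$ on the left and to the pair $(\mu_m;q_m)$ on the right, I would write $\phi_{k+m}=P_{k+m}/\sqrt{(k+m)!\,c_{k+m}(\delta)\,A_{k+m}(\mu;q)}$ and $\phi_{k,m}=P_{k,m}/\sqrt{k!\,c_k(\delta_m)\,\E q_m^k(X_m)}$. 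Differentiating the first identity $m$ times and inserting Theorem~\ref{theo.derivatives} immediately yields an identity of the form $\phi_{k+m}^{(m)}(x)=\nu_k^{(m)}\phi_{k,m}(x)$ with an explicit (positive) constant $\nu_k^{(m)}$ expressed as a ratio involving $C_k^{(m)}(\delta)$, $c_{k+m}(\delta)$, $c_k(\delta_m)$, $A_{k+m}(\mu;q)$ and $\E q_m^k(X_m)$.

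The remaining step is purely algebraic simplification of this ratio. The two key substitutions are: first, using the definition of $f_m$ in Theorem~\ref{theo.star},
\[
\E q_m^k(X_m)=\int \frac{q^k(x)}{(1-2m\delta)^k}\,\frac{q^m(x)f(x)}{\E q^m(X)}\,\ud x=\frac{A_{k+m}(\mu;q)}{(1-2m\delta)^k\,A_m(\mu;q)},
\]
which cancels the factor $A_{k+m}(\mu;q)$ appearing in both normalizers; second, rewriting
\[
c_k(\delta_m)=\prod_{j=k-1}^{2k-2}\Bigl(1-\tfrac{j\delta}{1-2m\delta}\Bigr)=\frac{\prod_{i=k+2m-1}^{2k+2m-2}(1-i\delta)}{(1-2m\delta)^k},
\]
via the shift $i=j+2m$. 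Plugging both identities back, the surviving ratio $c_k(\delta_m)/c_{k+m}(\delta)$ telescopes against the explicit factors coming from $C_k^{(m)}(\delta)$ and leaves exactly $\prod_{j=k+m-1}^{k+2m-2}(1-j\delta)\big/A_m(\mu;q)$ inside the square root, which is (\ref{orthonormal.constants}). Positivity of $\nu_k^{(m)}$ follows because each factor $1-j\delta>0$ for $j\le 2n-2$ (by the standing assumption $\delta<1/(2n-1)$) and $A_m(\mu;q)>0$.

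For the special case (\ref{eq.paragwgos}) with $m=1$, it suffices to specialize the formula: the product reduces to the single factor $(1-k\delta)$, $A_1(\mu;q)=\E q(X)=\sigma^2$ by the covariance identity (\ref{cov.id2}) applied to $g(x)=x$, and Corollary~\ref{cor.Eq_k} further gives $\sigma^2=q(\mu)/(1-\delta)$, yielding both expressions displayed there.

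The hard part is not conceptual but bookkeeping: one has to track carefully three shifts of the index (from $\delta$ to $\delta_m$, from $P$ to $P_{k,m}$, and from $X$ to $X_m$) and make sure the factor $(1-2m\delta)^k$ arising from the Rodrigues constant $C_k^{(m)}(\delta)$ exactly compensates the same factor appearing inside $c_k(\delta_m)$ and in the denominator of $\E q_m^k(X_m)$. Once the two telescoping identities above are written down, the cancellation is automatic and the final expression is (\ref{orthonormal.constants}).
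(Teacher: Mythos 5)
Your proposal is correct and follows essentially the same route as the paper: both deduce the proportionality $\phi_{k+m}^{(m)}=\nu_k^{(m)}\phi_{k,m}$ from Theorem~\ref{theo.derivatives} and then identify the constant using the normalizations (\ref{orthonormal}), (\ref{orth}), the identity $\E q_m^k(X_m)=A_{k+m}(\mu;q)/\bigl((1-2m\delta)^k A_m(\mu;q)\bigr)$, and the index-shift formula for $c_k(\delta_m)$, with the $m=1$ case handled via $A_1=\sigma^2=q(\mu)/(1-\delta)$ exactly as you indicate. The only cosmetic difference is that the paper extracts $\nu_k^{(m)}$ as a ratio of leading coefficients rather than as $C_k^{(m)}(\delta)$ times a ratio of $L^2$-norms, which is the same computation in different bookkeeping.
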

 \begin{proof}
 Observe that
 \[
 \phi_{k+m}(x)=\frac{P_{k+m}(x)}{\sqrt{\E |P_{k+m}(X)|^2}}
 \ \
 \mbox{and}
 \ \
 \phi_{k,m}(x)=\frac{P_{k,m}(x)}
 {\sqrt{\E|P_{k,m}(X_m)|^2}},
 \ \
 \alpha<x<\omega,
 \]
 where $P_{k+m}$ and $P_{k,m}$ are as in Theorem
 \ref{theo.derivatives}.
 %Moreover, from Theorem \ref{theo.derivatives} we have that
 Since
 \[
 P_{k+m}^{(m)}(x)=C_k^{(m)}(\delta)P_{k,m}(x),
 \ \ \
 \alpha<x<\omega,
 \]
 we conclude that
 %Thus,
 there exists a constant $\nu_k^{(m)}$ such that
 $\phi_{k+m}^{(m)}(x)=\nu_k^{(m)}\phi_{k,m}(x)$. Hence,
 \[
 \begin{split}
 \nu_k^{(m)}&=\frac{\lead(\phi_{k+m}^{(m)})}{\lead(\phi_{k,m})}
             =\frac{\frac{(k+m)!}{k!}\lead(\phi_{k+m})}{\lead(\phi_{k,m})}
             =\frac{\frac{(k+m)!}{k!}\frac{\lead(P_{k+m})}
             {\sqrt{\E |P_{k+m}(X)|^2}}}{\frac{\lead(P_{k,m})}
             {\sqrt{\E|P_{k,m}(X_m)|^2}}}
             \\
            &=\frac{(k+m)!\ \lead(P_{k+m})\sqrt{\E|P_{k,m}(X_m)|^2}}{k!\
            \lead(P_{k,m})\sqrt{\E|P_{k+m}(X)|^2}}
             =\frac{(k+m)!\ c_{k+m}(\delta) \sqrt{\E|P_{k,m}(X_m)|^2}}{k!\
             c_{k}(\delta_m)\sqrt{\E|P_{k+m}(X)|^2}},
 \end{split}
 \]
 where, by (\ref{lead}),
 $c_{k+m}(\delta)=\prod_{j=k+m-1}^{2k+2m-2}(1-j\delta)$ and
 \[
 \begin{split}
 c_{k}(\delta_m)&=\prod_{j=k-1}^{2k-2}(1-j\delta_m)
 =\prod_{j=k-1}^{2k-2}(1-j\frac{\delta}{1-2m\delta})\\
 &=\frac{\prod_{j=k-1}^{2k-2}(1-(2m+j)\delta)}{(1-2m\delta)^k}
 =\frac{\prod_{j=k+2m-1}^{2k+2m-2}(1-j\delta)}{(1-2m\delta)^k}.
 \end{split}
 \]
 From (\ref{orth}) we see that
 $\E|P_{k+m}(X)|^2=(k+m)!c_{k+m}(\delta)\E q^{k+m}(X)$
 and
 \[
 \E|P_{k,m}(X_m)|^2=k!c_{k}(\delta_m)
 \E q_m^{k}(X_m)=k!c_{k}(\delta_m)\frac{\E q_m^{k}(X)q^m(X)}{\E q^m(X)}
 =\frac{k!c_{k}(\delta_m)\E q^{k+m}(X)}{(1-2m\delta)^k\E q^m(X)}.
 \]
 Combining the preceding relations we obtain
 \[
 \begin{split}
 \nu_k^{(m)}&=\frac{(k+m)!\ c_{k+m}(\delta)
 \sqrt{\E|P_{k,m}(X_m)|^2}}{k!\ c_{k}(\delta_m)\sqrt{\E|P_{k+m}(X)|^2}}
             =\frac{(k+m)!\ c_{k+m}(\delta)
             \sqrt{\frac{k!c_{k}(\delta_m)\E q^{k+m}(X)}{(1-2m\delta)^k
             \E q^m(X)}}}{k!\ c_{k}(\delta_m)\sqrt{(k+m)!c_{k+m}(\delta)
             \E q^{k+m}(X)}}
             \\
            &=\frac{(k+m)!\ c_{k+m}(\delta) \sqrt{k!c_{k}(\delta_m)
            \E q^{k+m}(X)}}{k!\ c_{k}(\delta_m)\sqrt{(k+m)!c_{k+m}(\delta)
            \E q^{k+m}(X)(1-2m\delta)^k\E q^m(X)}}
            \\
            &=\frac{\sqrt{(k+m)!c_{k+m}(\delta)}}{\sqrt{k!c_{k}(\delta_m)(1-2m\delta)^k
            \E q^m(X)}}=\sqrt{\frac{(k+m)!}{k!\E q^m(X)}} \sqrt{\frac{c_{k+m}(\delta)}
            {c_{k}(\delta_m)(1-2m\delta)^k}}
            \\
            &=\sqrt{\frac{(k+m)!}{k!\E q^m(X)}}
            \sqrt{\frac{\prod_{j=k+m-1}^{2k+2m-2}(1-j\delta)}
            {\frac{\prod_{j=k+2m-1}^{2k+2m-2}(1-j\delta)}{(1-2m\delta)^k}(1-2m\delta)^k}}
            =\sqrt{\frac{(k+m)!}{k!\E q^m(X)}\prod_{j=k+m-1}^{k+2m-2}(1-j\delta)},
 \end{split}
 \]
 and the proof is complete.
 \end{proof}

 {\sc Acknowledgements.} We would like to thank
 M.C.\ Jones and H.\ Papageorgiou for a number of
 suggestions and helpful comments that improved the presentation.

 \begin{appendix}

 \end{appendix}
\end{document}